\def\ie{i.e.}
\def\eg{e.g.}
\def\st{\textrm{s.t.}}
\def\and{\textrm{and}}
\def\ri{\textrm{relint}}
\def\ex{\textrm{ext}}
\def\conv{\textrm{conv}}
\def\aff{\textrm{aff}}
\def\dim{\textrm{dim}}
\def\rank{\textrm{rank}}
\def\spann{\textrm{span}}
\def\0{\textbf{0}}
\def\1{\textbf{1}}
\def\a{\boldsymbol{a}}
\def\c{\boldsymbol{c}}
\def\v{\boldsymbol{v}}
\def\v{\boldsymbol{v}}
\def\w{\boldsymbol{w}}
\def\x{\boldsymbol{x}}
\def\y{\boldsymbol{y}}
\def\z{\boldsymbol{z}}
\def\II{\mathcal{I}}
\def\A{\mathcal{A}}
\def\C{\mathcal{C}}
\def\E{\mathcal{E}}
\def\F{\mathcal{F}}
\def\L{\mathcal{L}}
\def\N{\mathcal{N}}
\def\O{\mathcal{O}}
\def\P{\mathcal{P}}
\def\Q{\mathcal{Q}}
\def\T{\mathcal{T}}
\def\S{\mathcal{S}}
\def\X{\mathcal{X}}
\newcommand{\RR}{I\!\!R} 
\newcommand{\myparagraph}[1]{\noindent\textbf{#1.}}
\newtheorem{theorem}{Theorem}[section]
\newtheorem{problem}{Problem}
\newtheorem{lemma}{Lemma}[section]
\newtheorem{definition}{Definition}
\newtheorem{corollary}{Corollary}[section]
\begin{document}
%
\title{On Geometric Analysis of \\ Affine Sparse Subspace Clustering}

%
%
%

\author{Chun-Guang~Li$^\dag$,~Chong~You$^\dag$,~and~Ren\'{e}~Vidal
\thanks{Chun-Guang Li is with the School of Information and Communication Engineering, Beijing University of Posts and Telecommunications, Beijing 100876, P.R. China. E-mail: lichunguang@bupt.edu.cn.} 
\thanks{Chong~You and Ren\'{e}~Vidal are with the Center for Imaging Science, Johns Hopkins University, Baltimore, MD 21218, USA. E-mail: \{cyou; rvidal\}@cis.jhu.edu. } 
\thanks{$^\dag$The two first authors contributed equally to this work.} 
}

%
%

%

\maketitle

\begin{abstract}

Sparse subspace clustering (SSC) is a state-of-the-art method for segmenting a set of data points drawn from a union of subspaces into their respective subspaces. It is now well understood that SSC produces subspace-preserving data affinity under broad geometric conditions but suffers from a connectivity issue. In this paper, we develop a novel geometric analysis for a variant of SSC, named \emph{affine} SSC (ASSC), for the problem of clustering data from a union of \emph{affine} subspaces. Our contributions include a new concept called affine independence for capturing the arrangement of a collection of affine subspaces. Under the affine independence assumption, we show that ASSC is guaranteed to produce subspace-preserving affinity. Moreover, inspired by the phenomenon that the $\ell_1$ regularization no longer induces sparsity when the solution is nonnegative, we further show that subspace-preserving recovery can be achieved under much weaker conditions for all data points other than the extreme points of samples from each subspace. In addition, we confirm a curious observation that the affinity produced by ASSC may be subspace-dense---which could guarantee the subspace-preserving affinity of ASSC to produce correct clustering under rather weak conditions. We validate the theoretical findings on carefully designed synthetic data and evaluate the performance of ASSC on several real data sets.

\end{abstract}

\begin{IEEEkeywords}
Affine subspace clustering, affine sparse subspace clustering, subspace-preserving property, nonnegative solution, subspace-dense solution
\end{IEEEkeywords}

%
\IEEEpeerreviewmaketitle


\section{Introduction}
\label{Intro}

\IEEEPARstart{I}{n} many applications that involve processing images, videos and text, high-dimensional data can be well approximated by a union of low-dimensional subspaces. \textit{Subspace clustering} is the problem of recovering the underlying low-dimensional subspaces and assigning each data point to the subspace to which it belongs \cite{Vidal:SPM11-SC}. It has found many important applications in, \eg, motion segmentation \cite{Costeira:IJCV98, Rao:PAMI10}, hybrid system identification \cite{Bako:Automatica11}, matrix completion \cite{Li:TSP16}, and genes expression profiles clustering \cite{McWilliams:DMKD14}.

Over the past decade, there has been a surge of research interests in subspace clustering and numerous algorithms have been proposed, \eg, sparse subspace clustering (SSC) \cite{Elhamifar:CVPR09, Elhamifar:TPAMI13,Dyer:JMLR13,You:CVPR16-SSCOMP}, low rank subspace clustering \cite{Liu:ICML10,Liu:TPAMI13,Favaro:CVPR11,Vidal:PRL14}, and so on \cite{Lu:ECCV12, Wang:NIPS13-LRR+SSC, Patel:ICCV13, Park:NIPS14, Li:CVPR15, Guo:IJCAI15, Wang:AISTAT16, You:CVPR16-EnSC, Li:TIP17, Xin:TSP18, Li:ICPR18}. Among them, SSC is known to enjoy both broader theoretical guarantees \cite{Elhamifar:CVPR09,Elhamifar:ICASSP10, Elhamifar:TPAMI13, Soltanolkotabi:AS12,Wang-Xu:ICML13,Soltanolkotabi:AS14,Wang:JMLR16,You:ICML15, Wang:ICML15, Tsakiris:ICML18} and superior experimental performance \cite{Elhamifar:CVPR09,Elhamifar:TPAMI13,You:CVPR16-SSCOMP, Li:CVPR15, Li:TIP17}.
Given a data matrix $X= [\x_1, \cdots,\x_N]\in\RR^{D\times N}$, SSC expresses each data point $\x_j$ as a sparse linear combination of all the other data points by solving the following optimization problem
\begin{align}
\label{eq:linear-ssc}
\begin{split}
\min_{\c_j} \| \c_j \|_1 ~~ \st  ~~~ \x_j =X \c_j, ~~~ c_{jj}=0,
\end{split}
\end{align}
where $\c_j$ is the $j$-th column of the coefficients matrix $C = [\c_1, \cdots,\c_N]\in\RR^{N\times N}$. This problem is motivated by the fact that each data point $\x_j$ in a subspace of dimension $d_\ell$ can be expressed as a linear combination of $d_\ell$ other data points in the same subspace. Therefore, it is reasonable to expect that the sparsest representation $\c_j$  selects only data points from the subspace to which $\x_j$ belongs, \ie, $c_{ij} \neq 0$ only when points $\x_i$ and $\x_j$ are in the same subspace---this is referred to as the \textit{subspace-preserving property}~\cite{Soltanolkotabi:AS12,Vidal:Springer16}.
One can then define a data affinity matrix whose $i,j$-th entry is set to $|c_{ij}|+|c_{ji}|$, and the segmentation of $X$ is obtained by applying spectral clustering \cite{vonLuxburg:StatComp2007} to this affinity.

Following the initial work \cite{Elhamifar:CVPR09}, the correctness of SSC has been well-studied in the past few years.
Specifically, it has been established that SSC is guaranteed to yield subspace-preserving solutions when subspaces are independent~\cite{Elhamifar:CVPR09,Elhamifar:TPAMI13}, disjoint~\cite{Elhamifar:ICASSP10}, or even partially overlapping~\cite{Soltanolkotabi:AS12}.
Moreover, SSC has been extended to dealing with datasets that are corrupted with outliers~\cite{Soltanolkotabi:AS12,You:CVPR17}, contaminated with noise~\cite{Wang-Xu:ICML13,Soltanolkotabi:AS14,Wang:JMLR16} or missing entries~\cite{Tsakiris:ICML18}, and preprocessed with dimension reduction techniques~\cite{Wang:ICML15}.

\myparagraph{Affine Subspace Clustering}
In many important applications of subspace clustering, the underlying subspaces do not pass through the origin, \ie, the subspaces are \emph{affine}.
In the motion segmentation problem in computer vision, for example, the feature point trajectories associated with a single rigid motion lie in an affine subspace \cite{Tomasi:IJCV92} of dimension $2$ or $3$, hence the trajectories of multiple rigid motions lie in a union of multiple \textit{affine} subspaces.
This motivates the problem of clustering affine subspaces, which can be
formally stated as follows.
\begin{problem}[\bf Affine subspace clustering]
\label{pro:problem-asc}
Let $X \in \RR^{D \times N}$ be a matrix whose columns are drawn from a union of $n$ affine subspaces of $\RR^D$, $\bigcup_{\ell=1}^n \{\mathcal {A}_\ell\}$, of dimensions $\{d_\ell < D\}_{\ell=1,\dots,n}$. The goal of affine subspace clustering is to segment the columns of $X$ into their corresponding affine subspaces.
\end{problem}
Three exemplar cases of affine subspace arrangement are illustrated in Fig.~\ref{fig:illustration-affine-subspaces}.
Note that unlike the case of linear subspaces in which all subspaces intersect at the origin, a union of affine subspaces may or may not intersect.

\myparagraph{Affine Sparse Subspace Clustering} To address the problem of affine subspace clustering, SSC is extended by
adding an affine constraint $\1^\top \c_j =1$ to~\eqref{eq:linear-ssc}, where $\1$ is the vector of all ones with appropriate dimension \cite{Elhamifar:CVPR09}. This leads to the following optimization problem:
\begin{align}
\label{eq:affine-ssc}
\min_{\c_j} \| \c_j \|_1 ~~ \st  ~~~ \x_j =X \c_j, ~~~ c_{jj}=0, ~~~ \1^\top \c_j =1.
\end{align}
In other words, \eqref{eq:affine-ssc} attempts to express each data point $\x_j$ as a sparse \emph{affine} combination of other data points. As in the case of SSC, spectral clustering is then applied to the affinity $|c_{ij}| + |c_{ji}|$. We refer to this approach as \textit{affine SSC} (ASSC). 

\begin{figure}[tb] 
\vspace{-2mm}
\centering
\subfigure[]{\includegraphics[clip=true,trim=0 0 0 0,width=0.325\columnwidth]{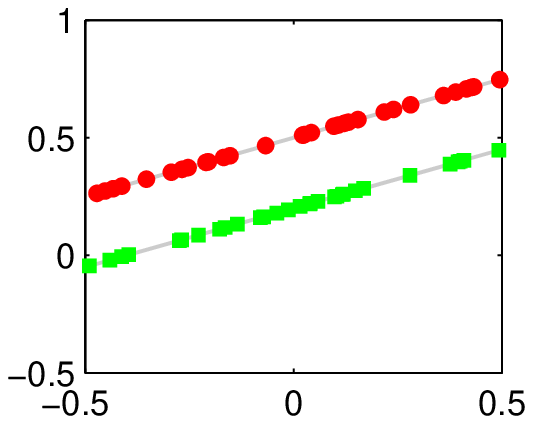}}
\subfigure[]{\includegraphics[clip=true,trim=0 0 0 0,width=0.325\columnwidth]{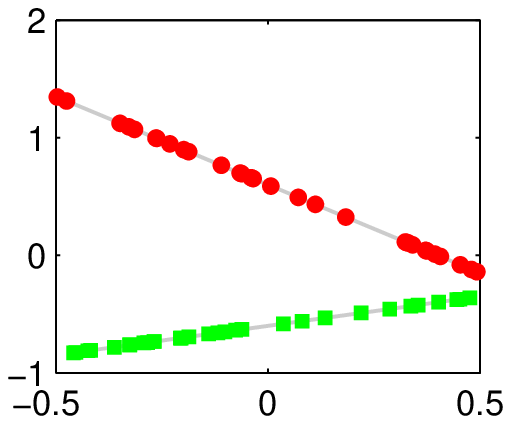}}
\subfigure[]{\includegraphics[clip=true,trim=0 0 0 0,width=0.325\columnwidth]{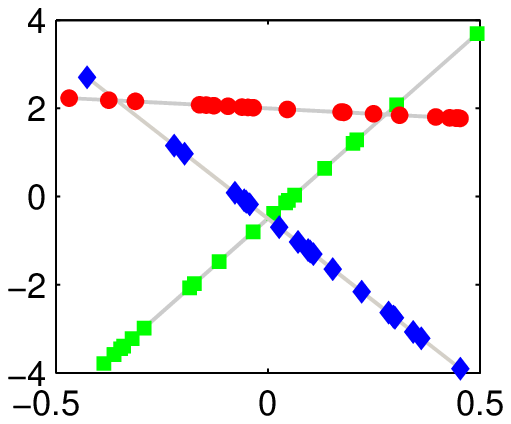}}
\caption{Illustration for data points lying in a union of affine subspaces.}
\label{fig:illustration-affine-subspaces}
\vspace{-3mm}
\end{figure}

Previous studies have shown that ASSC achieves remarkable performance in real applications \cite{Elhamifar:CVPR09, Elhamifar:TPAMI13}.
However, theoretical conditions for its correctness have rarely been considered in the literature.\footnote{In~\cite{Tsakiris:AffinePAMI17}, the theoretical results are established in the case of affine subspace for algebraic subspace clustering~\cite{Vidal:PAMI05}.}
One exception is the work \cite{Elhamifar:CVPR09}, which provides a condition under which the solution to \eqref{eq:affine-ssc} is guaranteed to be subspace-preserving.
However, this condition is characterized by the arrangement of linear subspaces that are spans of the homogeneous embedding of the original affine subspaces, making it hard to interpret.
Moreover, this result from \cite{Elhamifar:CVPR09} does not take into account the distribution of data points in each of the subspaces, making it potentially too weak.

\myparagraph{Paper Contributions}
In this work, we aim at establishing theoretical conditions for the correctness of ASSC.
Our work makes the following contributions.

\begin{itemize}[leftmargin=*]
\item \emph{Connections between SSC and ASSC.} We show that applying ASSC to a data matrix is equivalent to applying SSC to the same data matrix but in the homogeneous coordinates.
By this result, we derive correctness conditions for ASSC from existing conditions for SSC.
We argue that these conditions are expressed in the homogeneous coordinates and do not have clear geometric interpretations in the original data space.

\item \emph{Affinely independent affine subspaces.} Independence of linear subspaces is a fundamental concept in linear algebra and it is also an important assumption in the analysis of many existing 
    subspace clustering techniques \cite{Elhamifar:CVPR09,Liu:ICML10,Lu:ECCV12}.
We are not aware of an analogous concept of independence for affine subspaces in the existing literature.
Therefore, we introduce a novel concept of \emph{affine independence} for affine subspaces, and present a detailed study of its properties and geometric interpretations.
In particular, we show that ASSC produces subspace-preserving solutions if the collection of affine subspaces is affinely independent.

\item \emph{Tight conditions for interior points.} We further show that for interior points of the convex hull of data points from each subspace, there is a tight (\ie, equivalent) condition for the solution to \eqref{eq:affine-ssc} to be subspace-preserving.
This condition has weaker requirement than that the affine subspaces are affinely independent.
The analysis is based on a curious observation that the $\ell_1$ sparsity inducing regularization in ASSC becomes ineffective under the affine constraint if the optimal solution is nonnegative.\footnote{Such a phenomenon has recently been discussed in \cite{Kyrillidis:ICML13,Li:16-Simplex} for general sparse estimation problem.}

\item \emph{Provable correct clustering.} Most of the existing analysis for linear subspace clustering methods provides guarantees that the solution to \eqref{eq:linear-ssc} is subspace-preserving. However, this does not imply that the final clustering result 
    is correct since points from the same group may not be connected in the affinity graph, causing 
    an oversegmentation. Returning to ASSC, we show that for interior points of each subspace, there always exist solutions to \eqref{eq:affine-ssc} that is not only subspace-preserving but also \emph{dense}. This allows us to prove that under certain conditions, there exist solutions to \eqref{eq:affine-ssc} that 
produce correct clustering.

\item We illustrate the theoretical findings on carefully designed synthetic datasets. Moreover, we also evaluate the performance of ASSC on several real datasets and show that ASSC has better performance than SSC.
\end{itemize}

\myparagraph{Paper Outline} The remainder of this paper is organized as follows. Section~\ref{sec:affine-geometry} gives some preliminary on affine geometry. Section~\ref{sec:sufficient-condition-for-ASD} presents an analysis of the ASSC problem based on homogeneous coordinates. 
Section~\ref{sec:novel-analysis-AASC} proposes a novel geometry analysis for the ASSC problem. Section~\ref{sec:experiments} shows numerical experiments and Section~\ref{sec:conclusion} concludes the paper. 

\section{Preliminary on Affine Geometry}
\label{sec:affine-geometry}

We review some basic definitions in affine geometry.

A point $\x \in \RR^D$ is an affine combination of points $\{\x_j \in \RR^D\}_{j=1}^m$ if $\x = \sum_{j=1}^m c_j \x_j$ and $\sum_{j=1}^m c_j = 1$. 

A nonempty set $\A \subseteq \RR^D$ is an affine subspace if every affine combination of points in $\A$ lies in $\A$.
Equivalently, an affine subspace is a nonempty subset $\A \subseteq \RR^D$ of the form $\A = \x_0 + \S := \{\x_0 + \x, \x \in \S\}$, where $\S \subseteq \RR^D$ is a linear subspace and $\x_0 \in \RR^D$ is a point.
In particular, the linear subspace $\S$ is uniquely determined by $\A$ and is called the direction subspace of $\A$, denoted as $\T(\A)$.

A set of points $\{\x_j \in \RR^D\}_{j=1}^m$ is called affinely independent if $\sum_{j=1}^m c_j \x_j = \0$ and $\sum_{j=1}^m c_j = 0$ imply $c_j = 0$ for all $j \in \{1, \cdots, m\}$.

The affine hull of a data set $\X \in \RR^D$, denoted as $\aff(\X)$, is defined as the smallest affine subspace containing $\X$. Equivalently, the affine hull of $\X \in \RR^D$ is the set of all affine combinations of points in $\X$.

A set of data points $\{\x_j\}_{j=1}^m$ is said to affinely span an affine subspace $\A$ if $\aff(\{\x_j\}_{j=1}^m) = \A$. 
An affine basis of an affine subspace $\A$ is a set of affinely independent elements from $\A$ that affinely spans $\A$.

The dimension $\dim(\A)$ of an affine subspace $\A$ is defined 
by its direction subspace as $\dim(\T(\A))$. The number of points in every affine basis of an affine subspace $\A$ is $\dim(\A) + 1$.

We now return to Problem \ref{pro:problem-asc} and explain why ASSC in \eqref{eq:affine-ssc} can be used for solving it. 
Assume that for each $\ell = 1, \cdots, n$, the data matrix $X$ contains $N_\ell \gg d_\ell$ points in $\A_\ell$ and any $N_{\ell}-1$ points affinely span $\A_\ell$.
Now, consider any $\x_j\in\A_\ell$, there exist coefficients $c_{ij}$ such that $\x_j = \sum_{i\neq j: \x_i \in \A_\ell} c_{ij} \x_i$ and $ \sum_{i\neq j: \x_i \in \A_\ell} c_{ij}  = 1$.
We also set $c_{ij} = 0$ for all $i$ such that $\x_i \not\in \A_{\ell}$.
Then, the coefficients $c_{ij}$ satisfy the constraint in \eqref{eq:affine-ssc}, i.e. it has $\x_j = \sum_{i\neq j} c_{ij} \x_i$ and $ \sum_{i\neq j} c_{ij}  = 1$.

In fact, $\x_j \in \A_\ell$ can be expressed as an affine combination of at most $d_\ell + 1$ other points from its own subspace. If $d_\ell \ll N$ we have that the representation of $\x_j$ is sparse.
The primary idea of the ASSC formulation in \eqref{eq:affine-ssc} is to find such sparse representations which are also subspace-preserving.

\section{Analysis of ASSC Based on Analysis of SSC in Homogeneous Coordinates}
\label{sec:sufficient-condition-for-ASD}

In this section, we provide an analysis of ASSC based on the fact that ASSC is equivalent to SSC in homogeneous coordinates.
We first review the correctness conditions for SSC in the case of linear subspaces, and then derive correctness conditions for ASSC based on these results.

\subsection{Review of Correctness Conditions for SSC}  
\label{sec:sufficient-condition-in-Sol}

In this review subsection we assume that the columns of the data matrix $X$ lie in a union of \emph{linear} subspaces $\{ \S_\ell \}_{\ell=1}^n$ of dimensions $\{d_\ell\}_{\ell=1}^n$.
We will review conditions under which the optimal solution to problem \eqref{eq:linear-ssc} is subspace-preserving.

In \cite{Elhamifar:CVPR09}, it is shown that SSC produces subspace-preserving solution under the independent subspace assumption.
\begin{definition}[\bf Independent subspaces]
	A collection of linear subspaces $\{\S_\ell\}_{\ell=1}^n$ is said to be (linearly) independent if $\dim(\sum_{\ell=1}^n \S_\ell) = \sum_{\ell=1}^n \dim(\S_\ell)$.
\label{def:linear-independence-subspaces}
\end{definition}
Assume that $\x_j \in \S_\ell$. Let $X_{-j}^{(\ell)}$ be the submatrix of $X$ containing columns from subspace $\S_\ell$ but excluding $\x_j$.
\begin{theorem}[\cite{Elhamifar:CVPR09}]
	If the collection of subspaces $\{\S_\ell\}_{\ell=1}^n$ is independent and $\rank(X_{-j}^{(\ell)}) = d_\ell$, then every optimal solution $\c_j$ to \eqref{eq:linear-ssc} is subspace-preserving.
	\label{theorem:independent-linear}
\end{theorem}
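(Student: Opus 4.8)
The plan is to exploit the characterization of independence as a direct-sum condition and then combine a coordinate-splitting argument with the decomposability of the $\ell_1$ norm over disjoint supports. First I would observe that $\dim(\sum_{\ell=1}^n \S_\ell) = \sum_{\ell=1}^n \dim(\S_\ell)$ is equivalent to the sum being direct, i.e. $\sum_{\ell=1}^n \S_\ell = \bigoplus_{\ell=1}^n \S_\ell$, so that every vector in the sum admits a \emph{unique} decomposition into components lying in the individual $\S_\ell$. In particular, this yields $\S_k \cap \sum_{\ell \neq k} \S_\ell = \{\0\}$ for each $k$, which is the leverage the whole argument rests on.

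Next, let $\x_j \in \S_{\ell_0}$ and take any feasible $\c_j$ for \eqref{eq:linear-ssc}. I would partition the coordinates of $\c_j$ according to the subspace membership of the corresponding columns of $X$, writing $\c_j = \sum_{\ell=1}^n \c_j^{(\ell)}$, where $\c_j^{(\ell)}$ retains only the entries indexed by points of $\S_\ell$ and is zero elsewhere. Then $X\c_j^{(\ell)} \in \S_\ell$, and the constraint $\x_j = X\c_j$ reads $\x_j = \sum_{\ell} X\c_j^{(\ell)}$. Since $\x_j \in \S_{\ell_0}$, its direct-sum decomposition has component $\x_j$ in $\S_{\ell_0}$ and $\0$ in every other $\S_\ell$; by uniqueness this forces $X\c_j^{(\ell)} = \0$ for all $\ell \neq \ell_0$ and $X\c_j^{(\ell_0)} = \x_j$.

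The remaining step is to compare $\ell_1$ norms. I would define $\hat{\c}_j := \c_j^{(\ell_0)}$, which is subspace-preserving by construction and still feasible: $X\hat{\c}_j = \x_j$ by the previous step, and $\hat c_{jj} = 0$ since zeroing coordinates cannot violate $c_{jj} = 0$. Because the $\ell_1$ norm splits over disjoint supports, $\|\c_j\|_1 = \sum_\ell \|\c_j^{(\ell)}\|_1 = \|\hat{\c}_j\|_1 + \sum_{\ell \neq \ell_0} \|\c_j^{(\ell)}\|_1$, so $\|\hat{\c}_j\|_1 \le \|\c_j\|_1$ with \emph{strict} inequality whenever some $\c_j^{(\ell)} \neq \0$ for $\ell \neq \ell_0$. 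Hence if an optimal $\c_j$ had any nonzero entry outside $\S_{\ell_0}$, then $\hat{\c}_j$ would be a strictly better feasible point, a contradiction; therefore every optimal solution is subspace-preserving.

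The role of the rank hypothesis $\rank(X_{-j}^{(\ell_0)}) = d_{\ell_0}$, which I expect to be the one genuinely delicate point to pin down, is to guarantee that the feasible set actually contains subspace-preserving vectors in the first place: it says the columns of $X$ from $\S_{\ell_0}$ other than $\x_j$ already span $\S_{\ell_0} \ni \x_j$, so a feasible $\c_j$ supported on $\S_{\ell_0}$ exists and the problem is well posed. I would also verify carefully that the coordinate-splitting preserves the constraint $c_{jj} = 0$ and that the norm decomposition is exact, which it is since the supports of the $\c_j^{(\ell)}$ are disjoint. Everything else is routine once the direct-sum uniqueness is established.
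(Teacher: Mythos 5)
Your proof is correct, and it is the standard argument for this result; the paper itself states Theorem~\ref{theorem:independent-linear} as a citation to \cite{Elhamifar:CVPR09} without reproducing a proof. Your strategy---split $\c_j$ by subspace membership, use the direct-sum uniqueness to force $X\c_j^{(\ell)}=\0$ for $\ell\neq\ell_0$, and then zero out the off-subspace entries to get a strictly better feasible point---is exactly the template the paper follows in its own proof of the affine analogue (Theorem~\ref{theorem:affine-independent-sufficient}), except that there the affine constraint $\1^\top\c_j=1$ breaks the clean direct-sum step and forces a two-case analysis on whether the off-subspace coefficients sum to zero. You also correctly identify the role of the rank hypothesis as guaranteeing feasibility rather than entering the optimality argument.
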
	

Theorem~\ref{theorem:independent-linear} shows that \eqref{eq:linear-ssc} produces subspace-preserving solutions under the independent subspace assumption, 
irrespective of the distribution of points in the subspace (except that $\rank(X_{-j}^{(\ell)}) = d_\ell$).
The next results from \cite{Soltanolkotabi:AS12} shows that even if the subspaces have nontrivial intersection, \eqref{eq:linear-ssc} still produces subspace-preserving solutions under certain 
separation conditions.

Let $U^{(\ell)} \in \RR^{D \times d_\ell}$ be an orthonormal basis for linear subspace $\S_\ell$. Let
$\a_j = U^{(\ell)\top} \x_j$ and $A_{-j}^{(\ell)} = U^{(\ell)\top} X_{-j}^{(\ell)}$.
Consider the following optimization problem:
\begin{align}
\label{eq:fictitious-primal}
\begin{split}
\min_{\c} ~\| \c \|_1 ~~\st  ~~~ A_{-j}^{(\ell)} \c  = \a_j.
\end{split}
\end{align}
The Lagrangian dual of problem \eqref{eq:fictitious-primal} is given by
\begin{align}
\label{eq:fictitious-dual}
\begin{split}
\max_{\w} ~~ \w^\top \a_j ~~~\st  ~~~ \| A_{-j}^{(\ell)\top} \w \|_\infty \le 1.
\end{split}
\end{align}
Let $\w_j^\ast$ be an optimal solution to problem~\eqref{eq:fictitious-dual} with minimal Euclidean norm and let $\v_j^\ast = U^{(\ell)} \w_j^\ast \in \S_{\ell}$ be the \textit{dual point} \cite{Soltanolkotabi:AS12} to $\x_j$.

We first present an important lemma which is modified from~\cite[Lemma 7.1]{Soltanolkotabi:AS12}.
\begin{lemma}(\cite{Soltanolkotabi:AS12})
\label{thm:sufficient-condition-linearSSC}
Let $\v_j^\ast$ be the dual point to $\x_j \in \S_\ell$ and $X^{(\kappa)}$ be a matrix whose columns are the data points from subspace $\S_\kappa$. Then, any optimal solution to \eqref{eq:linear-ssc} is subspace-preserving if
\begin{equation}
\label{eq:sufficient-condition-linearSSC}
\|X^{(\kappa)\top} \v_j^\ast\|_\infty < 1, ~~\forall~~ \kappa \ne \ell.
\end{equation}
\end{lemma}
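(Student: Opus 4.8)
The plan is to establish the result by a \emph{dual certificate} argument: I will exhibit a single vector that is simultaneously an optimal dual variable for the full SSC problem \eqref{eq:linear-ssc} and is \emph{strictly} feasible on every coordinate indexing an out-of-subspace point, and then invoke complementary slackness to force those coordinates to vanish in every primal optimizer. The natural candidate for this certificate is precisely the dual point $\v_j^\ast \in \S_\ell$, and the hypothesis \eqref{eq:sufficient-condition-linearSSC} is exactly the strict feasibility that drives the argument.

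First I would produce a subspace-preserving feasible point from the fictitious problem. Let $\c^\ast$ be an optimizer of \eqref{eq:fictitious-primal}, so $A_{-j}^{(\ell)} \c^\ast = \a_j$. Because $U^{(\ell)\top}$ restricts to an isometry on $\S_\ell$ and both $X_{-j}^{(\ell)} \c^\ast$ and $\x_j$ lie in $\S_\ell$, this equality lifts to $X_{-j}^{(\ell)} \c^\ast = \x_j$ in the ambient space. Padding $\c^\ast$ with zeros on the coordinates of points outside $\S_\ell$ yields a vector $\hat{\c}_j$ that is feasible for \eqref{eq:linear-ssc}, subspace-preserving by construction, and satisfies $\|\hat{\c}_j\|_1 = \|\c^\ast\|_1$.

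Next I would verify that $\v_j^\ast$ is feasible for the Lagrangian dual of the full problem, namely $\max_\w \w^\top \x_j$ subject to $\|X_{-j}^\top \w\|_\infty \le 1$. For a point $\x_i \in \S_\ell$ one has $\x_i^\top \v_j^\ast = (A_{-j}^{(\ell)\top} \w_j^\ast)_i$ (using $A_{-j}^{(\ell)\top} = X_{-j}^{(\ell)\top} U^{(\ell)}$ and $\v_j^\ast = U^{(\ell)} \w_j^\ast$), which is bounded by $1$ in magnitude since $\w_j^\ast$ is feasible for \eqref{eq:fictitious-dual}; for a point $\x_i \in \S_\kappa$ with $\kappa \ne \ell$, the hypothesis $\|X^{(\kappa)\top}\v_j^\ast\|_\infty < 1$ gives $|\x_i^\top \v_j^\ast| < 1$ strictly. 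Hence $\v_j^\ast$ is dual-feasible. Its objective value is $\v_j^{\ast\top}\x_j = \w_j^{\ast\top} \a_j$ (using $\x_j = U^{(\ell)}\a_j$), which by strong duality for the fictitious primal--dual pair equals $\|\c^\ast\|_1 = \|\hat{\c}_j\|_1$. Weak duality for the full problem then pins $\hat{\c}_j$ as a primal optimizer and $\v_j^\ast$ as a dual optimizer with matching objective values.

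The final and most delicate step is to upgrade this from the \emph{existence} of one subspace-preserving optimizer to the claim that \emph{every} optimizer is subspace-preserving. Here I would appeal to the subgradient optimality (complementary slackness) condition for $\ell_1$ minimization: if $\tilde{\c}_j$ is any optimizer of \eqref{eq:linear-ssc}, then the optimal dual $\v_j^\ast$ must satisfy $(X_{-j}^\top \v_j^\ast)_i = \sign(\tilde{c}_{ij})$ wherever $\tilde{c}_{ij} \ne 0$. Since every out-of-subspace coordinate has $|\x_i^\top \v_j^\ast| < 1$ strictly, this sign condition cannot hold there, forcing $\tilde{c}_{ij} = 0$ for all $\x_i \notin \S_\ell$; that is, $\tilde{\c}_j$ is subspace-preserving. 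The main obstacle is exactly this inference: it requires that the strict inequality on out-of-subspace points be combined correctly with the optimality conditions, and one must ensure that $\v_j^\ast$ (defined only through the fictitious dual) is a genuine optimal dual of the \emph{full} problem rather than merely feasible---a gap that the objective-matching of the previous step is designed to close.
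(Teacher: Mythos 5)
Your proof is correct: the dual-certificate construction, the objective matching via strong duality of the fictitious primal--dual pair, and the complementary-slackness step forcing out-of-subspace coordinates to vanish are all sound, and together they establish that \emph{every} optimizer is subspace-preserving. The paper itself gives no proof of this lemma (it is imported from \cite{Soltanolkotabi:AS12}), and your argument is essentially the same duality argument used there, so no further comparison is needed.
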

The condition in Lemma~\ref{thm:sufficient-condition-linearSSC} is not particularly insightful in terms of its geometric interpretation. Under the additional assumption that the data points in $X$ are all normalized to have unit $\ell_2$ norm, one can further have an upper bound on the norm of $\v_j^\ast$ in terms of inradius and continue to derive a condition with clearer geometric interpretation.
\begin{definition} [\bf Relative inradius]
	\label{def:inradius}
	The relative inradius of a convex body $\P$, denoted by $r(\P)$, is defined as the radius of the largest Euclidean ball in the space $\spann(\P)$ that is inscribed in $\P$.
\end{definition}
Let $\X_{-j}^{(\ell)}$ be the set containing all columns of $X_{-j}^{(\ell)}$, and let $\P_{-j}^\ell := \conv(\pm \X_{-j}^{(\ell)})$ be the symmetric convex hull of $\X_{-j}^{(\ell)}$. It is shown in \cite{Soltanolkotabi:AS12} that $\| \v_j^\ast \|_2 \le {1 \over r(\P_{-j}^\ell)}$.

We further introduce the concept of subspace incoherence, and arrive at a major result in \cite{Soltanolkotabi:AS12}.
\begin{definition} [\bf Subspace incoherence]
	\label{def:subspace-incoherence}
	The subspace incoherence of a point $\x_j \in \S_\ell$ vis a vis data points in $\S_\kappa (\kappa \ne \ell)$ is defined as:
	\begin{align}
	\label{eq:subspace-incoherence}
	\mu_j := \max \{ \| X^{(\kappa)\top} \frac{\v_j^\ast}{\|\v_j^\ast\|_2} \|_\infty, ~\kappa=1,\cdots,n,~\kappa \ne \ell \},
	\end{align}
	where $\v_j^\ast$ is the dual point of $\x_j$.
\end{definition}

\begin{theorem}(\cite{Soltanolkotabi:AS12})
\label{theorem:SDP}
Suppose that the data points are all normalized to have unit $\ell_2$ norm. Then, every optimal solution to \eqref{eq:linear-ssc} is subspace-preserving if
\begin{align}
\mu_j < r(\P_{-j}^\ell).
\label{eq:SDP-condition}
\end{align}
\end{theorem}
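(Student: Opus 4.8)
The plan is to reduce this theorem directly to the sufficient condition already established in Lemma~\ref{thm:sufficient-condition-linearSSC}. That lemma guarantees subspace-preservation of every optimal solution to \eqref{eq:linear-ssc} as soon as the dual point $\v_j^\ast$ satisfies $\|X^{(\kappa)\top}\v_j^\ast\|_\infty < 1$ for every $\kappa \ne \ell$. Thus my only task is to show that the hypothesis $\mu_j < r(\P_{-j}^\ell)$, together with the unit-norm normalization of the data, forces this dual-feasibility inequality to hold strictly. The two ingredients I would use are (i) the defining expression \eqref{eq:subspace-incoherence} of the subspace incoherence $\mu_j$, which already normalizes the dual point, and (ii) the inradius bound $\|\v_j^\ast\|_2 \le 1/r(\P_{-j}^\ell)$ quoted just above, which controls the magnitude that was scaled out in (i).

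First I would fix an arbitrary $\kappa \ne \ell$ and factor the Euclidean norm of the dual point out of the $\ell_\infty$ quantity, writing
\begin{equation}
\|X^{(\kappa)\top}\v_j^\ast\|_\infty = \|\v_j^\ast\|_2 \cdot \left\| X^{(\kappa)\top} \frac{\v_j^\ast}{\|\v_j^\ast\|_2} \right\|_\infty .
\end{equation}
By the definition of $\mu_j$ in \eqref{eq:subspace-incoherence}, the second factor is at most $\mu_j$ for every such $\kappa$. Bounding the first factor by the inradius estimate then gives the chain
\begin{equation}
\|X^{(\kappa)\top}\v_j^\ast\|_\infty \le \|\v_j^\ast\|_2 \, \mu_j \le \frac{\mu_j}{r(\P_{-j}^\ell)} .
\end{equation}
Invoking the hypothesis $\mu_j < r(\P_{-j}^\ell)$ makes the right-hand side strictly less than $1$, so the sufficient condition \eqref{eq:sufficient-condition-linearSSC} holds for all $\kappa \ne \ell$, and Lemma~\ref{thm:sufficient-condition-linearSSC} immediately yields the conclusion.

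I do not expect any genuine obstacle in the argument itself, since it is essentially a one-line chaining of inequalities. The real content has been front-loaded into the two results I am invoking: Lemma~\ref{thm:sufficient-condition-linearSSC}, whose proof requires constructing a dual certificate from the optimal $\w_j^\ast$ and verifying strong duality for the pair \eqref{eq:fictitious-primal}--\eqref{eq:fictitious-dual}, and the inradius bound $\|\v_j^\ast\|_2 \le 1/r(\P_{-j}^\ell)$, whose justification rests on a polar-duality argument relating the minimal-norm dual optimum to the largest inscribed ball of the symmetric convex hull $\P_{-j}^\ell$. If either of those were unavailable I would have to reconstruct it, but since both may be assumed from the excerpt, the theorem follows from the elementary estimate above.
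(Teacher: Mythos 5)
Your proposal is correct and follows exactly the route the paper intends: Theorem~\ref{theorem:SDP} is a cited result that the surrounding text derives by chaining Lemma~\ref{thm:sufficient-condition-linearSSC}, the inradius bound $\|\v_j^\ast\|_2 \le 1/r(\P_{-j}^\ell)$, and the definition of $\mu_j$ in \eqref{eq:subspace-incoherence}, which is precisely your one-line estimate. The unit-norm hypothesis enters only through the inradius bound, as you correctly note, so nothing is missing.
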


As demonstrated in \cite{Soltanolkotabi:AS12}, the sufficient condition in Theorem~\ref{theorem:SDP} has a nice geometric interpretation. The incoherence $\mu_j$ on the LHS of \eqref{eq:SDP-condition} captures the separation between a dual direction in $\S_\ell$, $\v_j^\ast / \|\v_j^\ast\|_2$, and the points in other subspaces, $X^{(\kappa)} (\kappa \neq \ell)$. Intuitively, the incoherence is small if different subspaces are well separated. On the RHS, the inradius $r(\P_{-j}^\ell)$ captures the distribution of points in $\S_\ell$, and is expected to be large if the points are spread-out and not skewed towards a specific direction in $\S_\ell$.

\subsection{ASSC via SSC in Homogeneous Coordinates} 
\label{sec:ASSC-via-SSC-in-HE}
We now assume that the columns of the data matrix $X$ lie in a union of \emph{affine} subspaces $\{ \A_\ell \}_{\ell=1}^n$ of dimensions $\{d_\ell\}_{\ell=1}^n$, and derive conditions under which the solution $\c_j$ to the optimization problem \eqref{eq:affine-ssc} is subspace-preserving.

We first introduce the concept of homogeneous embedding.

\begin{definition}[\bf Homogeneous embedding]
\label{def:HE}
The homogeneous embedding $\hbar: \RR^D \!\rightarrow \!\RR^{D+1}$ of a point $\x \in \RR^D$  is defined as
\begin{align}
\label{eq:HE}
\hbar(\x) =\begin{bmatrix} \x \\ 1 \end{bmatrix}.
\end{align}
For convenience, we also denote $\hbar(\x)$ as $\tilde \x$.
\end{definition}
To understand why homogeneous embedding is important to the analysis of ASSC, observe that applying~\eqref{eq:linear-ssc} to the embedded data matrix $\tilde{X} := [\tilde \x_1, \cdots, \tilde \x_N]$ gives the following optimization problem:
\begin{align}
\label{eq:affine-ssc-homogeneous}
\!\min_{\c_j} \| \c_j \|_1 ~ \st  ~ \tilde\x_j := \begin{bmatrix} \x_j \\ 1 \end{bmatrix} = \begin{bmatrix} X \\ \1^\top \end{bmatrix} \c_j := \tilde{X}\c_j , ~ c_{jj}=0,
\end{align}
which is the same as the optimization problem in~\eqref{eq:affine-ssc}.
In other words, applying ASSC to data $X$ is equivalent to applying SSC to the embedded data $\tilde{X}$.
This connection between ASSC and SSC motivates us to provide theoretical justifications for ASSC by applying 
the results for SSC to embedded data points.

Before doing this, we first show that the embedded data points $\tilde{X}$ lie in a union of embedded subspaces of $\RR^{D+1}$.
Specifically, for an affine subspace $\A_\ell$, the set of embedded data points, denoted as $\tilde{\A}_{\ell}:=\{\hbar(\x): \x \in \A_\ell\}$, is an affine subspace of $\RR^{D+1}$.
In particular, the embedded affine subspace $\tilde{\A}_{\ell}$ does not pass through the origin, and therefore is not a linear subspace.
Nevertheless, $\tilde{\A}_{\ell}$ is contained in the linear subspace $\E(\A_\ell):=\spann(\tilde{\A}_{\ell})$. Therefore the embedding $\{ \tilde \A_\ell\}_{\ell=1}^n$ of the union of affine subspaces $\{\A_\ell\}_{\ell=1}^n$ is contained in the union of linear subspaces $\{\E(\A_\ell) \subseteq \RR^{D+1}\}_{\ell=1}^n$ of dimensions $\{d_\ell+1\}_{\ell=1}^n$.
Consequently, we can apply Theorem~\ref{theorem:independent-linear} to the analysis of ASSC by imposing independence assumption on the embedded linear subspaces $\{\E(\A_\ell) \subseteq \RR^{D+1}\}_{\ell=1}^n$. 
Formally, we have the following theorem which follows directly from Theorem~\ref{theorem:independent-linear}.

\begin{theorem}[\cite{Elhamifar:CVPR09}]
	If the collection of embedded subspaces $\{\E(\A_\ell) \subseteq \RR^{D+1}\}_{\ell=1}^n$ is independent and $\rank(\tilde{X}_{-j}^\ell) = \dim(\E(\A_\ell))$, then every optimal solution to \eqref{eq:affine-ssc} is subspace-preserving.
	\label{theorem:independent-linear-embedded}
\end{theorem}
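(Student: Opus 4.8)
The plan is to reduce the statement to Theorem~\ref{theorem:independent-linear} through the homogeneous embedding, exploiting the equivalence between ASSC on $X$ and SSC on $\tilde{X}$ already recorded in \eqref{eq:affine-ssc-homogeneous}. First I would observe that, by that equivalence, the feasible set of \eqref{eq:affine-ssc} (on $X$) coincides with the feasible set of \eqref{eq:linear-ssc} applied to $\tilde{X}$, and the two objectives are identical; hence a vector $\c_j$ is optimal for one problem if and only if it is optimal for the other. It therefore suffices to prove that every optimal solution of SSC on the embedded data $\tilde{X}$ is subspace-preserving with respect to the embedded linear subspaces $\{\E(\A_\ell)\}_{\ell=1}^n$.

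Next I would verify that the hypotheses of Theorem~\ref{theorem:independent-linear} hold for $\tilde{X}$ and the collection $\{\E(\A_\ell)\}$. As established in the discussion preceding the statement, the embedded data $\tilde{X}$ lies in the union $\bigcup_{\ell} \E(\A_\ell)$, where each $\E(\A_\ell)$ is a linear subspace of $\RR^{D+1}$ with $\dim(\E(\A_\ell)) = d_\ell + 1$. The independence of $\{\E(\A_\ell)\}$ is assumed directly, and the rank condition $\rank(\tilde{X}_{-j}^\ell) = \dim(\E(\A_\ell))$ is exactly the hypothesis required by Theorem~\ref{theorem:independent-linear}, playing the role of $\rank(X_{-j}^{(\ell)}) = d_\ell$. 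Applying Theorem~\ref{theorem:independent-linear} then yields that every optimal $\c_j$ satisfies $c_{ij} \neq 0$ only when $\tilde{\x}_i$ lies in the same embedded subspace as $\tilde{\x}_j$.

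Finally I would translate this conclusion back to the affine setting, where the only genuinely non-routine bookkeeping occurs. The point is that the map $\x_i \mapsto \tilde{\x}_i$ identifies membership in the same affine subspace with membership in the same embedded linear subspace. Indeed, if $\x_i \in \A_\kappa$ then $\tilde{\x}_i \in \tilde{\A}_\kappa \subseteq \E(\A_\kappa)$; moreover, independence forces $\E(\A_\kappa) \cap \E(\A_\ell) = \{\0\}$ for $\kappa \neq \ell$, and since the last coordinate of $\tilde{\x}_i$ equals $1$ we have $\tilde{\x}_i \neq \0$, so $\tilde{\x}_i$ belongs to $\E(\A_\ell)$ only for $\ell = \kappa$. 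Hence subspace-preservation in the embedded space is precisely subspace-preservation in the original affine space, which completes the argument.

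I expect the main obstacle to be exactly this last transfer: one must confirm that each embedded sample lies in a \emph{unique} embedded subspace, so that the notion of ``subspace-preserving'' carries over unambiguously. This is where the independence assumption is essential—it guarantees trivial pairwise intersections of the $\E(\A_\ell)$, which, combined with the nonzero last coordinate of every $\tilde{\x}_i$, rules out a data point's embedding accidentally landing in a foreign embedded subspace. Everything else is a direct substitution of the embedded quantities into Theorem~\ref{theorem:independent-linear}.
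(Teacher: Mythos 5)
Your proposal is correct and follows essentially the same route as the paper, which simply observes that the theorem ``follows directly from Theorem~\ref{theorem:independent-linear}'' once the equivalence of \eqref{eq:affine-ssc} with SSC on $\tilde{X}$ and the containment of $\tilde{X}$ in the union of the linear subspaces $\{\E(\A_\ell)\}$ are established. Your final step---checking via independence and the nonzero last coordinate that each $\tilde{\x}_i$ lies in a unique $\E(\A_\ell)$, so that subspace-preservation transfers unambiguously---is a detail the paper leaves implicit, and it is a worthwhile addition.
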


Theorem \ref{theorem:independent-linear-embedded} has already appeared in \cite{Elhamifar:CVPR09} and it is also the only theoretical result for ASSC in the literature to the best of our knowledge.
While the conditions in Theorem~\ref{theorem:independent-linear-embedded} are very elegant in the embedded space, they are not very intuitive and are hard to interpret as they are not characterized by the arrangement of the affine subspaces in the original data space. In Section~\ref{sec:novel-analysis-AASC}, we will introduce a novel notion of affine independence for a collection of affine subspaces which is defined directly in the original data space.

For more general classes of embedded subspaces $\{\E(\A_\ell) \subseteq \RR^{D+1}\}_{\ell=1}^n$ which need not be independent, it is appealing to apply Theorem~\ref{theorem:SDP} for the analysis of ASSC. In doing so, a prerequisite is that all the embedded data points $\tilde{X}$ have unit $\ell_2$ norm.
This is obviously violated since the $\ell_2$ norm of the homogeneous embedding of any data point is greater than $1$, unless this data point is the origin.\footnote{It may be tempting to address this issue by redefining homogeneous embedding as appending a constant less than $1$. Nonetheless, we will still need that all the data points in $X$ to have the same $\ell_2$ norm, which is, in general, not satisfied for data in affine subspaces.}
Therefore, applying Theorem~\ref{theorem:SDP} to the embedded data $\tilde X$ is not a viable approach to the analysis of ASSC.

\subsection{Our Extension of Correctness Condition to ASSC} 
\label{sec:extension-of-the-sufficient-condition-in-Sol}

Instead of Theorem \ref{theorem:SDP}, we can derive novel correctness conditions for ASSC based on Lemma~\ref{thm:sufficient-condition-linearSSC} which does not require data points to be normalized. Specifically, by using the definition of subspace incoherence in \eqref{eq:subspace-incoherence} and applying Lemma \ref{thm:sufficient-condition-linearSSC} on the embedded data $\tilde X$, we have the following result.

\begin{theorem}
\label{theorem:ASDP-HE2}
Suppose $ \x_j \in  \A_\ell$, and let $\tilde \v^\ast_j$ be a dual point of $\tilde \x_j \in \E(\A_{\ell})$. Any optimal solution to \eqref{eq:affine-ssc} is subspace-preserving if
\begin{align}
\tilde \mu_j < \|\tilde \v^\ast_j \|^{-1}_2,
\label{eq:ASDP-condition-HE2}
\end{align}
where $\tilde \mu_j$ is the subspace incoherence of a point $\tilde \x_j \in \E(\A_{\ell})$ vis a vis data points in $\E(\A_{\kappa}) (\kappa \ne \ell)$ as defined in \eqref{eq:subspace-incoherence} with the embedded data $\tilde X$.
\end{theorem}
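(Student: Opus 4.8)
The plan is to leverage the equivalence, already established in Section~\ref{sec:ASSC-via-SSC-in-HE}, between ASSC~\eqref{eq:affine-ssc} on the data $X$ and linear SSC~\eqref{eq:linear-ssc} on the homogeneously embedded data $\tilde{X}$ via problem~\eqref{eq:affine-ssc-homogeneous}. Since the embedded points $\tilde{X}$ were shown to lie in the union of linear subspaces $\{\E(\A_\ell)\}_{\ell=1}^n$, the sufficient condition of Lemma~\ref{thm:sufficient-condition-linearSSC} applies verbatim to this embedded configuration, with every quantity (data submatrices, dual point) replaced by its embedded counterpart. Crucially, Lemma~\ref{thm:sufficient-condition-linearSSC} does not require the data to be $\ell_2$-normalized, which is exactly why it, rather than Theorem~\ref{theorem:SDP}, is the right tool for the embedded points (whose norms are necessarily larger than one).

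First I would invoke Lemma~\ref{thm:sufficient-condition-linearSSC} in the embedded space. Writing $\tilde{X}^{(\kappa)}$ for the submatrix of embedded points drawn from $\E(\A_\kappa)$ and $\tilde{\v}_j^\ast$ for the dual point of $\tilde{\x}_j \in \E(\A_\ell)$, the lemma guarantees that every optimal solution to~\eqref{eq:affine-ssc-homogeneous}---and hence, by the equivalence, to~\eqref{eq:affine-ssc}---is subspace-preserving provided $\|\tilde{X}^{(\kappa)\top}\tilde{\v}_j^\ast\|_\infty < 1$ for all $\kappa \ne \ell$.

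The remaining step is purely algebraic: I would recast this condition in terms of the embedded subspace incoherence $\tilde{\mu}_j$. Applying the definition~\eqref{eq:subspace-incoherence} to the embedded data and factoring the scalar $\|\tilde{\v}_j^\ast\|_2^{-1}$ out of the $\ell_\infty$ norm gives $\tilde{\mu}_j = \|\tilde{\v}_j^\ast\|_2^{-1}\,\max_{\kappa \ne \ell}\|\tilde{X}^{(\kappa)\top}\tilde{\v}_j^\ast\|_\infty$, so that $\max_{\kappa \ne \ell}\|\tilde{X}^{(\kappa)\top}\tilde{\v}_j^\ast\|_\infty = \tilde{\mu}_j\,\|\tilde{\v}_j^\ast\|_2$. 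The sufficient condition from the lemma thus reads $\tilde{\mu}_j\,\|\tilde{\v}_j^\ast\|_2 < 1$, which upon dividing by $\|\tilde{\v}_j^\ast\|_2$ is precisely~\eqref{eq:ASDP-condition-HE2}.

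I expect little genuine difficulty here, as the statement is essentially a reformulation of Lemma~\ref{thm:sufficient-condition-linearSSC} through the homogeneous embedding. The one point requiring care is confirming that the hypotheses of that lemma are actually met by the embedded configuration---namely, that $\tilde{X}$ lies in a union of \emph{linear} subspaces and that $\tilde{\v}_j^\ast$ is the dual point computed with respect to the correct embedded subspace $\E(\A_\ell)$. Both were secured in Section~\ref{sec:ASSC-via-SSC-in-HE}, where $\{\E(\A_\ell)\}$ was identified as a union of linear subspaces of $\RR^{D+1}$ containing $\tilde{X}$, so the translation is legitimate and requires no normalization of the embedded points.
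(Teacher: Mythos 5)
Your proposal is correct and follows exactly the route the paper takes: the paper also derives Theorem~\ref{theorem:ASDP-HE2} by applying Lemma~\ref{thm:sufficient-condition-linearSSC} to the homogeneously embedded data $\tilde{X}$ (precisely because that lemma, unlike Theorem~\ref{theorem:SDP}, needs no $\ell_2$ normalization) and then rewriting $\max_{\kappa\ne\ell}\|\tilde{X}^{(\kappa)\top}\tilde{\v}_j^\ast\|_\infty<1$ via the definition of $\tilde{\mu}_j$ as $\tilde{\mu}_j<\|\tilde{\v}_j^\ast\|_2^{-1}$. No gaps.
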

The quantity $\tilde \mu_j$ on the LHS of \eqref{eq:ASDP-condition-HE2} is the incoherence between the embedded data points $\{\tilde{X}_\kappa, \kappa \ne \ell\}$ and the normalized dual point $\frac{\tilde \v^\ast_j }{\|\tilde \v^\ast_j \|_2}$.
Note that unlike the analysis of SSC in \cite{Soltanolkotabi:AS12}, the data points in $\{\tilde{X}_\kappa, \kappa \ne \ell\}$ do not have unit $\ell_2$ norm.
Therefore, $\tilde \mu_j$ is affected not only by the angular distances between each point in $\{\tilde{X}_\kappa, \kappa \ne \ell\}$ and $\frac{\tilde \v^\ast_j }{\|\tilde \v^\ast_j \|_2}$, but also by the magnitude of the data points in $\{\tilde{X}_\kappa, \kappa \ne \ell\}$.
To gain some understanding of the quantity $\|\tilde \v^\ast_j \|^{-1}_2$ on the RHS of \eqref{eq:ASDP-condition-HE2}, we further have an upper bound for it as follows.
\begin{corollary}
\label{thm:length-of-dual-point}
Let $\tilde \v_j^\ast$ be the dual point of the embedded data point $\tilde\x_j$:
a) If the optimal solution $\c_j$ to \eqref{eq:affine-ssc} is nonnegative, then $\| \tilde \v_j^\ast \|^{-1}_2 \le \|\tilde \x_j \|_2$; b) If the optimal solution $\c_j$ to \eqref{eq:affine-ssc} is not nonnegative, then $\| \tilde \v_j^\ast \|^{-1}_2 < \|\tilde \x_j \|_2$.
\end{corollary}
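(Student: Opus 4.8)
The plan is to exploit the linear-programming duality that defines the dual point together with the affine (sum-to-one) structure encoded in the last coordinate of the homogeneous embedding. First I would set up the within-subspace representation of $\tilde\x_j$ in the spirit of \eqref{eq:fictitious-primal}--\eqref{eq:fictitious-dual}, now applied to the embedded points of $\E(\A_\ell)$: the primal minimizes $\|\c\|_1$ over $\c$ with $\tilde X_{-j}^{(\ell)}\c = \tilde\x_j$, and $\tilde\v_j^\ast = U^{(\ell)}\w_j^\ast$ is built from an optimal dual variable. Since $U^{(\ell)}$ is orthonormal and all the relevant points lie in $\E(\A_\ell)$, the projection onto $U^{(\ell)}$ is lossless, so this primal is exactly the representation that ASSC recovers when its solution $\c_j$ is subspace-preserving; I will therefore identify $\c_j$ with this primal optimum.

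The key structural observation is that the last row of $\tilde X_{-j}^{(\ell)}$ is $\1^\top$ and the last entry of $\tilde\x_j$ is $1$, so every feasible $\c$ satisfies $\1^\top\c = 1$. Consequently $\|\c\|_1 = \sum_i |c_i| \ge |\sum_i c_i| = 1$, with equality if and only if all entries are nonnegative, since they must share the sign of their positive sum. Hence $\|\c_j\|_1 = 1$ exactly when $\c_j$ is nonnegative, and $\|\c_j\|_1 > 1$ strictly otherwise. This is precisely where the dichotomy between parts (a) and (b) enters.

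Next I would invoke strong duality for this linear program: feasibility is guaranteed by representability and the objective is bounded below by $0$, so the optimum is attained and $\tilde\v_j^{\ast\top}\tilde\x_j = \w_j^{\ast\top}\tilde\a_j = \|\c_j\|_1$. Combining this identity with Cauchy--Schwarz gives $\|\c_j\|_1 = \tilde\v_j^{\ast\top}\tilde\x_j \le \|\tilde\v_j^\ast\|_2\,\|\tilde\x_j\|_2$, and rearranging yields $\|\tilde\v_j^\ast\|_2^{-1} \le \|\tilde\x_j\|_2 / \|\c_j\|_1$. In case (a), $\|\c_j\|_1 = 1$ gives $\|\tilde\v_j^\ast\|_2^{-1} \le \|\tilde\x_j\|_2$; in case (b), $\|\c_j\|_1 > 1$ forces the strict bound $\|\tilde\v_j^\ast\|_2^{-1} \le \|\tilde\x_j\|_2/\|\c_j\|_1 < \|\tilde\x_j\|_2$. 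Note that the strictness in (b) comes entirely from $\|\c_j\|_1 > 1$ and not from any tightness of Cauchy--Schwarz, which keeps the argument clean and explains why (a) can only be asserted with a non-strict inequality.

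The main obstacle I anticipate is not the inequality chain, which is short, but the two bookkeeping points that make it rigorous: (i) justifying the strong-duality identity $\tilde\v_j^{\ast\top}\tilde\x_j = \|\c_j\|_1$ for the \emph{minimal-norm} dual point, where the key remark is that every dual optimum achieves the same optimal value, so the minimal-norm choice used to define $\tilde\v_j^\ast$ is harmless; and (ii) pinning down the identification of the ASSC optimizer with the within-subspace primal optimum, so that the hypothesis ``the optimal solution $\c_j$ to \eqref{eq:affine-ssc} is nonnegative'' translates faithfully into $\|\c_j\|_1 = 1$. Once these are settled, parts (a) and (b) follow immediately from the displayed inequality.
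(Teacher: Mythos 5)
Your argument is essentially identical to the paper's own proof: both rest on the observation that the affine constraint forces $\|\c_j\|_1 = \1^\top\c_j = 1$ exactly when $\c_j$ is nonnegative (and $\|\c_j\|_1 > 1$ otherwise), then combine the strong-duality identity $\tilde\w_j^{\ast\top}\tilde\a_j = \|\c_j\|_1$ with Cauchy--Schwarz and the isometry of the orthonormal basis to get $\|\tilde\v_j^\ast\|_2 \ge \|\c_j\|_1 / \|\tilde\x_j\|_2$. The only cosmetic difference is that you phrase the inequality in the ambient coordinates via $\tilde\v_j^{\ast\top}\tilde\x_j$ while the paper works in the reduced coordinates via $\tilde\w_j^{\ast\top}\tilde\a_j$; these coincide, so the proofs match.
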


Note that the condition \eqref{eq:ASDP-condition-HE2} can be checked in practice when the true clustering is known, since the dual point $\tilde \v_j^\ast$ can be computed by solving the optimization problem in \eqref{eq:fictitious-dual}.

While the sufficient condition \eqref{eq:ASDP-condition-HE2} seems promising, it still suffers from the drawback that it captures the properties of data points in the embedded subspaces $\{\E(\A_\ell)\}^n_{\ell=1}$ rather than that in the original affine subspaces $\{\A_\ell\}^n_{\ell=1}$, thus its interpretation in the original data space is unclear.

\section{A Novel Geometric Analysis for ASSC}
\label{sec:novel-analysis-AASC}

In this section, we provide a novel geometric analysis for the correctness of ASSC.
Our correctness conditions will be characterized by the arrangement of affine subspaces and the distribution of points on the affine subspaces, and will have clear geometric interpretations in the original data space rather than in the embedded data space. 

In Section~\ref{sec:analysis-based-on-arrangement} we develop the concept of affinely independent affine subspaces, and show that ASSC produces subspace-preserving affinities if the subspaces are affinely independent.
Section~\ref{sec:analysis-based-on-distribution} then develops tighter conditions by taking into account the relative position of data points in each affine subspace.
Finally, in Section \ref{sec:subspace-dense-solution-connectivity} we show that affinity may also be \emph{dense}, which allows us to further show that ASSC provably produce correct clustering.

\subsection{Analysis based on Subspaces Arrangement}
\label{sec:analysis-based-on-arrangement}

Recall from Section \ref{sec:sufficient-condition-in-Sol} that in linear subspace clustering, if the union of subspaces is \emph{independent}, then SSC is guaranteed to produce subspace-preserving representations.

Parallel to the analysis of SSC based on independent subspace assumption, we provide an analysis of ASSC by introducing a new concept called \textit{affinely independent} affine subspaces.
This concept has not appeared in the literature to the best of our knowledge.
Therefore, we also provide a detailed discussion on it 
before arriving at the core result of this section.

Note that the possible arrangement of a collection of affine subspaces is richer than that of a collection of linear subspaces.
This is because each affine subspace in a collection of affine subspaces can be arbitrarily positioned, while all linear subspaces in a collection of linear subspaces must all pass through a common point which is the origin of the coordinate system.
Consequently, the definition of independence for affine subspaces is not as straightforward as that for linear subspaces.
We start by considering the arrangement of two affine subspaces and introducing the notion of \emph{affinely disjoint} affine subspaces.

\begin{definition}[\bf Affinely disjoint affine subspaces]
	Two nonempty affine subspaces $\A$ and $\A'$ are said to be affinely disjoint if $\A \cap \A' = \emptyset$ and $\T(\A) \cap \T(\A') = \{\0\}$.
\label{def:affinely-disjoint-affine-subspaces}
\end{definition}
As an example, two lines (i.e., affine subspaces of dimension $1$) in $\RR^3$ are affinely disjoint if and only if they are skew lines, i.e., they are neither intersecting nor parallel.

Let $\aff(\A \cup \A')$ be the smallest affine subspace containing the two affine subspaces $\A$ and $\A'$. The notion of affine disjointness can be equivalently defined as follows.
\begin{lemma}
Two affine subspaces $\A$ and $\A'$ are affinely disjoint if and only if $\dim(\aff(\A \cup \A')) = \dim(\A) + \dim(\A')  + 1$.
\label{lemma:disjoint}
\end{lemma}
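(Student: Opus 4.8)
The plan is to reduce the biconditional to a single linear-algebraic dimension count for the direction subspace of $\aff(\A \cup \A')$. First I would fix base points and write $\A = \x_0 + \S$ and $\A' = \x_0' + \S'$, where $\S := \T(\A)$ and $\S' := \T(\A')$ are the direction subspaces, and set $\v := \x_0' - \x_0$. Since both subspaces are nonempty, $\aff(\A \cup \A')$ is a well-defined affine subspace; taking $\x_0$ as a base point, its direction subspace is the linear span of all differences of points in $\A \cup \A'$, and a short computation gives $\T(\aff(\A \cup \A')) = \S + \S' + \spann(\v)$. Hence $\dim(\aff(\A \cup \A')) = \dim(\S + \S' + \spann(\v))$.

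Next I would expand this dimension with the standard formulas. Using $\dim(\S + \S') = \dim(\S) + \dim(\S') - \dim(\S \cap \S')$ and then adjoining $\v$, one obtains $\dim(\aff(\A \cup \A')) = \dim(\S) + \dim(\S') - \dim(\S \cap \S') + \epsilon$, where $\epsilon = 1$ if $\v \notin \S + \S'$ and $\epsilon = 0$ otherwise (this holds in all cases, including $\v = \0$, since then $\v \in \S + \S'$ and $\epsilon = 0$). Because $\dim(\A) = \dim(\S)$ and $\dim(\A') = \dim(\S')$, the target identity $\dim(\aff(\A \cup \A')) = \dim(\A) + \dim(\A') + 1$ is equivalent to $\epsilon - \dim(\S \cap \S') = 1$. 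Since $\epsilon \le 1$ and $\dim(\S \cap \S') \ge 0$, this forces simultaneously $\dim(\S \cap \S') = 0$ (i.e. $\S \cap \S' = \{\0\}$) and $\epsilon = 1$ (i.e. $\v \notin \S + \S'$); conversely these two conditions give $\epsilon - \dim(\S \cap \S') = 1$.

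It then remains to translate $\v \notin \S + \S'$ into the geometric statement $\A \cap \A' = \emptyset$. Here I would prove the equivalence $\A \cap \A' \neq \emptyset \iff \v \in \S + \S'$: a common point $\x_0 + \u = \x_0' + \w$ with $\u \in \S$, $\w \in \S'$ exists iff $\v = \u - \w$ for some such $\u, \w$, which, using that $\S'$ is a subspace (so $-\w$ ranges over $\S'$), is exactly $\v \in \S + \S'$. Combining this with the previous paragraph, the dimension identity holds iff $\S \cap \S' = \{\0\}$ and $\A \cap \A' = \emptyset$, which is precisely Definition~\ref{def:affinely-disjoint-affine-subspaces}.

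The computation is largely routine; the step deserving the most care, and which I regard as the main obstacle, is the identification $\T(\aff(\A \cup \A')) = \S + \S' + \spann(\v)$ together with the bookkeeping of the extra dimension $\epsilon$ contributed by $\v$. The key conceptual point is that affine disjointness splits into two independent requirements---triviality of the intersection of directions ($\S \cap \S' = \{\0\}$) and emptiness of the set intersection ($\A \cap \A' = \emptyset$)---and these correspond exactly to the two nonnegative quantities $\dim(\S \cap \S')$ and $1 - \epsilon$, both of which must vanish for the dimension to attain its maximal value $\dim(\A) + \dim(\A') + 1$.
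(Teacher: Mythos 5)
Your proof is correct. The paper reaches the same conclusion by citing an external dimension formula (Lemma~\ref{thm:affine_arrangement_dimension}, from Gallier) that splits into two cases according to whether $\A \cap \A'$ is empty, and then argues the ``if'' direction by contradiction: assuming a nonempty intersection forces $\dim(\A \cap \A') = -1$. You instead derive a single unified identity from scratch, $\dim(\aff(\A \cup \A')) = \dim(\S) + \dim(\S') - \dim(\S \cap \S') + \epsilon$ with $\epsilon \in \{0,1\}$ recording whether $\v = \x_0' - \x_0$ lies in $\S + \S'$, and observe that attaining the maximal value forces both nonnegative deficits $\dim(\S \cap \S')$ and $1 - \epsilon$ to vanish. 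The substance is the same dimension count over direction subspaces, but your version is self-contained (it effectively reproves the cited lemma via the identification $\T(\aff(\A \cup \A')) = \S + \S' + \spann(\v)$ and the equivalence $\A \cap \A' \neq \emptyset \iff \v \in \S + \S'$, both of which you justify correctly), avoids the case split and the proof by contradiction, and makes transparent why affine disjointness decomposes into exactly two independent conditions. The paper's route is shorter on the page at the cost of importing the two-case formula as a black box.
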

We now introduce the concept of \textit{affinely independent} affine subspaces.
\begin{definition}[\bf Affinely independent affine subspaces]
A collection of affine subspaces $\{\A_{\ell}\}_{\ell=1}^n$ is said to be affinely independent if $\dim(\aff(\cup_{\ell=1}^{n}\A_{\ell})) = \sum_{\ell=1}^n\dim(\A_{\ell}) + (n - 1)$.
\label{def:affinely-independent-affine-subspaces}
\end{definition}
It can be shown (see Lemma~\ref{thm:affine_arrangement_dimension_bound}) that for an arbitrary collection of affine subspaces $\{\A_{\ell}\}_{\ell=1}^n$, the dimension $\dim(\aff(\cup_{\ell=1}^{n}\A_{\ell}))$ is upper bounded by $\sum_{\ell=1}^n\dim(\A_{\ell}) + (n - 1)$.
Therefore, the collection $\{\A_{\ell}\}_{\ell=1}^n$ is affinely independent if the affine subspaces are arranged in a way such that the dimension of the affine hull of their union is maximized.

The notion of affinely disjoint and affinely independent subspaces are intricately related.
In the case of two subspaces, these two definitions are equivalent as can be seen from Lemma~\ref{lemma:disjoint}. In general, if a collection of affine subspaces $\{\A_{\ell}\}_{\ell=1}^n$ is affinely independent, then every two subspaces from this collection are disjoint. However, the converse of this statement is not true; one can easily construct three lines in $\RR^3$ that are pairwise affinely disjoint, but the collection of any three lines in $\RR^3$ is not affinely independent. More generally, if a set of affine subspaces $\{\A_{\ell}\}_{\ell=1}^n$ is affinely independent, then any two disjoint subsets of $\{\A_{\ell}\}_{\ell=1}^n$ are \textit{affinely disjoint}. More precisely, we have the following result.
\begin{lemma}
If a collection of affine subspaces $\{\A_{\ell}\}_{\ell=1}^n$ is affinely independent, then for any two nonempty subsets $\II, \II'\subseteq \{1, \cdots, n\}$ where $\II \cap \II' = \emptyset$, we have that the affine subspaces $\aff(\cup_{\kappa \in \II}\A_\kappa)$ and $\aff(\cup_{\kappa' \in \II'}\A_{\kappa'})$ are affinely disjoint.
\label{lemma:independent}
\end{lemma}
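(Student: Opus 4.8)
The plan is to reduce the statement, via Lemma~\ref{lemma:disjoint}, to a single dimension identity. Writing $\mathcal{B} := \aff(\cup_{\kappa \in \II}\A_\kappa)$ and $\mathcal{B}' := \aff(\cup_{\kappa' \in \II'}\A_{\kappa'})$, and using the elementary fact that $\aff(\aff(S) \cup T) = \aff(S \cup T)$, one has $\aff(\mathcal{B} \cup \mathcal{B}') = \aff(\cup_{\kappa \in \II \cup \II'}\A_\kappa)$. By Lemma~\ref{lemma:disjoint}, $\mathcal{B}$ and $\mathcal{B}'$ are affinely disjoint precisely when $\dim(\aff(\mathcal{B} \cup \mathcal{B}')) = \dim(\mathcal{B}) + \dim(\mathcal{B}') + 1$, so it suffices to establish this equation.

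The crux will be an auxiliary claim: \emph{every sub-collection of an affinely independent collection is itself affinely independent.} I would prove this by a grouping argument built on the upper bound of Lemma~\ref{thm:affine_arrangement_dimension_bound}. Fix a subset $J \subseteq \{1,\dots,n\}$ and group the members of $J$ into the single affine subspace $\mathcal{B}_J := \aff(\cup_{\ell \in J}\A_\ell)$. Applying Lemma~\ref{thm:affine_arrangement_dimension_bound} to the collection consisting of $\mathcal{B}_J$ together with the remaining $n - |J|$ subspaces $\{\A_\ell\}_{\ell \notin J}$ --- whose union has the same affine hull as $\cup_{\ell=1}^n \A_\ell$ --- yields
\[
\dim\!\big(\aff(\cup_{\ell=1}^n \A_\ell)\big) \le \dim(\mathcal{B}_J) + \sum_{\ell \notin J}\dim(\A_\ell) + (n - |J|).
\]
Since the left-hand side equals $\sum_{\ell=1}^n \dim(\A_\ell) + (n-1)$ by affine independence, rearranging gives the lower bound $\dim(\mathcal{B}_J) \ge \sum_{\ell \in J}\dim(\A_\ell) + (|J| - 1)$. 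As the reverse inequality is exactly the bound of Lemma~\ref{thm:affine_arrangement_dimension_bound} applied to $\{\A_\ell\}_{\ell \in J}$, equality holds and $\{\A_\ell\}_{\ell \in J}$ is affinely independent, proving the claim.

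With the claim in hand, I would apply it to the three sub-collections indexed by $\II$, $\II'$, and $\II \cup \II'$ to obtain the exact dimension formulas $\dim(\mathcal{B}) = \sum_{\kappa \in \II}\dim(\A_\kappa) + (|\II|-1)$, $\dim(\mathcal{B}') = \sum_{\kappa' \in \II'}\dim(\A_{\kappa'}) + (|\II'|-1)$, and $\dim(\aff(\mathcal{B} \cup \mathcal{B}')) = \sum_{\kappa \in \II \cup \II'}\dim(\A_\kappa) + (|\II \cup \II'| - 1)$. Because $\II \cap \II' = \emptyset$ gives $|\II| + |\II'| = |\II \cup \II'|$, adding the first two formulas and a $+1$ reproduces exactly the third, i.e. $\dim(\mathcal{B}) + \dim(\mathcal{B}') + 1 = \dim(\aff(\mathcal{B} \cup \mathcal{B}'))$; Lemma~\ref{lemma:disjoint} then finishes the proof. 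I expect the main obstacle to be the auxiliary claim rather than this final bookkeeping: one must be careful that grouping a sub-collection into its affine hull does not change the overall affine hull, and that the subadditive dimension bound of Lemma~\ref{thm:affine_arrangement_dimension_bound} can be legitimately applied to a mixed collection in which one member is already the hull of several of the original subspaces.
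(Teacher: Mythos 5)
Your proof is correct, and it is organized differently from the paper's. The paper argues by contrapositive: assuming $\aff(\cup_{\kappa \in \II}\A_\kappa)$ and $\aff(\cup_{\kappa' \in \II'}\A_{\kappa'})$ are not affinely disjoint, it partitions the index set into $\II$, $\II'$, and the complement $\E$, and chains the subadditivity bound of Lemma~\ref{thm:affine_arrangement_dimension_bound} across this three-way split to force $\dim(\aff(\cup_{\ell=1}^{n}\A_{\ell})) < \sum_{\ell}\dim(\A_{\ell}) + n - 1$, contradicting affine independence. You instead go direct, and the engine of your argument is the auxiliary claim that affine independence is hereditary: grouping a sub-collection $J$ into $\aff(\cup_{\ell \in J}\A_\ell)$ and squeezing its dimension between the lower bound forced by independence of the full collection and the upper bound of Lemma~\ref{thm:affine_arrangement_dimension_bound} yields the exact formula $\dim(\aff(\cup_{\ell \in J}\A_\ell)) = \sum_{\ell \in J}\dim(\A_\ell) + |J| - 1$; applying this to $\II$, $\II'$, and $\II \cup \II'$ and invoking Lemma~\ref{lemma:disjoint} closes the argument. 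Both proofs ultimately rest on the same two ingredients (the subadditive dimension bound and the maximality of dimension under affine independence), but your route isolates a cleaner, reusable intermediate fact --- every sub-collection of an affinely independent collection is affinely independent, with its affine hull dimension computed exactly --- and avoids the complement set $\E$ entirely (the paper's definition of $\E$ in fact contains an undefined symbol $\II''$, apparently a typo for the complement of $\II \cup \II'$). The paper's version is a single inequality chain and slightly shorter; yours gives strictly more information along the way. Your stated worries at the end are non-issues: $\aff(\aff(S)\cup T) = \aff(S \cup T)$ holds because any affine subspace containing $S$ contains $\aff(S)$, and Lemma~\ref{thm:affine_arrangement_dimension_bound} is stated for an arbitrary collection of nonempty affine subspaces, so applying it to the mixed collection is legitimate.
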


We now present the main result of this subsection. Assume that $\x_j \in \A_{\ell}$, and let $X_{-j}^{(\ell)}$ be the submatrix of $X$ containing all columns lying in $\A_{\ell}$ other than $\x_j$.
\begin{theorem}[\bf Subspace-preserving recovery under affine independence assumption]
If the collection of affine subspaces $\{\A_{\ell}\}_{\ell=1}^n$ is affinely independent and $\dim(\aff(X_{-j}^{(\ell)})) = d_\ell$,
then every optimal solution to \eqref{eq:affine-ssc} is subspace-preserving.
\label{theorem:affine-independent-sufficient}
\end{theorem}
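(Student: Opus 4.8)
The plan is to reduce the claim to the already-established embedded-space result, Theorem~\ref{theorem:independent-linear-embedded}, by showing that affine independence of $\{\A_\ell\}_{\ell=1}^n$ in the original space is exactly linear independence of the embedded subspaces $\{\E(\A_\ell)\}_{\ell=1}^n$ in $\RR^{D+1}$, and that the rank hypothesis translates accordingly. Recall from Section~\ref{sec:ASSC-via-SSC-in-HE} that ASSC on $X$ coincides with SSC on $\tilde{X}$, and that $\tilde{\x}_j\in\E(\A_\ell)$. So once the two geometric hypotheses of Theorem~\ref{theorem:independent-linear-embedded} are verified, subspace-preservingness of every optimal solution to \eqref{eq:affine-ssc} follows immediately.

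The key lemma I would isolate first is the dimension identity
\[
\dim(\spann(\tilde{Y})) = \dim(\aff(Y)) + 1
\]
for any nonempty set $Y\subseteq\RR^D$, where $\tilde{Y}$ denotes the set of homogeneous embeddings. The proof is the standard correspondence between affine and linear (in)dependence: a family of points $\{\x_j\}$ satisfies $\sum_j c_j\x_j=\0$ together with $\sum_j c_j=0$ if and only if $\sum_j c_j\tilde{\x}_j=\0$, so the points are affinely independent exactly when their embeddings are linearly independent. Hence a maximal affinely independent subset of $Y$, i.e.\ an affine basis of $\aff(Y)$ of cardinality $\dim(\aff(Y))+1$, embeds to a maximal linearly independent subset of $\tilde{Y}$, which yields the identity. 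In particular $\dim(\E(\A_\ell)) = \dim(\aff(\A_\ell))+1 = d_\ell+1$, recovering the dimension count already quoted in the text.

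With this identity in hand, the two conditions fall out by direct substitution. Applying it to $Y=\cup_\ell\A_\ell$ gives $\dim(\sum_\ell \E(\A_\ell)) = \dim(\aff(\cup_\ell\A_\ell))+1$, since the span of a union of embedded sets is the sum of the $\E(\A_\ell)$. Comparing this with $\sum_\ell\dim(\E(\A_\ell)) = \sum_\ell d_\ell + n$ shows that the embedded subspaces are linearly independent in the sense of Definition~\ref{def:linear-independence-subspaces} precisely when $\dim(\aff(\cup_\ell\A_\ell)) = \sum_\ell d_\ell + (n-1)$, which is exactly Definition~\ref{def:affinely-independent-affine-subspaces}. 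Applying the identity once more with $Y$ equal to the column set of $X_{-j}^{(\ell)}$ turns the hypothesis $\dim(\aff(X_{-j}^{(\ell)}))=d_\ell$ into $\rank(\tilde{X}_{-j}^{(\ell)}) = d_\ell+1 = \dim(\E(\A_\ell))$. Both hypotheses of Theorem~\ref{theorem:independent-linear-embedded} are therefore met, and the conclusion follows.

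I expect the only real content to be this dimension identity together with the clean bookkeeping that follows from it. The one point to handle carefully is the biconditional at its heart, namely that affine independence of points is \emph{equivalent} to linear independence of their embeddings, and consequently that the span and the affine hull must be compared as maximal independent subsets rather than through an ad hoc choice of basis. Everything after that is substitution into the already-proved Theorem~\ref{theorem:independent-linear-embedded}.
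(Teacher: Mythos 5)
Your proof is correct, but it takes a genuinely different route from the paper's. The paper proves Theorem~\ref{theorem:affine-independent-sufficient} directly in the original data space: it assumes a non-subspace-preserving optimal solution, splits on whether $\sum_{i\in\O}c_i^\ast$ vanishes, and derives a contradiction in each case using Lemma~\ref{lemma:independent} (affine independence implies that $\A_\ell$ and $\aff(\cup_{\kappa\neq\ell}\A_\kappa)$ are affinely disjoint, both as sets and in their direction subspaces). You instead prove the dimension identity $\dim(\spann(\tilde Y))=\dim(\aff(Y))+1$ and use it to show that affine independence of $\{\A_\ell\}$ is \emph{equivalent} to linear independence of the embedded subspaces $\{\E(\A_\ell)\}$, and that $\dim(\aff(X_{-j}^{(\ell)}))=d_\ell$ translates to $\rank(\tilde X_{-j}^{(\ell)})=\dim(\E(\A_\ell))$, so the conclusion drops out of Theorem~\ref{theorem:independent-linear-embedded}. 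Your bookkeeping checks out (including the key biconditional between affine independence of points and linear independence of their embeddings, and the fact that $\sum_\ell\E(\A_\ell)=\spann(\widetilde{\cup_\ell\A_\ell})$), and the only detail worth making explicit is that under independence the embedded subspaces intersect only at the origin, which no embedded point occupies, so subspace-preservingness in the embedded space coincides with subspace-preservingness in the original space. Your reduction is shorter and actually makes a sharper structural point than the paper does: the ``novel'' affine-independence condition is precisely the original-space translation of the embedded independence condition, so Theorem~\ref{theorem:affine-independent-sufficient} is a reinterpretation of Theorem~\ref{theorem:independent-linear-embedded} rather than a strictly new guarantee. What the paper's direct argument buys in exchange is self-containedness --- it does not lean on the cited embedded-space result --- and it develops Lemma~\ref{lemma:independent}, which carries the geometric content reused elsewhere in Section~\ref{sec:novel-analysis-AASC}.
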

\begin{proof}
From the assumption $\dim(\aff(X_{-j}^{(\ell)})) = d_\ell$, we know that problem \eqref{eq:affine-ssc} always has feasible solutions.  Suppose for the purpose of arriving at a contradiction that there is an optimal solution $\c^\ast$ to \eqref{eq:affine-ssc} which is not subspace-preserving.
Let $\N := \{i: \x_i \in \A_{\ell}\}$ be the indices of data points in the affine subspace $\A_\ell$, and $\O := \{i: \x_i \notin \A_{\ell}\}$ be the indices of data points in all other affine subspaces. From the constraint of the optimization problem \eqref{eq:affine-ssc} we have
\begin{equation}
\x_j = \sum_{i \in \N} c^\ast_i \x_i  + \sum_{i \in \O} c^\ast_i \x_i,
\label{eq:prf-subspace-preserving-1}
\end{equation}
which can be rewritten as 
\begin{equation}
\x_j - \sum_{i \in \N} c^\ast_i \x_i = \sum_{i \in \O} c^\ast_i \x_i.
\label{eq:prf-subspace-preserving-2}
\end{equation}
We consider two possible cases.
\begin{itemize}[leftmargin=*]
\item $\sum_{i \in \O} c^\ast_i = 1 - \sum_{i \in \N} c^\ast_i \ne 0$. In this case, we can divide both sides of \eqref{eq:prf-subspace-preserving-2} by $\sum_{i \in \O} c^\ast_i$, which gives the following:
	\begin{equation}
	\frac{\ \x_j - \sum_{i \in \N} c^\ast_i \x_i}{1 - \sum_{k \in \N} c^\ast_k}  = \frac{\ \ \sum_{i \in \O}c^\ast_i \x_i}{\sum_{k \in \O} c^\ast_k}.
	\label{eq:prf-subspace-preserving-3}
	\end{equation}
It is easy to check that the LHS of \eqref{eq:prf-subspace-preserving-3} is a 
point in $\A_{\ell}$ and the RHS 
of \eqref{eq:prf-subspace-preserving-3} is a 
point in $\aff(\cup_{\kappa\ne \ell} \A_{\kappa})$.
Therefore, Eq.~\eqref{eq:prf-subspace-preserving-3} asserts that $\A_{\ell} \cap \aff(\cup_{\kappa\ne \ell} \A_{\kappa}) \neq \emptyset$. 
	On the other hand, since $\{\A_{\ell}\}_{\ell=1}^n$ is affinely independent, from Lemma \ref{lemma:independent} we know that $\A_{\ell}$ and $\aff(\cup_{\kappa\ne \ell} \A_{\kappa})$ are affinely disjoint, which implies $\A_{\ell} \cap \aff(\cup_{\kappa\ne \ell} \A_{\kappa}) = \emptyset$. 
	Thus, we have arrived at a contradiction.

\item $\sum_{i \in \O} c^\ast_i = 1 - \sum_{i \in \N} c^\ast_i = 0$. In this case, the LHS 
of \eqref{eq:prf-subspace-preserving-2} is a 
point in $\T(\A_{\ell})$ and the RHS 
of \eqref{eq:prf-subspace-preserving-2} is a 
point in $\T(\aff(\cup_{\kappa\ne \ell} \A_{\kappa}))$. From the fact that $\{\A_{\ell}\}_{\ell=1}^n$ is affinely independent and Lemma \ref{lemma:independent}, we know that $\T(\A_{\ell}) \cap \T(\aff(\cup_{\kappa\ne \ell} \A_{\kappa})) =\{ \0 \}$. Therefore, we have
\begin{equation}
\x_j - \sum_{i \in \N} c^\ast_i \x_i = \sum_{i \in \O} c^\ast_i \x_i = \0.
\label{eq:prf-subspace-preserving-4}
\end{equation}
Note also that $\sum_{i \in \O} c^\ast_i=0$ implies $\sum_{i \in \N} c^\ast_i = 1$, thus we can construct a feasible solution $\bar{\c}$ to \eqref{eq:affine-ssc} in which $\bar{c}_i = c_i^*$ for all $i \in \N$ and $\bar{c}_i = 0$ for all $i \in \O$. 
Then, we see that $\|\bar{\c}\|_1 = \sum_{i \in \N} |c_i^*| < \sum_{i \in \N} |c_i^*| + \sum_{i \in \O} |c_i^*| = \|\c^*\|_1$, where the strict inequality follows from the assumption that $\c^*$ is not subspace-preserving. This contradicts with the fact that $\c^*$ is an optimal solution to \eqref{eq:affine-ssc}.
\end{itemize}
In either of the two cases we have arrived at contradictions. Thus, we have proved that $\c^*$ is subspace-preserving.
\end{proof}

Note that the condition in Theorem~\ref{theorem:affine-independent-sufficient} depends solely on the arrangement of the affine subspaces, regardless of the distribution of data points in each affine subspace (except for the condition $\dim(\aff(X_{-j}^{(\ell)})) = d_\ell$). Next, we will show that all data points other than the extreme points of each subspace have subspace-preserving solutions under much weaker conditions.

\subsection{Analysis Based on Relative Position of Data Points}
\label{sec:analysis-based-on-distribution}

Our analysis in this section is motivated by the observation that the sparsity inducing $\ell_1$ regularization in \eqref{eq:affine-ssc} becomes ineffective under the affine constraint $\1^\top \c_j = 1$ if there exists a nonnegative solution. 
In particular, such phenomenon occurs for all the data points other than the extreme points of the convex hull of data points in each subspace.
In the following, we derive novel geometric conditions which guarantee subspace-preserving solution of all such non-extreme points.

We first introduce an important concept used in our analysis.
\begin{definition}
	Let $\Q$ be a convex set. A convex set $\F \subseteq \Q$ is called a face of $\Q$ if every closed line segment in $\Q$ that has a relative interior point in $\F$ has both endpoints in $\F$.
\end{definition}
In other words, a face $\F$ is a convex subset of $\Q$ such that none of the points in $\F$ can be written as a convex combination of two points in $\Q \backslash \F$.
Geometrically, faces of a convex polytope (i.e. convex hull of a set of points) generalize the notion of vertices, edges and facets.
Specifically, zero dimensional faces of a convex set $\Q$, denoted as $\ex(\Q)$, are called the \emph{extreme points} (or vertices). They are points that cannot be expressed as a convex 
combination of any two other points from $\Q$.
One dimensional faces are called \emph{edges}, and faces of dimension one less than the dimension of $\Q$ are called \emph{facets}. Moreover, the set $\Q$ is a trivial face of $\Q$, and so is the empty set.
Finally, we note that the collection of all relative interiors of nonempty faces of $\Q$ defines a partition of $\Q$, \ie, all relative interiors in this collection are disjoint and their union is $\Q$ \cite[Theorem 18.2]{Rockafellar:70}.

Let $\Q^{(\ell)}$ be the convex hull of the columns of $X^{(\ell)}$.
According to the discussion above, for each data point $\x_j \in \A_\ell$ there is a unique face of $\Q^{(\ell)}$, say $\F_j^{(\ell)}$, such that $\x_j \in \ri(\F_j^{(\ell)})$. Now, consider the following three cases.

\begin{itemize}[leftmargin=*]
  \item The face $\F_j^{(\ell)}$ is $\Q^{(\ell)}$ itself, \ie, $\x_j$ is a relative interior point of $\Q^{(\ell)}$.
  In this case, $\x_j$ can be expressed as a convex combination of some other data points from $X^{(\ell)}$.
  We derive a subspace separation condition which is both necessary and sufficient for the solution of ASSC to be subspace-preserving.

  \item The face $\F_j^{(\ell)}$ has dimension\footnote{The dimension of a convex set is the dimension of its affine hull.} $0 < \dim(\F_j^{(\ell)}) < \dim(\Q^{(\ell)})$.
  Geometrically, $\x_j$ lies on the boundary of $\Q^{(\ell)}$ excluding the extreme points, i.e. $\x_j \in \partial \Q^{(\ell)} \backslash \ex(\Q^{(\ell)})$.
  In this case, $\x_j$ can be expressed as a convex combination of some other data points from $\F^{(\ell)}_j$. 
  We show that the subspace separation condition in the first case is still a sufficient condition.

  \item The face $\F_j^{(\ell)}$ is a zero-dimensional face (vertex) of $\Q^{(\ell)}$, \ie, $\x_j$ is an extreme point of $\Q^{(\ell)}$.
  In this case, $\x_j$ cannot be expressed as a convex combination of data points from $\A_\ell$.
  We will discuss why it can be more difficult to achieve subspace-preserving property.
\end{itemize}

\subsubsection{\bf \emph{Correctness of ASSC for Relative Interior Points}}
\label{sec:correctness-case1}

We start with an interesting observation on the optimal solutions of \eqref{eq:affine-ssc}. Since the solution 
is constrained to be affine, any feasible solution $\c_j$ to \eqref{eq:affine-ssc} must satisfy $\|\c_j\|_1 \ge \1^\top \c_j = 1$. This implies that the optimal objective value of \eqref{eq:affine-ssc} is bounded below by $1$.
Formally, we have the following result.
\begin{lemma}
	\label{theorem:assc-bound}
	The optimal value to \eqref{eq:affine-ssc} is no less than $1$.
\end{lemma}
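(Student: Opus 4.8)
The plan is to prove the bound pointwise on the feasible set and then pass to the minimum, which is immediate since the optimal value is an infimum over feasible points. First I would fix an arbitrary feasible $\c_j$ for \eqref{eq:affine-ssc}. By definition of feasibility, $\c_j$ satisfies the affine constraint $\1^\top \c_j = 1$, i.e. $\sum_i c_{ij} = 1$. The only ingredient needed is the elementary inequality $|t| \ge t$, valid for every real number $t$, applied coordinatewise to the entries of $\c_j$.

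Summing these coordinatewise inequalities over all indices $i$ gives $\|\c_j\|_1 = \sum_i |c_{ij}| \ge \sum_i c_{ij} = \1^\top \c_j$. Invoking the affine constraint $\1^\top \c_j = 1$ then yields $\|\c_j\|_1 \ge 1$ for every feasible $\c_j$. This is exactly the observation recorded in the text immediately preceding the lemma.

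Finally, since the optimal value of \eqref{eq:affine-ssc} is the infimum of $\|\c_j\|_1$ over the feasible set, and I have just shown that every feasible point attains objective value at least $1$, the infimum is likewise at least $1$, completing the argument. There is no genuine obstacle in this proof; the only point worth noting is that the conclusion is robust to feasibility: if the feasible set happens to be empty the optimal value is $+\infty \ge 1$ by convention, while under the standing assumption $\dim(\aff(X_{-j}^{(\ell)})) = d_\ell$ used throughout this section the feasible set is nonempty and the infimum is attained, so the statement is non-vacuous.
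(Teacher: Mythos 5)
Your proof is correct and matches the paper's own argument, which is given in the sentence immediately preceding the lemma: every feasible $\c_j$ satisfies $\|\c_j\|_1 \ge \1^\top \c_j = 1$, so the optimal value is at least $1$. Your additional remark about the empty-feasible-set convention is a harmless extra observation but not needed.
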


Consider the case where $\x_j \in \ri(\Q^{(\ell)})$. Since $\Q^{(\ell)}$ is the convex hull of columns of $X^{(\ell)}$, $\x_j$ can always be expressed as a convex combination of the set of extreme points of $\Q^{(\ell)}$, which is a subset of the columns of $X_{-j}^{(\ell)}$ (recall that $X_{-j}^{(\ell)}$ denotes points in $\A_{\ell}$ excluding $\x_j$). That is, there exists a $\tilde{\c}_j^{(\ell)} \ge 0$ with $\1^\top \tilde{\c}_j^{(\ell)} = 1$ such that $\x_j = X_{-j}^{(\ell)} \tilde{\c}_j^{(\ell)}$.
Let $\tilde \c_j$ be a vector of length $N$ whose entries corresponding to $X_{-j}^{(\ell)}$ are those from $\tilde{\c}_j^{(\ell)}$ and all other entries are zero. We can see that $\tilde \c_j$ is trivially a subspace-preserving representation of $\x_j$. Moreover, $\tilde \c_j$ is a feasible solution to \eqref{eq:affine-ssc} and that it has objective value $1$. According to Lemma \ref{theorem:assc-bound}, the vector $\tilde \c_j$ that we have just constructed is an optimal solution to \eqref{eq:affine-ssc}.
In summary, we have proved the following lemma.
\begin{lemma}
\label{thm:interior-nonnegative}
	For any $\x_j \in \ri(\Q^{(\ell)})$, there always exists a subspace-preserving solution to \eqref{eq:affine-ssc}.
	Moreover, this solution is nonnegative, and the value of the objective function at the solution is $1$.
\end{lemma}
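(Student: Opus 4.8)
The plan is to exhibit an explicit feasible point of \eqref{eq:affine-ssc} whose objective value equals the lower bound already established in Lemma~\ref{theorem:assc-bound}, so that this point is automatically optimal, and then to observe that the same point is by construction nonnegative and subspace-preserving.

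First I would invoke Lemma~\ref{theorem:assc-bound} to record that every feasible $\c_j$ satisfies $\|\c_j\|_1 \ge \1^\top \c_j = 1$, so the optimal value of \eqref{eq:affine-ssc} is bounded below by $1$. It therefore suffices to produce a single feasible vector whose objective value is exactly $1$.

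To construct such a vector I would use the hypothesis $\x_j \in \ri(\Q^{(\ell)})$. Since $\Q^{(\ell)}$ is the convex hull of the finitely many columns of $X^{(\ell)}$, it equals the convex hull of its extreme points, and these extreme points form a subset of those columns. The crucial point is that a relative interior point of a convex body is never an extreme point, so $\x_j \notin \ex(\Q^{(\ell)})$; consequently all the extreme points lie among the columns of $X_{-j}^{(\ell)}$. Expressing $\x_j$ as a convex combination of these extreme points yields coefficients that are nonnegative, sum to $1$, and are supported entirely on $X_{-j}^{(\ell)}$. Padding with zeros in the remaining coordinates produces a vector $\tilde{\c}_j \in \RR^N$ satisfying $c_{jj}=0$, $\1^\top \tilde{\c}_j = 1$, and $X\tilde{\c}_j = \x_j$, with support contained in $\N = \{i : \x_i \in \A_\ell\}$. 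Hence $\tilde{\c}_j$ is feasible for \eqref{eq:affine-ssc} and subspace-preserving.

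Finally, because $\tilde{\c}_j \ge 0$ and $\1^\top \tilde{\c}_j = 1$, its objective value is $\|\tilde{\c}_j\|_1 = \1^\top \tilde{\c}_j = 1$, matching the lower bound and certifying optimality. I expect the only delicate step to be the assertion that $\x_j$ may be dropped from the generating set of extreme points: this is precisely where the relative-interior hypothesis is needed, since without it $\x_j$ could itself be a vertex of $\Q^{(\ell)}$, the convex representation would be forced to include $\x_j$, and the constraint $c_{jj}=0$ would be violated. Everything else reduces to routine verification of the three linear constraints.
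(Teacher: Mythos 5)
Your proposal is correct and follows essentially the same route as the paper: both establish the lower bound of $1$ via Lemma~\ref{theorem:assc-bound}, express $\x_j$ as a convex combination of the extreme points of $\Q^{(\ell)}$ (all of which lie among the columns of $X_{-j}^{(\ell)}$ since a relative interior point is not a vertex), pad with zeros, and conclude optimality because the objective value $1$ meets the lower bound. Your explicit justification that $\x_j$ can be excluded from the generating set is a small but welcome addition the paper leaves implicit.
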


While Lemma \ref{thm:interior-nonnegative} asserts the existence of subspace-preserving solutions, it does not guarantee that all optimal solutions are subspace-preserving.
In the following, we state the major theorem of this subsection which states that all optimal solutions are subspace-preserving if and only if a subspace separation condition is satisfied.

\begin{theorem}[\bf Subspace-preserving recovery for interior points]
\label{thm:interior-subspace-preserving}
For any $\x_j \in \ri(\Q^{(\ell)})$, every optimal solution to \eqref{eq:affine-ssc} is subspace-preserving if and only if $\A_\ell$ does not intersect $\conv(\X^{(-\ell)})$, where $\X^{(-\ell)}$ is a set containing all data points in $X$ other than those from $\A_\ell$.
\end{theorem}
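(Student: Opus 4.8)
The plan rests on first pinning down the optimal value and the sign of optimal solutions. Since $\x_j \in \ri(\Q^{(\ell)})$, Lemma~\ref{thm:interior-nonnegative} produces a feasible point of~\eqref{eq:affine-ssc} with objective value $1$, while Lemma~\ref{theorem:assc-bound} shows that no feasible point can beat $1$; hence the optimal value is exactly $1$. Any optimal $\c^\ast$ then satisfies $\|\c^\ast\|_1 = 1 = \1^\top \c^\ast$, and because $\|\c\|_1 \ge \1^\top \c$ with equality iff $\c \ge \0$, I conclude that \emph{every optimal solution is nonnegative}. I would isolate this nonnegativity as a standalone first step, since it is precisely the ``$\ell_1$ becomes ineffective'' phenomenon and it drives both directions of the equivalence.

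For the sufficiency direction ($\A_\ell \cap \conv(\X^{(-\ell)}) = \emptyset \Rightarrow$ every optimal solution is subspace-preserving) I argue by contradiction, mirroring the proof of Theorem~\ref{theorem:affine-independent-sufficient}. Suppose an optimal $\c^\ast$ is not subspace-preserving and split the indices into $\N = \{i : \x_i \in \A_\ell\}$ and $\O = \{i : \x_i \notin \A_\ell\}$; by nonnegativity $s := \sum_{i \in \O} c^\ast_i > 0$. If $s = 1$, then $\x_j = \sum_{i \in \O} c^\ast_i \x_i$ is a convex combination of points of $\X^{(-\ell)}$, so $\x_j \in \A_\ell \cap \conv(\X^{(-\ell)})$. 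If $0 < s < 1$, set $\p := \tfrac1s \sum_{i \in \O} c^\ast_i \x_i \in \conv(\X^{(-\ell)})$; rearranging the constraint gives $\p = \tfrac1s \x_j - \tfrac1s \sum_{i \in \N} c^\ast_i \x_i$, whose coefficients sum to $1$, so $\p$ is an affine combination of points of $\A_\ell$ and thus $\p \in \A_\ell$. Either way $\A_\ell \cap \conv(\X^{(-\ell)}) \ne \emptyset$, a contradiction.

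For necessity I prove the contrapositive constructively: assuming there is a point $\p \in \A_\ell \cap \conv(\X^{(-\ell)})$, written $\p = \sum_{i \in \O} \beta_i \x_i$ with $\beta_i \ge 0$ and $\sum_{i \in \O}\beta_i = 1$, I exhibit an optimal solution that is not subspace-preserving. If $\p = \x_j$, the weights $\{\beta_i\}$ themselves already form a nonnegative, value-$1$, non-subspace-preserving feasible solution. Otherwise, since $\x_j \in \ri(\Q^{(\ell)})$ and the direction $\x_j - \p$ lies in $\T(\aff(\Q^{(\ell)}))$, there is $t > 0$ with $\q := (1+t)\x_j - t\p \in \ri(\Q^{(\ell)})$; being a relative interior (hence non-extreme) point, $\q$ is a convex combination $\q = \sum_{i \in \N,\, i \ne j} \gamma_i \x_i$ of columns other than $\x_j$, exactly as in Lemma~\ref{thm:interior-nonnegative}. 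Since $\x_j = \tfrac{1}{1+t}\q + \tfrac{t}{1+t}\p$, assigning $c^\ast_i = \tfrac{1}{1+t}\gamma_i$ for $i \in \N \setminus \{j\}$, $c^\ast_i = \tfrac{t}{1+t}\beta_i$ for $i \in \O$, and $c^\ast_j = 0$ yields a feasible point with $\1^\top\c^\ast = 1$ and nonnegative entries, hence objective value $1$ (so optimal), yet with strictly positive mass on $\O$---therefore not subspace-preserving.

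The one delicate step I anticipate is the relative-interior extension in the necessity part: justifying that one may move from $\x_j$ a positive amount along $\x_j - \p$ while staying inside $\Q^{(\ell)}$, and that the resulting point admits a convex representation avoiding $\x_j$. This relies on the standard fact that from a relative interior point of a convex set one can follow any direction in $\T(\aff(\Q^{(\ell)}))$ for a positive distance without leaving the relative interior, together with the standing assumption $\aff(\Q^{(\ell)}) = \A_\ell$ that makes $\x_j - \p$ an admissible direction. Everything else is routine bookkeeping on the affine and nonnegativity constraints.
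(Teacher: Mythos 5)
Your proof is correct and follows essentially the same route as the paper: establish that the optimal value is $1$ and that every optimal solution is nonnegative, prove sufficiency by contrapositive via the $\N/\O$ split and normalization by $\sum_{i \in \O} c^\ast_i$, and prove necessity by constructing a nonnegative, value-$1$, non-subspace-preserving feasible point from a point in $\A_\ell \cap \conv(\X^{(-\ell)})$. The only differences are cosmetic: you make explicit the (needed) argument that \emph{all} optimal solutions are nonnegative via $\|\c\|_1 \ge \1^\top \c$ (the paper attributes this to Lemma~\ref{thm:interior-nonnegative}, which strictly only asserts existence of one such solution), and in the necessity part you extend the segment from $\p$ through $\x_j$ rather than applying Lemma~\ref{lema:dense-convex-combination} to $\X_{-j}^{(\ell)} \cup \{\z\}$ as the paper does --- both constructions rest on the same relative-interior fact.
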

\begin{proof}

We first prove the ``if'' part. To prove the contrapositive, suppose that there is an optimal solution $\c^\ast$ to \eqref{eq:affine-ssc} which is not subspace-preserving. According to Lemma~\ref{thm:interior-nonnegative}, $\c^\ast$ is nonnegative and $\1^\top \c^\ast=1$.

Let $\N := \{i: \x_i \in \A_{\ell}\}$ be the indices of points in the affine subspace $\A_\ell$, and $\O := \{i: \x_i \notin \A_{\ell}\}$ be the indices of points in all other affine subspaces. We write $\x_j$ as
\begin{equation}
\x_j - \sum_{i \in \N} c^\ast_i \x_i = \sum_{i \in \O} c^\ast_i \x_i.
\label{eq:prf-interior-subspace-preserving-2}
\end{equation}
Since $\c^\ast$ is nonnegative and not subspace-preserving, we know that $\sum_{i \in \O} c^\ast_i = 1 - \sum_{i \in \N} c^\ast_i \ne 0$. Therefore, we can divide both sides of \eqref{eq:prf-interior-subspace-preserving-2} by $\sum_{i \in \O} c^\ast_i$, \ie,
\begin{equation}
\frac{\,\x_j - \sum_{i \in \N} c^\ast_i\x_i}{1 - \sum_{k \in \N} c^\ast_k}   = \frac{\ \ \sum_{i \in \O} c^\ast_i \x_i}{\sum_{k \in \O} c^\ast_k}.
\label{eq:prf-interior-subspace-preserving-3}
\end{equation}
It is easy to check that the LHS of \eqref{eq:prf-interior-subspace-preserving-3} is a 
point in $\A_{\ell}$ and the RHS of \eqref{eq:prf-interior-subspace-preserving-3} is a 
point in $\text{conv}(\X^{(-\ell)})$.
This shows that $\A_{\ell}$ and $\conv(\X^{(-\ell)})$ intersect, \ie, $\A_{\ell} \cap \conv(\X^{(-\ell)}) \neq \emptyset$, which finishes the proof by contrapositive.

For the ``only if'' part, assume for the purpose of proving by contrapositive that the affine subspace $\A_\ell$ intersects  $\conv(\X^{(-\ell)})$. Let $\z$ be any point that lies in the intersection of $\A_\ell$ and $\conv(\X^{(-\ell)})$.
From $\x_j \in \ri(\Q^{(\ell)})$, we know that $\x_j$ lies in the relative interior of $\conv(\X_{-j}^{(\ell)})$, which implies that $\x_j$ also lies in the relative interior of $\conv(\X_{-j} ^{(\ell)} \cup \{\z\})$. According to Lemma~\ref{lema:dense-convex-combination}, $\x_j$ can be expressed as a strict convex combination of points in $\X_{-j} ^{(\ell)} \cup \{\z\}$ with all coefficients being nonzero. That is,
\begin{equation}
	\x_j = \sum_{i \in \N} c_i \x_i + c_{0} \z,
	\label{eq:prf-representation-intersection}
\end{equation}
where $c_0 > 0$, $c_i > 0$ for all $i \in \N$ and $c_0 + \sum_{i \in \N} c_i = 1$.
Moreover, since $\z \in \text{conv}(\X^{(-\ell)})$, we have that $\z$ can be represented as a convex combination of points in $\text{conv}(X^{(-\ell)})$ 
\begin{equation}
	\z = \sum_{i \in \O} c_i \x_i,
\end{equation}
where $c_i \ge 0$ for all $i \in \O$ and $\sum_{i \in \O} c_i = 1$.
Substituting this representation of $\z$ into \eqref{eq:prf-representation-intersection} we get
\begin{equation}
	\x_j = \sum_{i \in \N} c_i \x_i + \sum_{i \in \O} c_{0}  c_i  \x_i,
\end{equation}
which is a representation of $\x_j$ using data points from $X$ with coefficients $c_i$ for data points $\x_i \in \A_{\ell}$ and $c_0 c_i$ for all other data points $\x_i \not \in \A_{\ell}$.
In particular, this representation is an optimal solution to \eqref{eq:affine-ssc} since it is feasible and achieves the minimum objective value $1$, but it is not subspace-preserving.

\end{proof}

Intuitively, the geometric condition in Theorem \ref{thm:interior-subspace-preserving} shows that ASSC is correct if each affine subspace $\A_{\ell}$ is separated from the convex hull of all data points from all other subspaces.
In the case where there are two affine subspaces, this condition holds if the two affine subspaces are affinely disjoint.
The converse of this statement is not true: it is easy to see that this condition can hold even if the two affine subspaces intersect (\eg, Fig.\ref{fig:illustration-affine-subspaces}(b)). In general, the condition in Theorem \ref{thm:interior-subspace-preserving} holds if the collection of affine subspaces $\{\A_{\ell}\}_{\ell=1}^n$ is affinely independent, but the converse is not true.
In other words, for relative interior points of each subspace, the correctness of ASSC can be guaranteed under the condition in Theorem \ref{thm:interior-subspace-preserving} which is weaker requirement than the condition in Theorem~\ref{theorem:affine-independent-sufficient}.

Note also that the correctness condition in Theorem \ref{thm:interior-subspace-preserving} is not only sufficient but also necessary.
It is thus quite interesting 
that the correctness of ASSC for interior points of each subspace can be exactly characterized by a simple geometric condition.

\subsubsection{\bf \emph{Correctness of ASSC for Boundary Points (excluding extreme points)}}
\label{sec:correctness-case2}

We now consider points $\x_j \in \ri (\F_{j}^{(\ell)})$ where $\F_{j}^{(\ell)}$ is a face of $\Q^{(\ell)}$ with dimension $0 < \dim(\F_{j}^{(\ell)}) < d_\ell$.
In this case, $\x_j$ can be expressed as a convex combination of extreme points of the face $\F_{j}^{(\ell)}$, which must be a subset of $X_{-j}^{(\ell)}$.
Therefore, 
with a similar argument as that for Lemma~\ref{thm:interior-nonnegative}, we have the following result.

\begin{lemma}
	For any $\x_j \in \ri (\F_{j}^{(\ell)})$ where $0 < \dim(\F_{j}^{(\ell)}) < d_\ell$, the same conclusion in Lemma~\ref{thm:interior-nonnegative} holds.
\end{lemma}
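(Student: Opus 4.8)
The plan is to mimic the proof of Lemma~\ref{thm:interior-nonnegative} almost verbatim, with the polytope $\Q^{(\ell)}$ replaced by the face $\F_j^{(\ell)}$. The whole argument is driven by a single observation already anticipated in the text preceding the statement: a relative-interior point of a positive-dimensional face is a convex combination of the extreme points of that face, and those extreme points are themselves columns of $X_{-j}^{(\ell)}$. Once this is established, the same explicit construction of a nonnegative, subspace-preserving, unit-objective feasible vector goes through, and optimality follows from Lemma~\ref{theorem:assc-bound}.

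First I would invoke a standard fact from convex geometry: the extreme points of a face $\F_j^{(\ell)}$ of the polytope $\Q^{(\ell)}$ are precisely those extreme points of $\Q^{(\ell)}$ lying in $\F_j^{(\ell)}$. Since $\Q^{(\ell)} = \conv(X^{(\ell)})$ is the convex hull of a finite point set, each such extreme point is one of the columns of $X^{(\ell)}$. Next I would use the hypothesis $0 < \dim(\F_j^{(\ell)})$ together with $\x_j \in \ri(\F_j^{(\ell)})$: because the relative interiors of the nonempty faces partition $\Q^{(\ell)}$ (as recalled after the definition of a face), and extreme points are the zero-dimensional faces, a relative-interior point of a face of positive dimension cannot itself be an extreme point. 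Hence none of the extreme points of $\F_j^{(\ell)}$ equals $\x_j$, so all of them are columns of $X_{-j}^{(\ell)}$.

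With this in hand I would write $\x_j = \sum_k \alpha_k \x_{i_k}$ as a convex combination of the extreme points $\x_{i_k}$ of $\F_j^{(\ell)}$, with $\alpha_k \ge 0$ and $\sum_k \alpha_k = 1$. Padding these coefficients with zeros into a length-$N$ vector $\tilde \c_j$, whose support consists of indices of points in $\A_\ell$ other than $j$, yields a vector that is feasible for \eqref{eq:affine-ssc}: it satisfies $\x_j = X \tilde \c_j$, $\tilde c_{jj} = 0$, and $\1^\top \tilde \c_j = 1$. By construction $\tilde \c_j$ is nonnegative and subspace-preserving, and its objective value is $\|\tilde \c_j\|_1 = \1^\top \tilde \c_j = 1$. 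Since Lemma~\ref{theorem:assc-bound} bounds the optimal value of \eqref{eq:affine-ssc} below by $1$, the feasible vector $\tilde \c_j$ attains the optimum, which is exactly the conclusion of Lemma~\ref{thm:interior-nonnegative}.

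I do not anticipate any genuine obstacle; the statement reduces almost immediately to Lemma~\ref{thm:interior-nonnegative}. The only point deserving explicit care — and the one I would state rather than gloss over — is the verification that the extreme points of $\F_j^{(\ell)}$ lie in $X_{-j}^{(\ell)}$ and not merely in $X^{(\ell)}$. This is precisely what distinguishes the present case, in which $\x_j$ sits on a proper face of positive dimension, from the degenerate case in which $\x_j$ is itself an extreme point, and it is exactly where the hypothesis $\dim(\F_j^{(\ell)}) > 0$ is used.
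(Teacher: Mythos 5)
Your proposal is correct and follows essentially the same route as the paper, which proves this lemma by noting that $\x_j$ is a convex combination of the extreme points of $\F_j^{(\ell)}$, that these extreme points form a subset of the columns of $X_{-j}^{(\ell)}$, and then reusing the construction and the lower bound from Lemma~\ref{theorem:assc-bound} exactly as in Lemma~\ref{thm:interior-nonnegative}. The only difference is that you spell out the detail the paper leaves implicit, namely why $\dim(\F_j^{(\ell)})>0$ guarantees that $\x_j$ is not itself one of those extreme points; this is a worthwhile clarification and is argued correctly.
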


Furthermore, we have the following result whose proof is analogous to the ``if'' part of Theorem~\ref{thm:interior-subspace-preserving}.
\begin{theorem}[\bf Subspace-preserving recovery for boundary points]
\label{thm:faces-subspace-preserving-sufficient}
For any $\x_j \in \ri (\F_{j}^{(\ell)})$ where $0 < \dim(\F_{j}^{(\ell)}) < d_\ell$,
every optimal solution to \eqref{eq:affine-ssc} is subspace-preserving if $\A_\ell$ does not intersect $\conv (\X^{(-\ell)})$.
\end{theorem}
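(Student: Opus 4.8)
The plan is to mirror the ``if'' direction of Theorem~\ref{thm:interior-subspace-preserving}, with the only change being that the role played there by Lemma~\ref{thm:interior-nonnegative} is now played by the immediately preceding lemma for boundary faces. Since the statement is only a sufficient (``if'') condition, I do not need to reproduce the ``only if'' construction.

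First I would pin down that \emph{every} optimal solution is nonnegative. The preceding lemma guarantees that, for $\x_j \in \ri(\F_j^{(\ell)})$ with $0 < \dim(\F_j^{(\ell)}) < d_\ell$, there is a nonnegative, subspace-preserving feasible point of \eqref{eq:affine-ssc} whose objective value equals $1$; this uses that $\x_j$ lies in the relative interior of the face and hence is a convex combination of the vertices of $\F_j^{(\ell)}$, all of which belong to $X_{-j}^{(\ell)}$. Combining this feasible value with the lower bound in Lemma~\ref{theorem:assc-bound} shows the optimal value is exactly $1$. For any optimal $\c^\ast$ we then have $\|\c^\ast\|_1 = 1$, while feasibility forces $\1^\top \c^\ast = 1$; since $\|\c^\ast\|_1 \ge \1^\top \c^\ast$ with equality if and only if $\c^\ast \ge \0$, I conclude $\c^\ast \ge \0$. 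This is precisely the observation that the $\ell_1$ regularizer becomes inactive under the affine constraint once a nonnegative optimizer exists.

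Next I would argue by contrapositive exactly as in Theorem~\ref{thm:interior-subspace-preserving}. Suppose some optimal $\c^\ast$ is not subspace-preserving, and split the indices into $\N = \{i : \x_i \in \A_\ell\}$ and $\O = \{i : \x_i \notin \A_\ell\}$, so that $\x_j - \sum_{i \in \N} c^\ast_i \x_i = \sum_{i \in \O} c^\ast_i \x_i$. Non-subspace-preservation together with $\c^\ast \ge \0$ forces $\sum_{i \in \O} c^\ast_i > 0$, so I may divide both sides by this sum. The left-hand side then has coefficients summing to $1$ over points of $\A_\ell$, hence is a point of $\A_\ell$; the right-hand side has nonnegative coefficients summing to $1$ over points outside $\A_\ell$, hence is a point of $\conv(\X^{(-\ell)})$. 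Equality of the two sides exhibits a common point, contradicting the hypothesis $\A_\ell \cap \conv(\X^{(-\ell)}) = \emptyset$.

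The only genuinely delicate step is the first one: I must make sure the preceding lemma really supplies a nonnegative \emph{optimal} solution, because it is this fact, via the identity $\|\c^\ast\|_1 = \1^\top \c^\ast = 1$, that upgrades ``there exists a nonnegative optimizer'' to ``every optimizer is nonnegative.'' Once nonnegativity of all optimizers is established, the separation argument is verbatim the interior-point case and introduces no new ideas; the hypothesis $0 < \dim(\F_j^{(\ell)})$ enters only to ensure $\x_j$ is a convex combination of other points of $\F_j^{(\ell)}$, which is what makes the preceding lemma applicable here.
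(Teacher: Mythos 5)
Your proof is correct and follows essentially the same route as the paper, which simply declares the argument ``analogous to the `if' part of Theorem~\ref{thm:interior-subspace-preserving}.'' In fact you make explicit a step the paper leaves implicit---deducing that \emph{every} optimizer is nonnegative from the optimal value being $1$ via $\|\c^\ast\|_1 \ge \1^\top \c^\ast$ with equality iff $\c^\ast \ge \0$---and the remainder (the split over $\N$ and $\O$, division by $\sum_{i\in\O}c_i^\ast > 0$, and the resulting point in $\A_\ell \cap \conv(\X^{(-\ell)})$) matches the paper's argument exactly.
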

Note that the sufficient condition in Theorem~\ref{thm:faces-subspace-preserving-sufficient} is the same as the equivalent condition in Theorem \ref{thm:interior-subspace-preserving}.
Nevertheless, this sufficient condition is no longer necessary.

We now derive an alternative correctness condition. Note that if $\x_j \in \ri (\F_{j}^{(\ell)})$, then $\x_j$ also lies in the affine subspace $\aff(\F_{j}^{(\ell)})$.
By treating $\aff(\F_{j}^{(\ell)})$ as a ``fictitious'' affine subspace\footnote{In the sense that it is not one of the affine subspaces $\{\A_{\ell}\}_{\ell=1}^n$}, 
we can apply Theorem~\ref{thm:interior-subspace-preserving} to derive condition under which the solution $\c_j$ to \eqref{eq:affine-ssc} is subspace-preserving with respect to $\aff(\F_{j}^{(\ell)})$, \ie, nonzero entries of $\c_j$ only correspond to data points lying in $\aff(\F_{j}^{(\ell)})$.
Concretely, we have the following result.

\begin{corollary}
\label{thm:faces-preserving}
Assume that $\x_j \in \ri (\F_{j}^{(\ell)})$ where $0 < \dim(\F_{j}^{(\ell)}) < d_\ell$.
Every optimal solution $\c_j$ to \eqref{eq:affine-ssc} is such that an entry of $\c_j$ is nonzero only if the corresponding data point lies in $\aff(\F_{j}^{(\ell)})$, if and only if $\aff(\F_{j}^{(\ell)})$
does not intersect $\conv (\X^{(-\F_{j}^{(\ell)})})$, where $\X^{(-\F_{j}^{(\ell)})}$ is a set containing all data points other than those on $\F_{j}^{(\ell)}$.
\end{corollary}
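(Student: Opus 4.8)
The plan is to reduce the statement to Theorem~\ref{thm:interior-subspace-preserving} by treating the affine hull $\A' := \aff(\F_j^{(\ell)})$ as a \emph{fictitious} affine subspace and regarding $\x_j$ as a relative interior point of the data drawn from it. The first step is to verify that $\x_j$ indeed plays the role of an interior point in this fictitious subspace. Since $\F_j^{(\ell)}$ is a proper face of the polytope $\Q^{(\ell)}=\conv(X^{(\ell)})$ with $0<\dim(\F_j^{(\ell)})<d_\ell$, it is an exposed face, i.e.\ $\F_j^{(\ell)}=\Q^{(\ell)}\cap H$ for some supporting hyperplane $H$, whence $\aff(\F_j^{(\ell)})\subseteq H$. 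Consequently every column of $X^{(\ell)}$ that lies in $\A'$ already lies on $\F_j^{(\ell)}$, and the extreme points of $\F_j^{(\ell)}$ are themselves columns of $X^{(\ell)}$. Hence the data points of $\A_\ell$ on the face coincide with $X^{(\ell)}\cap\A'$, their convex hull equals $\F_j^{(\ell)}$ itself, and the hypothesis $\x_j\in\ri(\F_j^{(\ell)})$ states exactly that $\x_j$ lies in the relative interior of the convex hull of the fictitious-subspace data. In particular, by the nonnegative construction of Lemma~\ref{thm:interior-nonnegative} together with Lemma~\ref{theorem:assc-bound}, the optimal value of \eqref{eq:affine-ssc} is $1$, so every optimal solution is nonnegative with $\1^\top\c=1$.

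With this identification I would then invoke both directions of Theorem~\ref{thm:interior-subspace-preserving}, with $\A_\ell$ replaced by $\A'$, the in-subspace data replaced by the columns on $\F_j^{(\ell)}$, and the ``other'' data replaced by $\X^{(-\F_j^{(\ell)})}$. For the ``if'' part I would argue by contraposition: if some optimal $\c^\ast$ charged mass on a point outside $\A'$, then its total weight on points off the face is nonzero, so dividing the identity $\x_j-\sum_{\text{face}}c^\ast_i\x_i=\sum_{\text{off-face}}c^\ast_i\x_i$ by that weight, exactly as in \eqref{eq:prf-interior-subspace-preserving-3}, exhibits a common point of $\A'=\aff(\F_j^{(\ell)})$ and $\conv(\X^{(-\F_j^{(\ell)})})$, contradicting separation. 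For the ``only if'' part, given $\z\in\A'\cap\conv(\X^{(-\F_j^{(\ell)})})$, the facts $\x_j\in\ri(\F_j^{(\ell)})$ and $\z\in\A'=\aff(\F_j^{(\ell)})$ place $\x_j$ in the relative interior of $\conv(\F_j^{(\ell)}\cup\{\z\})$; Lemma~\ref{lema:dense-convex-combination} then writes $\x_j$ as a strictly positive convex combination of the face points and $\z$, and substituting a convex representation of $\z$ in terms of $\X^{(-\F_j^{(\ell)})}$ yields a feasible nonnegative vector of objective value $1$, which is optimal by Lemma~\ref{theorem:assc-bound} yet places mass outside $\A'$.

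The main obstacle is the bookkeeping between the two notions ``on the face $\F_j^{(\ell)}$'' and ``in $\aff(\F_j^{(\ell)})$,'' so that the separation condition stated with $\X^{(-\F_j^{(\ell)})}$ (points \emph{not on the face}) is the one that Theorem~\ref{thm:interior-subspace-preserving} actually consumes (it separates $\A'$ from points \emph{not in} $\A'$). The supporting-hyperplane observation above settles this for columns of $X^{(\ell)}$, since such a column lies in $\A'$ only if it lies on the face; the delicate case is a data point from another subspace $\A_\kappa$ that happens to lie inside $\aff(\F_j^{(\ell)})$. Under the separation hypothesis $\A'\cap\conv(\X^{(-\F_j^{(\ell)})})=\emptyset$ no such point can exist, so the two formulations of the condition coincide and the ``if'' direction goes through cleanly; for the converse, the step requiring the most care is ensuring that the constructed non-preserving solution genuinely deposits weight \emph{outside} $\A'$, which is arranged by choosing the representation of $\z$ to involve a point of $\X^{(-\F_j^{(\ell)})}\setminus\A'$.
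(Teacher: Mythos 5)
Your overall strategy is exactly the paper's: the paper derives this corollary in a single sentence by treating $\aff(\F_j^{(\ell)})$ as a fictitious affine subspace and invoking Theorem~\ref{thm:interior-subspace-preserving}, and your write-up makes that reduction precise. The ``if'' direction is handled correctly: since every face of a polytope is exposed, a column of $X^{(\ell)}$ lies in $\aff(\F_j^{(\ell)})$ only if it lies on $\F_j^{(\ell)}$; the optimal value equals $1$, so every optimal solution is nonnegative; and the division argument of \eqref{eq:prf-interior-subspace-preserving-3} produces a point of $\aff(\F_j^{(\ell)}) \cap \conv(\X^{(-\F_j^{(\ell)})})$ whenever an optimal solution charges a point outside $\aff(\F_j^{(\ell)})$ (such a point is a fortiori not on the face). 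This is more careful than anything the paper writes down for this corollary.

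The gap is in the ``only if'' direction, at precisely the step you flag. You need the constructed optimal solution to put mass \emph{outside} $\aff(\F_j^{(\ell)})$, and you propose to arrange this by choosing the convex representation of $\z$ to involve a point of $\X^{(-\F_j^{(\ell)})}\setminus\aff(\F_j^{(\ell)})$. That is not always possible. If the only witnesses of the intersection are data points of $\X^{(-\F_j^{(\ell)})}$ that themselves lie in $\aff(\F_j^{(\ell)})$ --- e.g.\ a sample of another subspace sitting on the line $\aff(\F_j^{(\ell)})$ but off the segment $\F_j^{(\ell)}$ --- then $\z$ may admit only representations supported inside $\aff(\F_j^{(\ell)})$, and one can build configurations in which \emph{every} optimal solution is supported in $\aff(\F_j^{(\ell)})$ even though $\aff(\F_j^{(\ell)}) \cap \conv(\X^{(-\F_j^{(\ell)})}) \neq \emptyset$. (Concretely: take $\A_1=\{z=0\}\subset\RR^3$ with $X^{(1)}$ the triangle $(0,0,0),(1,0,0),(0,1,0)$ plus $\x_j=(\tfrac12,0,0)$ on the edge, and $\A_2$ the vertical line through $(5,0,0)$ with all its other samples at large $z$.) What your argument does deliver without modification is the ``face version'' of the statement: every optimal solution is supported on points lying \emph{on} $\F_j^{(\ell)}$ if and only if $\aff(\F_j^{(\ell)})\cap\conv(\X^{(-\F_j^{(\ell)})})=\emptyset$. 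The mismatch between ``not on the face'' in the condition and ``in the affine hull'' in the conclusion is a defect of the statement itself that the paper's one-line derivation also does not address; to obtain the corollary as literally stated you would need the additional assumption that no point of $\X^{(-\F_j^{(\ell)})}$ lies in $\aff(\F_j^{(\ell)})$, under which the two versions coincide.
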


Since the affine subspace $\aff(\F_{j}^{(\ell)})$ is a subset of $\A_\ell$, we further have the following corollary.
\begin{corollary}
\label{thm:faces-subspace-preserving}
For any $\x_j \in \ri (\F_{j}^{(\ell)})$ where $0 < \dim(\F_{j}^{(\ell)}) < d_\ell$,
every optimal solution to \eqref{eq:affine-ssc} is subspace-preserving if $\aff(\F_{j}^{(\ell)})$
does not intersect $\conv (\X^{(-\F_{j}^{(\ell)})})$.
\end{corollary}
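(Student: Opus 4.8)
The plan is to derive this corollary directly from the preceding Corollary~\ref{thm:faces-preserving} by a containment argument, with essentially no computation. First I would invoke the ``if'' direction of Corollary~\ref{thm:faces-preserving}: under the hypothesis that $\aff(\F_{j}^{(\ell)})$ does not intersect $\conv(\X^{(-\F_{j}^{(\ell)})})$, every optimal solution $\c_j$ to \eqref{eq:affine-ssc} has the property that an entry of $\c_j$ is nonzero only if the corresponding data point lies in $\aff(\F_{j}^{(\ell)})$. Since the hypothesis of the present corollary is exactly the no-intersection condition appearing in that equivalence, this conclusion applies immediately.

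The one geometric fact I would then record is the containment $\aff(\F_{j}^{(\ell)}) \subseteq \A_\ell$. This holds because $\F_{j}^{(\ell)}$ is a face of $\Q^{(\ell)} = \conv(X^{(\ell)})$, so $\F_{j}^{(\ell)} \subseteq \Q^{(\ell)}$, and every column of $X^{(\ell)}$ lies in the affine subspace $\A_\ell$; hence $\F_{j}^{(\ell)} \subseteq \A_\ell$. Because $\A_\ell$ is itself an affine subspace and is therefore closed under affine combinations, the affine hull of any subset of $\A_\ell$ remains inside $\A_\ell$, giving $\aff(\F_{j}^{(\ell)}) \subseteq \A_\ell$.

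Combining these two observations finishes the argument: every data point associated with a nonzero coefficient of $\c_j$ lies in $\aff(\F_{j}^{(\ell)})$, and by the containment it therefore lies in $\A_\ell$, which is precisely the statement that $\c_j$ is subspace-preserving with respect to $\A_\ell$. The only point requiring care is conceptual rather than technical: one must recognize that being subspace-preserving with respect to the smaller ``fictitious'' subspace $\aff(\F_{j}^{(\ell)})$ is a \emph{stronger} property than being subspace-preserving with respect to $\A_\ell$, so the weaker desired conclusion follows a fortiori. There is no genuine obstacle here; the result is a direct specialization of Corollary~\ref{thm:faces-preserving} obtained by passing from the fictitious subspace $\aff(\F_{j}^{(\ell)})$ to the ambient subspace $\A_\ell$ that contains it.
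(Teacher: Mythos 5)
Your proposal is correct and matches the paper's own reasoning: the paper derives Corollary~\ref{thm:faces-subspace-preserving} directly from Corollary~\ref{thm:faces-preserving} by noting that $\aff(\F_{j}^{(\ell)})$ is a subset of $\A_\ell$, exactly as you do. The containment argument and the a fortiori step are both sound, so there is nothing to add.
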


For a comparison of Corollary~\ref{thm:faces-subspace-preserving} and Theorem~\ref{thm:faces-subspace-preserving-sufficient}, note that if $\A_{\ell} \cap \conv (\X^{(-\ell)}) = \emptyset$ then we have $\aff(\F_{j}^{(\ell)}) \cap \conv (\X^{(-\F_{j}^{(\ell)})}) = \emptyset$ (see Appendix for a proof). This shows that Corollary~\ref{thm:faces-subspace-preserving} is a tighter 
result than Theorem~\ref{thm:faces-subspace-preserving-sufficient}.
On the other hand, the condition in Theorem~\ref{thm:faces-subspace-preserving-sufficient} is characterized by the separation between one subspace $\A_{\ell}$ and points from other subspaces, making it more interpretable from the perspective of subspace clustering.

\subsubsection{\bf \emph{Correctness of ASSC for Extreme Points}}
\label{sec:correctness-case3}

An extreme point $\x_j \in \ex(\Q^{(\ell)})$ of a subspace can no longer be expressed as a convex combination of other data points from its own subspace.
Therefore, any feasible solution $\c_j$ to \eqref{eq:affine-ssc} that is subspace-preserving must be an affine combination but not a convex combination, i.e. $\c_j$ must have negative entries.
Such a representation has its $\ell_1$ norm strictly greater than $1$, and therefore does not achieve the lower bound of the objective value of \eqref{eq:affine-ssc} as 
in Lemma~\ref{theorem:assc-bound}.
Consequently, we can no longer derive results parallel to those in Lemma~\ref{thm:interior-nonnegative} and Theorem \ref{thm:interior-subspace-preserving}.
Nevertheless, we note that subspace-preserving recovery for extreme points can still be guaranteed under the affine independence assumption from Theorem~\ref{theorem:affine-independent-sufficient} or the separation condition in Theorem~\ref{theorem:ASDP-HE2}.

On the other hand, there are scenarios where we can assert that the optimal solution is not subspace-preserving based on the above observations.
In particular, if any feasible solution $\c_j$ to \eqref{eq:affine-ssc} is nonnegative, then by the affine constraint we have $\|\c_j\|_1 = 1$, which is lower than any subspace-preserving solution. Consequently, 
the optimal solution to \eqref{eq:affine-ssc} is not subspace-preserving. Formally, we have the following result.
\begin{theorem}
For $\x_j \in \ex(\Q^{(\ell)})$, 
if 
there exists a feasible solution $\c_j$ to \eqref{eq:affine-ssc} that is nonnegative, then the optimal solution to \eqref{eq:affine-ssc} is not subspace-preserving.	
\label{thm:extreme-point-subspace-preserving-necessary}
\end{theorem}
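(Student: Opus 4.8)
The plan is to exploit the objective-value lower bound from Lemma~\ref{theorem:assc-bound} together with the defining property of an extreme point, via a two-part norm-comparison argument. First I would show that under the hypothesis the optimal value of \eqref{eq:affine-ssc} is exactly $1$; then I would show that every subspace-preserving feasible solution has objective value strictly greater than $1$. Since no subspace-preserving solution can then attain the optimum, every optimal solution is not subspace-preserving.

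For the first part, let $\c_j$ be the nonnegative feasible solution guaranteed by hypothesis. Because all its entries are nonnegative and it satisfies the affine constraint $\1^\top \c_j = 1$, we have $\|\c_j\|_1 = \sum_i c_{ij} = \1^\top \c_j = 1$. By Lemma~\ref{theorem:assc-bound} the optimal value of \eqref{eq:affine-ssc} is at least $1$; since $\c_j$ is feasible with objective value exactly $1$, it is optimal and the optimal value equals $1$.

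For the second part, suppose $\c^\ast$ is any subspace-preserving feasible solution. Then its nonzero entries index only points $\x_i \in \A_\ell$, so $\c^\ast$ writes $\x_j$ as an affine combination of columns of $X_{-j}^{(\ell)}$. If $\c^\ast$ were nonnegative, this affine combination would be a convex combination, expressing the extreme point $\x_j \in \ex(\Q^{(\ell)})$ as a convex combination of other columns of $X^{(\ell)}$---contradicting the defining property of an extreme point recalled just above. Hence $\c^\ast$ must have at least one negative entry, so $\|\c^\ast\|_1 = \sum_i |c^\ast_i| > \sum_i c^\ast_i = \1^\top \c^\ast = 1$. Thus every subspace-preserving feasible solution has objective value strictly above the optimal value $1$ and cannot be optimal, so the optimal solution to \eqref{eq:affine-ssc} is not subspace-preserving.

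The argument is essentially a comparison of $\ell_1$ norms against the affine-constraint lower bound, so I do not anticipate a serious obstacle. The only point requiring care is the clean separation of the two roles: the affine constraint forces $\1^\top \c = 1$ for any feasible $\c$, while the extreme-point property forbids a \emph{nonnegative} subspace-preserving representation. The crux is recognizing that the hypothesized nonnegative solution---which necessarily recruits points from other subspaces, precisely because $\x_j$ is extreme within its own subspace---pins the optimal value to $1$, a value no subspace-preserving representation can match.
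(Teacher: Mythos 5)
Your proof is correct and follows essentially the same route as the paper: a nonnegative feasible solution attains the lower bound $\|\c_j\|_1 = \1^\top\c_j = 1$ from Lemma~\ref{theorem:assc-bound}, while any subspace-preserving representation of an extreme point must contain a negative entry and hence has $\ell_1$ norm strictly greater than $1$. The only difference is presentational—you spell out the two norm comparisons more explicitly than the paper does.
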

Theorem \ref{thm:extreme-point-subspace-preserving-necessary} states that any nonnegative solution $\c_j$ for $\x_j \in \ex(\Q^{(\ell)})$ to \eqref{eq:affine-ssc} is not subspace-preserving.
We now provide a concrete example where Theorem~\ref{thm:extreme-point-subspace-preserving-necessary} applies.
Consider any data point that is an extreme point of one of the subspaces but is not an extreme point of $\conv(\X)$, \ie, any $\x_j \in \cup_{\ell=1}^n \ex(\Q^{(\ell)}) \backslash \ex(\conv (\X))$ (if such set is nonempty), where $\X$ is the set of all columns of $X$.
Then, there exists a convex combination of points in $X_{-j}$ for $\x_j$, which thus serves a feasible solution $\c_j$ that is nonnegative.
According to Theorem~\ref{thm:extreme-point-subspace-preserving-necessary}, the optimal solution $\c_j$ to \eqref{eq:affine-ssc} is not subspace-preserving. More precisely, we have the following result.

\begin{corollary}
\label{thm:extreme-point-at-middle-necessary}
Any optimal solution to \eqref{eq:affine-ssc} for a data point $\x_j \in \cup_{\ell=1}^n \ex(\Q^{(\ell)} ) \backslash \ex(\conv ( \X ))$ is not subspace-preserving.
\end{corollary}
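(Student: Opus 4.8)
The plan is to reduce Corollary~\ref{thm:extreme-point-at-middle-necessary} to the already-established Theorem~\ref{thm:extreme-point-subspace-preserving-necessary}. That theorem says: if $\x_j \in \ex(\Q^{(\ell)})$ and there \emph{exists} a feasible nonnegative solution to \eqref{eq:affine-ssc}, then no optimal solution is subspace-preserving. Since our point already satisfies $\x_j \in \ex(\Q^{(\ell)})$ by hypothesis, the entire task collapses to exhibiting one feasible nonnegative solution, and this is exactly what the condition $\x_j \notin \ex(\conv(\X))$ should supply.

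First I would unpack the assumption $\x_j \notin \ex(\conv(\X))$. Note that $\x_j$ is genuinely one of the data columns: being an extreme point of $\Q^{(\ell)}$ forces $\x_j$ to be a vertex of the convex hull of the columns of $X^{(\ell)}$, hence $\x_j \in \X$. Now $\conv(\X)$ is a polytope generated by the finite set $\X$, so its vertex set $\ex(\conv(\X))$ is a subset of $\X$, and every point of $\conv(\X)$ is a convex combination of these vertices. Because $\x_j$ is assumed \emph{not} to be a vertex of $\conv(\X)$, the vertex set is contained in $\X \setminus \{\x_j\}$, i.e.\ among the columns of $X_{-j}$. Consequently $\x_j$ admits a representation $\x_j = \sum_{i \neq j} c_{ij}\x_i$ with $c_{ij} \ge 0$ and $\sum_{i \neq j} c_{ij} = 1$, using only data points other than itself.

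Next I would verify that the coefficient vector $\c_j$ just built is feasible for \eqref{eq:affine-ssc}: it satisfies $\x_j = X\c_j$, obeys $c_{jj} = 0$, and meets the affine constraint $\1^\top \c_j = 1$, while being nonnegative by construction. The hypotheses of Theorem~\ref{thm:extreme-point-subspace-preserving-necessary} are therefore met, and invoking that theorem immediately yields that the optimal solution to \eqref{eq:affine-ssc} is not subspace-preserving, which finishes the argument.

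As for the main obstacle, there is essentially no hard analytic step here, since all the heavy lifting is carried by Theorem~\ref{thm:extreme-point-subspace-preserving-necessary}. The one point that deserves care is the first step—ensuring that the convex combination representing $\x_j$ draws only on the \emph{other} data points and excludes $\x_j$ itself. This is precisely where I would use that $\x_j$ fails to be a vertex of $\conv(\X)$, so that it lies in the convex hull of the remaining vertices; I would state this explicitly rather than merely asserting that $\x_j$ ``is not extreme,'' to make clear why $\x_j$ itself need not appear in the representation and hence why the constraint $c_{jj}=0$ is respected.
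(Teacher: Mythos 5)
Your proposal is correct and follows essentially the same route as the paper: the paper likewise observes that $\x_j \notin \ex(\conv(\X))$ yields a convex combination of $\x_j$ using only points of $X_{-j}$, which is a nonnegative feasible solution, and then invokes Theorem~\ref{thm:extreme-point-subspace-preserving-necessary}. Your added care in noting that the vertex set of $\conv(\X)$ is contained in $\X\setminus\{\x_j\}$ (so the representation respects $c_{jj}=0$) is a welcome explicit justification of a step the paper leaves implicit.
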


Although extreme points are less likely to have subspace-preserving solutions, we argue that such points can be very few relative to the entire data in practice.
Specifically, assume that the data points in subspace $\A_\ell$ are drawn from a Gaussian distribution.
It is shown \cite{DonohoTanner:AMS09} that the number of the extreme points of $\Q^{(\ell)}$ over all data points from $\A_{\ell}$ is $c (\log {N_\ell} )^{d_\ell-1 \over 2}$ as $N_\ell \to \infty$, 
where $c$ is a constant related to $d_\ell$. Therefore, the percentage of extreme points diminishes for any fixed $d_\ell$ when $N_\ell \to \infty$. 

\begin{figure*}[ht]
	\vspace{-0mm}
	\centering
	\subfigure[]{\includegraphics[clip=true,trim=0 0 0 0,width=0.385\columnwidth]{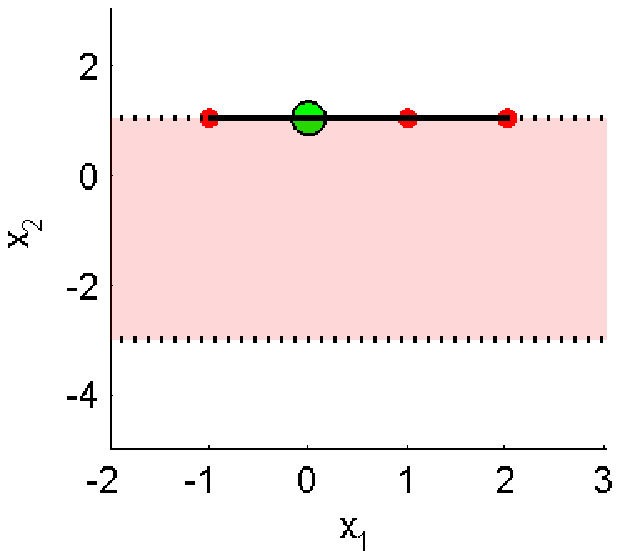}}
	\subfigure[]{\includegraphics[clip=true,trim=0 0 0 0,width=0.385\columnwidth]{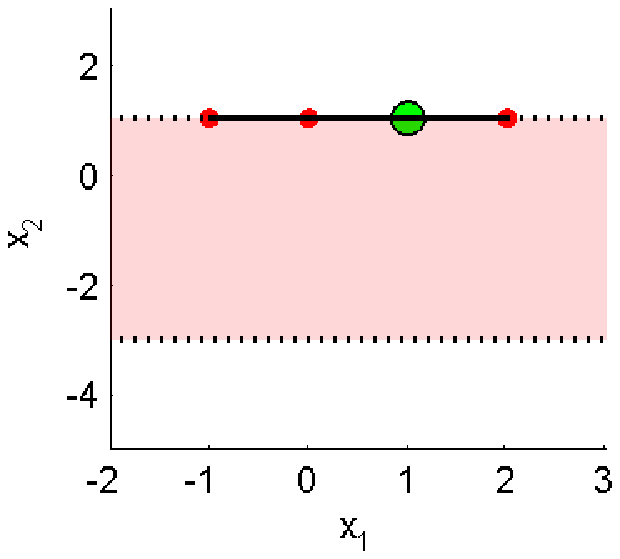}}
	\subfigure[]{\includegraphics[clip=true,trim=0 0 0 0,width=0.385\columnwidth]{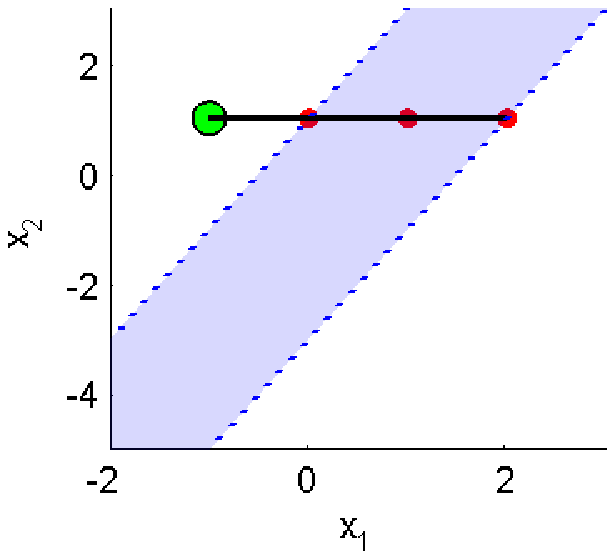}} 
	\subfigure[]{\includegraphics[clip=true,trim=0 0 0 0,width=0.385\columnwidth]{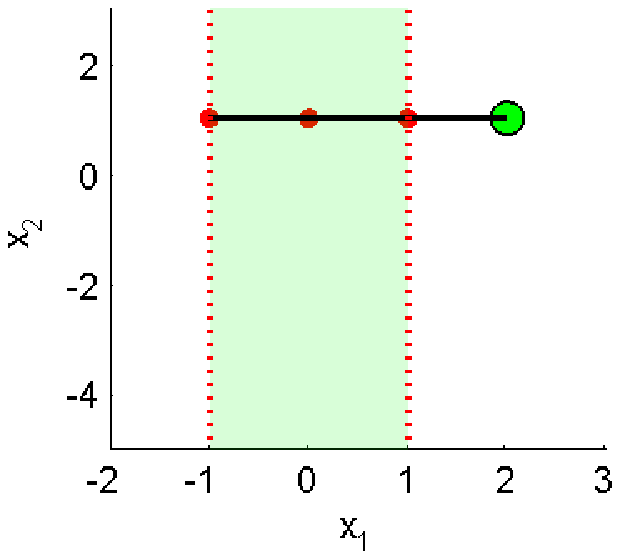}}
	\subfigure[]{\includegraphics[clip=true,trim=0 0 0 0,width=0.385\columnwidth]{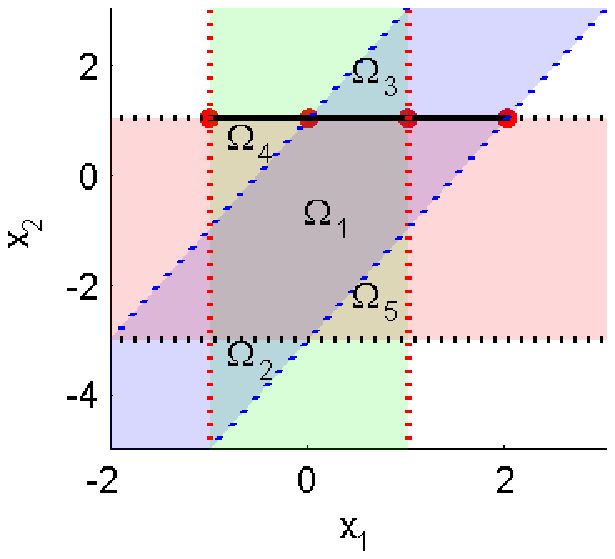}}
	\caption{Illustration for the feasible regions to guarantee subspace-preserving solutions. Panels (a) and (b) are for the interior points $\x_2=(0,1)$ and $\x_3=(1,1)$. Panels (c) and (d) are for the extreme points $\x_1=(-1, 1)$ and $\x_4=(2,1)$. Panel (e) is the feasible region taking into account for the four points.}
	\label{fig:example-feasible-region-for-subspace-preserving-solution}
	\vspace{-5mm}
\end{figure*}

\subsection{Connectivity and Provably Correct Clustering}
\label{sec:subspace-dense-solution-connectivity}

The connectivity issue \cite{Nasihatkon:CVPR11,Wang:AISTAT16} refers to the problem that even if the solution $\c_j$ is subspace-preserving for all $j \in \{1, \cdots, N\}$, there is no guarantee that each cluster form a connected component and therefore the final clustering assignment (\ie, the output of spectral clustering) may over-segment points from the same cluster into multiple clusters.
SSC is particularly prone to suffer from the connectivity issue as its affinity matrix has relatively few nonzero entries due to the sparsity of the solutions.
In fact, studies in \cite{Nasihatkon:CVPR11} showed that such over-segmentation\footnote{To address the connectivity issue, \cite{Wang:AISTAT16} proposed a post-processing procedure and proved its correctness. However, such an approach is not reliable in practice as it is very sensitive to erroneous connections in the data affinity.} can indeed happen for SSC when the dimension of subspace is greater than or equal to $4$.

It may be tempting to postulate that ASSC also suffers from the connectivity issue since it looks for sparse solutions.
However, we have a curious observation that the solution to \eqref{eq:affine-ssc} may be \emph{dense}.
Specifically, consider the case where $\x_j \in \ri (\Q^{(\ell)})$.
According to Lemma~\ref{thm:interior-nonnegative}, there exist a subspace-preserving solution to \eqref{eq:affine-ssc} that has objective value $1$.
Therefore, any subspace-preserving solution $\c_j$ to \eqref{eq:affine-ssc} will be such that $\| \c_j \|_1 = 1$.
This suggests that the regularization $\| \c_j \|_1$ in the objective of \eqref{eq:affine-ssc} no longer has any effect in selecting sparse solutions from all such subspace-preserving solutions.
In fact, we can show that \eqref{eq:affine-ssc} may produce a solution that has nonzero entries corresponding to all other data points in $\A_{\ell}$. More precisely, we have the following result whose proof can be found in the Appendix.

\begin{theorem}[\bf Existence of subspace-dense solutions]
	For data point $\x_j \in \ri (\Q^{(\ell)})$, there always exists an optimal solution $\c_j$ to \eqref{eq:affine-ssc} that is \emph{subspace-dense}, \ie, $\c_j$ is subspace-preserving and the entries of $\c_j$ corresponding to all data points in $\A_{\ell}$ other than $\x_j$ 
are nonzero.
	\label{thm:subspace-dense}
\end{theorem}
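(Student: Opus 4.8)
The plan is to leverage the fact that, for $\x_j \in \ri(\Q^{(\ell)})$, the optimal value of \eqref{eq:affine-ssc} is exactly $1$: the lower bound comes from Lemma~\ref{theorem:assc-bound}, and Lemma~\ref{thm:interior-nonnegative} exhibits a subspace-preserving feasible point attaining it. Consequently, \emph{any} feasible nonnegative solution supported in $\A_\ell$ has $\ell_1$ norm equal to $1$ and is therefore automatically optimal. The whole task then reduces to producing \emph{one} convex combination that expresses $\x_j$ using \emph{all} of the columns of $X_{-j}^{(\ell)}$ with strictly positive weights; zero-padding such a combination yields the desired subspace-dense optimal solution.

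First I would record that removing the relative-interior point $\x_j$ from the generating set does not change the convex hull, so that $\Q^{(\ell)} = \conv(\X^{(\ell)}) = \conv(\X_{-j}^{(\ell)})$ and hence $\x_j \in \ri(\conv(\X_{-j}^{(\ell)}))$. I would then invoke the relative-interior characterization already employed in the ``only if'' direction of Theorem~\ref{thm:interior-subspace-preserving} (Lemma~\ref{lema:dense-convex-combination}): a point lying in the relative interior of the convex hull of a finite set admits a convex representation in which every coefficient is strictly positive. Writing $\N := \{i : \x_i \in \A_\ell\}$, this gives
\begin{equation*}
\x_j = \sum_{i \in \N \setminus \{j\}} \tilde c_i\, \x_i, \qquad \tilde c_i > 0, \quad \sum_{i \in \N \setminus \{j\}} \tilde c_i = 1 .
\end{equation*}

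Finally I would assemble the length-$N$ vector $\c_j$ carrying the weights $\tilde c_i$ on the indices $\N \setminus \{j\}$ and zeros elsewhere, and verify the four required properties in turn: feasibility ($X\c_j = \x_j$, $c_{jj}=0$, $\1^\top \c_j = 1$), subspace-preservation (the support lies entirely in $\A_\ell$), optimality ($\|\c_j\|_1 = \sum_i \tilde c_i = 1$ by nonnegativity, matching the optimal value), and subspace-density (every $\tilde c_i > 0$, so every entry of $\c_j$ indexed by a point of $\A_\ell$ other than $\x_j$ is nonzero). The only nonroutine ingredient is the strict-positivity of all coefficients, which is exactly the content of the relative-interior lemma and is therefore available; once that lemma is granted, the rest of the argument is direct verification, so I do not anticipate any genuine obstacle beyond correctly reducing to $\conv(\X_{-j}^{(\ell)})$.
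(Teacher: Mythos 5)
Your proposal is correct and follows essentially the same route as the paper: the paper's own (very terse) proof also reduces to applying Lemma~\ref{lema:dense-convex-combination} to express $\x_j$ as a strict convex combination of the columns of $X_{-j}^{(\ell)}$, with optimality following from the objective value being $1$ as in Lemma~\ref{thm:interior-nonnegative}. Your write-up merely makes explicit two steps the paper leaves implicit, namely that $\x_j \in \ri(\conv(\X_{-j}^{(\ell)}))$ after removing $\x_j$ from the generating set, and the final verification that the zero-padded vector is feasible and attains the optimal value.
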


The subspace-dense solutions will properly improve the connectivity of the affinity graph, which is a property that spectral clustering desires. Unfortunately, Theorem~\ref{thm:subspace-dense} does not assert that all optimal solutions are subspace-dense.
In fact, there always exist solutions that are much sparser than a subspace-dense solution (assume $N_\ell \gg d_\ell$) since any data point $\x_j \in \ri(\Q^{(\ell)})$ can always be expressed as a convex combination of at most $d_\ell + 1$ other points from $\A_{\ell}$.
Therefore, whether a solution $\c_j$ is dense or not depends on the specific optimization algorithm for solving \eqref{eq:affine-ssc}.
In particular, if the solver to \eqref{eq:affine-ssc} is the alternating direction method of multipliers (ADMM)~\cite{Boyd:FTML10, Elhamifar:TPAMI13}, we can show that the \textit{shrinkage thresholding operation} in ADMM due to the $\ell_1$ norm will disappear 
when the optimal solution is nonnegative. In this case, it is possible that the optimal solution $\c_j$ is dense. 
We provide details of our analysis in the online Supporting Material.
In our experiments, we have observed that ADMM usually generates dense solutions for data points lying in the relative interior, \ie, $\x_j \in \ri(\Q^{(\ell)})$.

Note that for $\x_j \in \ri(\Q^{(\ell)})$ if we find a solution $\c_j$ to \eqref{eq:affine-ssc} that is subspace-dense, then it is sufficient to guarantee the connectivity of the data affinity of the data points from $\A_{\ell}$. Intuitively, the subspace-dense solutions of the relative interior points will ``pull'' all other data points, \eg, the extreme points and the boundary points, that have sparse solutions, and make the affinity graph associating all data points in the same subspace connected. We call this phenomenon a \emph{collaborative effect}. Specifically, we have the following result.
\begin{corollary}[\bf Collaborative effect] Consider $\x_j \in \A_\ell$. Suppose that $\c_j$ is a subspace-dense solution of $\x_j$ to \eqref{eq:affine-ssc}.
Then, the affinity graph induced via $|c_{ij}| + |c_{ji}|$ associating all data points $\{\x_i\}_{i \neq j}$ in subspace $\A_\ell$ is connected.
\label{corollary:collaborative-effect}
\end{corollary}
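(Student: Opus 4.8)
The plan is to convert the algebraic statement ``$\c_j$ is subspace-dense'' directly into a statement about edges of the affinity graph, and then invoke the elementary fact that a graph containing a spanning star is connected. The key observation is that a single subspace-dense solution already makes $\x_j$ a \emph{hub}: it is adjacent to every other data point lying in $\A_\ell$, so connectivity of the induced subgraph follows for free, with $\x_j$ playing the role of the center.

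Concretely, I would proceed in three short steps. First, I would recall the definition of subspace-density furnished by Theorem~\ref{thm:subspace-dense}: the solution $\c_j$ is subspace-preserving, and moreover every entry $c_{ij}$ corresponding to a data point $\x_i \in \A_\ell$ with $i \neq j$ is nonzero. Second, I would translate this into edge weights of the affinity graph. For each such $i$, the weight associated with the pair $(\x_i,\x_j)$ is $|c_{ij}| + |c_{ji}| \ge |c_{ij}| > 0$, so there is an edge between $\x_i$ and $\x_j$ in the affinity graph. Hence $\x_j$ is adjacent to \emph{all} other data points in $\A_\ell$. Third, I would conclude: the subgraph induced on the vertex set $\{\x_i : \x_i \in \A_\ell\}$ therefore contains a spanning star centered at $\x_j$, and a spanning star is connected; adding any further edges (e.g.\ those contributed by the other columns of the coefficient matrix, or the symmetric term $|c_{ji}|$) can only preserve connectivity. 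This establishes the claim.

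There is essentially no analytic obstacle here, since subspace-density is exactly the hypothesis needed to guarantee that $\x_j$ connects to everything in its own subspace. The only point requiring care is interpretive rather than technical: I would make sure the vertex set over which connectivity is asserted \emph{includes} the hub $\x_j$, because the connections among the remaining points $\{\x_i\}_{i\neq j}$ in $\A_\ell$ are realized \emph{through} $\x_j$ rather than by direct edges between them. With the hub included, the star argument applies verbatim and no distributional or separation assumption on the data is needed; the corollary rests solely on the density pattern guaranteed by Theorem~\ref{thm:subspace-dense}.
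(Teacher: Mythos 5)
Your proposal is correct and follows essentially the same route as the paper's own proof: subspace-density gives $|c_{ij}|+|c_{ji}|\ge|c_{ij}|>0$ for every other point $\x_i\in\A_\ell$, so $\x_j$ is a hub adjacent to all of them and the induced subgraph is connected via this spanning star. Your explicit remark that the hub $\x_j$ must be counted among the vertices is a sensible clarification but does not change the argument.
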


By combining the guarantees on correctness and connectivity, we have that ASSC provably produces correct clustering.
\begin{corollary}[\bf Correct clustering] Suppose that all solutions $\{\c_j\}_{j=1}^N$ to \eqref{eq:affine-ssc} are subspace-preserving and there exists at least one 
solution $\c_j$ for each subspace $\A_{\ell}$ that is subspace-dense, then ASSC produces correct clustering.
\label{corollary:correct-clustering}
\end{corollary}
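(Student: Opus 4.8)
The plan is to show that the affinity graph with weights $w_{ij} := |c_{ij}| + |c_{ji}|$ breaks into exactly $n$ connected components, one per subspace, and then to invoke the standard fact that spectral clustering exactly recovers the connected components of a graph. The proof thus combines the two guarantees already available: subspace-preservation eliminates every inter-subspace edge, while the existence of a subspace-dense solution in each subspace forces the corresponding intra-subspace subgraph to be connected.

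First I would use the subspace-preserving hypothesis to argue that the affinity is block diagonal. Since every $\c_j$ is subspace-preserving, $c_{ij}=0$ whenever $\x_i$ and $\x_j$ lie in different subspaces, so $w_{ij}=|c_{ij}|+|c_{ji}|=0$ for all such pairs. After permuting indices by subspace membership, the affinity matrix is block diagonal with one block per subspace, and the affinity graph decomposes into $n$ disjoint subgraphs $G_1,\dots,G_n$, where the vertices of $G_\ell$ are exactly the points of $\A_\ell$. Next I would show each $G_\ell$ is connected: by assumption there is some $\x_j\in\A_\ell$ whose solution $\c_j$ is subspace-dense, hence $c_{ij}\ne 0$ and $w_{ij}>0$ for every other point $\x_i\in\A_\ell$. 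Thus $\x_j$ is adjacent to every other vertex of $G_\ell$, so any two vertices are joined by a path through $\x_j$ and $G_\ell$ is connected; this is exactly the content of Corollary~\ref{corollary:collaborative-effect}. Together these two steps show that the affinity graph has precisely $n$ connected components whose vertex sets coincide with the true subspace partition.

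Finally I would invoke the spectral-clustering recovery guarantee \cite{vonLuxburg:StatComp2007}: for a graph with $n$ connected components, the null space of the graph Laplacian has dimension $n$ and is spanned by the component indicator vectors, so spectral clustering run with $n$ clusters returns exactly those components. Since each component equals the data of one subspace, every point is assigned to its correct subspace and ASSC produces correct clustering. I expect the only place requiring care to be this last appeal: the recovery guarantee presumes spectral clustering is run with the right number of clusters $n$ and that the affinity has the ideal block-diagonal, internally-connected structure. The first two steps supply exactly this idealized structure, so no eigengap or perturbation analysis is needed and the conclusion is immediate; this invocation is therefore routine rather than a genuine obstacle.
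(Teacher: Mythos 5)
Your proposal is correct and follows exactly the route the paper intends: the subspace-preserving hypothesis makes the affinity block diagonal, Corollary~\ref{corollary:collaborative-effect} (the collaborative effect) makes each block's subgraph connected via the subspace-dense point, and the standard spectral-clustering guarantee for graphs with exactly $n$ connected components finishes the argument. The paper gives no separate proof for this corollary precisely because it is this immediate combination, so your write-up matches the intended argument.
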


Furthermore, when the collection of affine subspaces $\{\A_{\ell}\}_{\ell=1}^n$ is affinely independent, then ASSC is guaranteed to produce correct clustering as long as at least one subspace-dense solution 
for each affine subspace $\A_{\ell}$ to \eqref{eq:affine-ssc} is obtained.

\begin{corollary}[\bf Provably correct clustering under affine independence assumption]
If the collection of affine subspaces $\{\A_{\ell}\}_{\ell=1}^n$ is affinely independent and for each affine subspace $\A_{\ell}$ there exist at least one data point $\x_j$ that lies in the relative interior of $\Q^{(\ell)}$, then there always exist a set of solutions $\{\c_j\}_{j=1}^N$ to \eqref{eq:affine-ssc} such that spectral clustering on the affinity induced via  $|c_{ij}| + |c_{ji}|$ 
produces correct clustering.
\label{corollary:correct-clustering-under-affinely-independence-assumption}
\end{corollary}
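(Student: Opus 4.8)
The plan is to assemble the conclusion from three ingredients already in hand: the subspace-preserving guarantee under affine independence (Theorem~\ref{theorem:affine-independent-sufficient}), the existence of a subspace-dense optimal solution at each relative interior point (Theorem~\ref{thm:subspace-dense}), and the packaging of these two properties into correct clustering (Corollary~\ref{corollary:correct-clustering}). Concretely, the strategy is to exhibit one particular admissible set of optimal solutions $\{\c_j\}_{j=1}^N$ that simultaneously satisfies \emph{both} hypotheses of Corollary~\ref{corollary:correct-clustering}, and then invoke that corollary.

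First I would verify the rank/span condition $\dim(\aff(X_{-j}^{(\ell)})) = d_\ell$ required by Theorem~\ref{theorem:affine-independent-sufficient} for every column $j$. This follows from the standing sampling assumption of Section~\ref{sec:affine-geometry} that any $N_\ell - 1$ points of $\A_\ell$ affinely span $\A_\ell$. With this in place, the hypothesis that $\{\A_\ell\}_{\ell=1}^n$ is affinely independent lets me apply Theorem~\ref{theorem:affine-independent-sufficient} to conclude that \emph{every} optimal solution to \eqref{eq:affine-ssc} is subspace-preserving, for every $\x_j$. This immediately discharges the first hypothesis of Corollary~\ref{corollary:correct-clustering}: no matter how the $\{\c_j\}$ are selected among optimizers, the induced affinity carries no cross-subspace connections.

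Next I would arrange the second hypothesis. For each $\ell$, the assumption supplies a data point $\x_{j_\ell} \in \ri(\Q^{(\ell)})$; applying Theorem~\ref{thm:subspace-dense} at this point yields an optimal solution $\c_{j_\ell}$ that is subspace-dense. I would select this subspace-dense optimizer for the column $j_\ell$, and for every remaining column select any optimal solution (each being subspace-preserving by the previous step). The resulting set $\{\c_j\}_{j=1}^N$ then contains, for each $\A_\ell$, at least one subspace-dense solution, so the second hypothesis of Corollary~\ref{corollary:correct-clustering} is met as well. Applying Corollary~\ref{corollary:correct-clustering} to this chosen set gives that spectral clustering on the affinity $|c_{ij}| + |c_{ji}|$ produces correct clustering; internally this rests on the collaborative effect (Corollary~\ref{corollary:collaborative-effect}), whereby the single subspace-dense solution in each $\A_\ell$ connects all of that subspace's points into one component.

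The only step requiring genuine care---the closest thing to an obstacle---is the legitimacy of mixing a \emph{designated} subspace-dense optimizer at $\x_{j_\ell}$ with \emph{arbitrary} optimizers at the other columns. This is justified because \eqref{eq:affine-ssc} is solved independently for each column, so any columnwise combination of optimizers is itself an admissible solution set; and affine independence, via Theorem~\ref{theorem:affine-independent-sufficient}, guarantees that all such choices remain subspace-preserving. Hence both hypotheses of Corollary~\ref{corollary:correct-clustering} hold for the constructed set, and the existence claim---``there always exist a set of solutions''---follows at once.
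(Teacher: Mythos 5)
Your proposal is correct and follows essentially the same route the paper intends: combine Theorem~\ref{theorem:affine-independent-sufficient} (every optimizer is subspace-preserving under affine independence, with the spanning condition supplied by the sampling assumption), Theorem~\ref{thm:subspace-dense} (a subspace-dense optimizer exists at each designated relative-interior point), and Corollary~\ref{corollary:correct-clustering} to conclude. Your explicit justification that optimizers may be chosen columnwise is a sound and welcome clarification of a step the paper leaves implicit.
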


When dealing with data affinity not strictly subspace-preserving, \ie, there are erroneous connections, spectral clustering is robust to those wrong connections and still give correct final clustering result. If subspace-dense solutions are found, 
the properly improved connectivity of the induced affinity graph will thus make spectral clustering more robust to wrong connections \cite{You:CVPR16-EnSC}.

\section{Experimental Evaluations}
\label{sec:experiments}

In this section, we conduct a set of experiments on carefully designed synthetic datasets to illustrate our novel geometric conditions for ASSC in Section~\ref{sec:novel-analysis-AASC}. We also test the performance of ASSC on several real world databases.

\subsection{Illustration of Theorem~\ref{theorem:ASDP-HE2}}
\label{sec:numerical-experiments}

Recall that Theorem~\ref{theorem:ASDP-HE2} is derived from applying existing results for SSC to homogeneously embedded data.	
Here, we use a numerical example to illustrate the sufficient condition in Theorem~\ref{theorem:ASDP-HE2} and to compare that with our novel geometric analysis in Section~\ref{sec:novel-analysis-AASC}.
Consider the case where there are two affine subspaces $\A_1$ and $\A_2$ in the plane $\RR^2$, where $\A_1$ is the line that passes through the following four sample points (see Fig.~\ref{fig:example-feasible-region-for-subspace-preserving-solution} (a) for an illustration):
\begin{align}
	X^{(1)} = \begin{bmatrix} \x_1~\x_2~\x_3~\x_4 \end{bmatrix} =\begin{bmatrix} -1 ~ 0 ~ 1 ~ 2 \\ ~~1 ~ 1 ~ 1 ~ 1 \end{bmatrix}.
\end{align}
We do not specify $\A_2$ and the samples on $\A_2$ for now.
For each point $\x_j, j = 1, 2, 3$ or $4$, the condition in Theorem~\ref{theorem:ASDP-HE2} can be interpreted as a geometric region such that the corresponding solution $\c_j$ is guaranteed to be subspace-preserving if 
sample points from $\A_2$ lies in this region. We now explicitly compute these regions.

We first consider the data point $\x_2 = [0, 1]^\top$.
From definition, one can compute that the dual point corresponding to $\x_2$ is given by $\tilde \v_2^\ast = [0, 0.5,0.5]^\top$.
By applying Theorem \ref{theorem:ASDP-HE2}, we know that the solution $\c_2$ is subspace-preserving if any sample point $\x = [x_{1}, x_{2}]^\top$ from $\A_2$ satisfies the following condition: $-3 < x_{2} < 1$, see Fig.~\ref{fig:example-feasible-region-for-subspace-preserving-solution} (a). For the data point $\x_3=[1, 1]^\top$, one can see that Theorem \ref{theorem:ASDP-HE2} yields the same condition as that for $\x_2$, see Fig.~\ref{fig:example-feasible-region-for-subspace-preserving-solution} (b).
Now, consider data point $\x_1 = [-1, 1]^\top$. The dual point can be computed as $\tilde \v_1^\ast = [-1, 0.5, 0.5]^\top$.
The condition in Theorem \ref{theorem:ASDP-HE2} is satisfied if for any data point $\x = [x_{1}, x_{2}] \in \A_2$ we have $-3+2x_1 < x_2 < 1 + 2 x_1$, as illustrated in Fig.~\ref{fig:example-feasible-region-for-subspace-preserving-solution} (c).
Similarly, for data point $\x_4 = [2, 1]^\top$, the dual point $\tilde \v_4^\ast$ is $[1, 0, 0]^\top$, and the corresponding geometric region is $-1 < x_{1} < 1$, as illustrated in Fig.~\ref{fig:example-feasible-region-for-subspace-preserving-solution} (d).
Overall, the region which guarantees subspace-preserving solution for all four points in $\A_1$ is the intersection of the regions above, which is $\Omega_1$. See Fig.~\ref{fig:example-feasible-region-for-subspace-preserving-solution} (e) for an illustration.

We can now compare Theorem~\ref{theorem:ASDP-HE2} with our novel geometric analysis in Section~\ref{sec:novel-analysis-AASC}.
Specifically, note that $\x_2$ and $\x_3$ are interior points of $\A_1$. From Theorem \ref{thm:interior-subspace-preserving} we know that the solution to \eqref{eq:affine-ssc} is subspace-preserving for $\x_2$ and $\x_3$ if and only if all sample points $\x \in \A_2$ either have $\x_2 > 1$ or $\x_2 < 1$, \ie, all sample points $\x \in \A_2$ lie on either side of the affine subspace $\A_1$.
Note that this condition is a much weaker requirement than the condition derived from Theorem \ref{theorem:ASDP-HE2}.
Finally, if we combine the result derived from Theorem \ref{thm:interior-subspace-preserving} for $\x_2, \x_3$ and result derived from Theorem \ref{theorem:ASDP-HE2} for $\x_1, \x_4$, then we get the conclusion that the solution to \eqref{eq:affine-ssc} is subspace-preserving for all sample points in $\A_1$ if all sample points in $\A_2$ lies in the region $\Omega_1 \cup \Omega_2 \cup \Omega_3$.

\subsection{Experiments on Synthetic Data}
\label{sec:experiments-on-toy-data}

We design four examples where two or three affine subspaces are arranged in an ambient space of $\RR^2$ or $\RR^3$, as shown in subfigure (a) of Fig.~\ref{fig:two-line-in-R3}, Fig.~\ref{fig:two-line-in-R2}, Fig.~\ref{fig:example-faces-vertex-R3}, and Fig.~\ref{fig:example-faces-vertex-R2}.
For each example, we use an ADMM algorithm to solve the optimization problem in \eqref{eq:affine-ssc} and display the representation matrix $C$ as well as the associated affinity matrix $A=\frac{1}{2}(|C|+|C^\top|)$ in subfigures (b) and (c) of the corresponding figures.
In the following, we discuss how our geometric conditions apply for the analysis of each of the four examples.

\subsubsection{Two Lines in $\RR^3$} 
\label{sec:toy-Two Lines-in-R3}

\begin{figure}[ht]
	\centering
	\vspace{-2mm}
	\subfigure[]{\includegraphics[clip=true,trim=5 0 5 0,width=0.27\columnwidth]{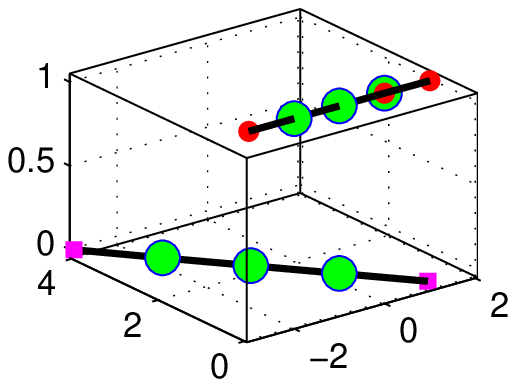}}
	\subfigure[]{\includegraphics[clip=true,trim=0 0 5 0,width=0.345\columnwidth]{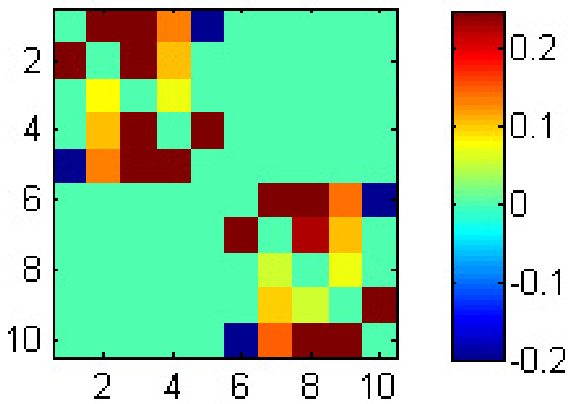}}
	\subfigure[]{\includegraphics[clip=true,trim=15 0 5 0,width=0.3\columnwidth]{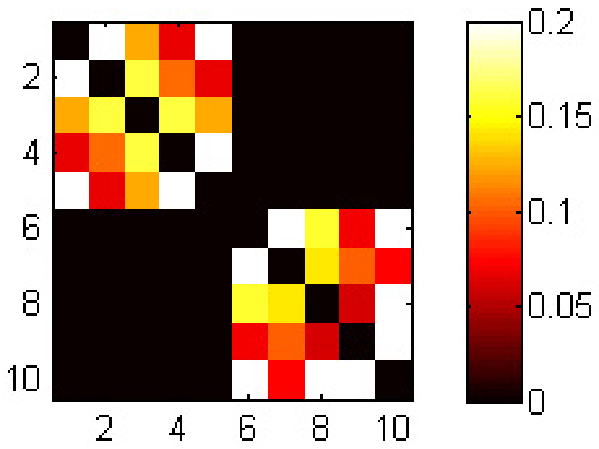}}
	\caption{Two lines in $\RR^3$. (a) Data points. (b) Coefficients matrix $C$. (c) Affinity matrix $A$.}
	\label{fig:two-line-in-R3}
	\vspace{-1mm}
\end{figure}

In this example, we sample ten data points from two line segments in $\RR^3$, denoted as $\A_1$ and $\A_2$, respectively, as illustrated in Fig.~\ref{fig:two-line-in-R3} (a), and arrange them into data matrices $X^{(1)}$ and $X^{(2)}$, where
\begin{align}
X^{(1)} &= \begin{bmatrix} \x_1~\x_2~\x_3~\x_4~\x_5 \end{bmatrix} =\begin{bmatrix} -2 ~-1 ~~~ 0 ~~~ 1 ~~~ 2 \\ ~1 ~~~~ 1 ~~~~ 1 ~~~1 ~~~ 1 \\ ~1 ~~~~ 1 ~~~~ 1 ~~~1 ~~~ 1 \end{bmatrix},~\and \nonumber \\
\!\!\!\!X^{(2)} &= \begin{bmatrix} \x_6~\x_7~\x_8~\x_9~\x_{10} \end{bmatrix} =\begin{bmatrix} ~1 ~~~ 0 ~ -1 ~ -2 ~ -3 \\ 0 ~~~ 1 ~~~~ 2 ~~~~ 3 ~~~~ 4 \\ 0 ~~~ 0 ~~~~ 0 ~~~~ 0 ~~~~ 0 \end{bmatrix}.\nonumber
\end{align}

Note that the two affine subspaces $\A_1$ and $\A_2$ are affinely disjoint and thus are also affinely independent.
According to Theorem~\ref{theorem:affine-independent-sufficient}, all points from both subspaces have subspace-preserving solutions. This is indeed what we can observe from Fig.~\ref{fig:two-line-in-R3} (b) and (c).
In addition, Theorem~\ref{thm:subspace-dense} asserts that for interior points of each subspace, there exists optimal solutions that are subspace-dense. From Fig.~\ref{fig:two-line-in-R3} (b), we can see that the solution for points $\x_2, \x_3, \x_4$ in $\A_1$ and $\x_7, \x_8, \x_9$ are indeed subspace-dense, showing that the ADMM algorithm can find such solutions in practice.

Meanwhile, comparing Fig.~\ref{fig:two-line-in-R3} (b) and (c), we can observe the \textit{collaborative effect} in the induced affinity matrix $A$ that: while the columns of coefficients matrix $C$ corresponding to the extreme points are sparse, the subspace-dense solutions of the interior points provide dense connections to the \emph{correct extreme points} and make the affinity graph properly well connected. Since the coefficients matrix $C$ is subspace-preserving and each subspace contains subspace-dense solutions, ASSC is guaranteed to generate correct clustering. In experiments, the clustering error is zero, which is exactly as predicted by Corollary~\ref{corollary:correct-clustering} and \ref{corollary:correct-clustering-under-affinely-independence-assumption}.

\subsubsection{Two Lines in $\RR^2$} 
\label{sec:toy-Two Lines-in-R2}

\begin{figure}[ht]
	\centering
	\vspace{-2mm}
	\subfigure[]{\includegraphics[clip=true,trim=5 0 5 0,width=0.27\columnwidth]{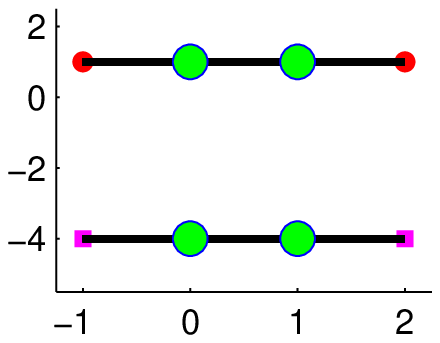}}
	\subfigure[]{\includegraphics[clip=true,trim=10 0 5 0,width=0.32\columnwidth]{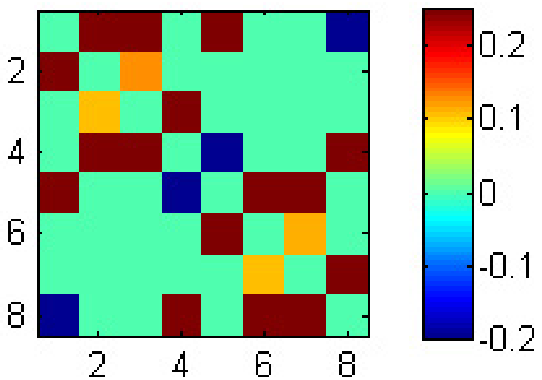}} 
	\subfigure[]{\includegraphics[clip=true,trim=15 0 5 0,width=0.30\columnwidth]{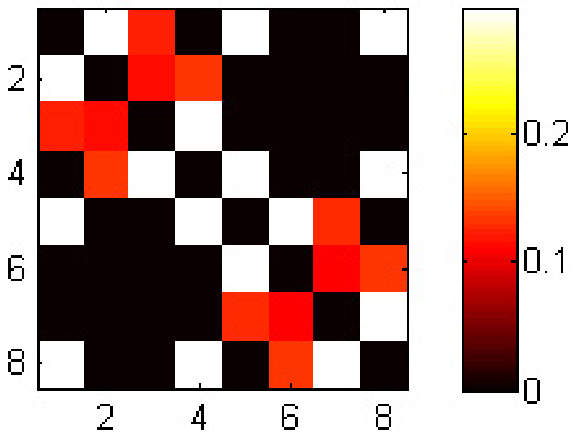}}
	\caption{Two lines in $\RR^2$. (a) Data points. (b) Coefficients matrix $C$. (c) Affinity matrix $A$.}
	\label{fig:two-line-in-R2}
	\vspace{-1mm}
\end{figure}

We sample eight data points from two affine subspaces $\A_1$ and $\A_2$, as illustrated in Fig.~\ref{fig:two-line-in-R2} (a), and arrange these data points into matrices $X^{(1)}$ and $X^{(2)}$, respectively, where
\begin{align}
X^{(1)} &= \begin{bmatrix} \x_1~\x_2~\x_3~\x_4 \end{bmatrix} =\begin{bmatrix} -1 ~ 0 ~ 1 ~ 2 \\ ~~1 ~ 1 ~ 1 ~ 1 \end{bmatrix}, ~~ \and \nonumber \\
X^{(2)} &= \begin{bmatrix} \x_5~\x_6~\x_7~\x_8 \end{bmatrix} =\begin{bmatrix} -1 ~ ~ ~ ~ 0 ~ ~ ~ ~ 1 ~ ~ ~ ~ 2 \\ -4 ~ -4 ~ -4 ~ -4 \end{bmatrix}. \nonumber
\end{align}

In this case, the two affine subspaces $\A_1$ and $\A_2$ are not affinely independent.
Still, for the relative interior points $\x_2,\x_3,\x_6$, and $x_7$, we can apply Theorem~\ref{thm:interior-subspace-preserving}, from which the solutions to \eqref{eq:affine-ssc} are asserted to be subspace-preserving. For the four extreme points (\ie, the vertices) $\x_1,\x_4,\x_5$ and $\x_8$, there is no subspace-preserving guarantee and we observe that the obtained optimal solutions are not subspace-preserving.

From Fig.~\ref{fig:two-line-in-R2} (b) we can also see that the subspace-preserving solutions for data points $\x_2,\x_3,\x_6$, and $\x_7$ are also subspace-dense.
In particular, the points $\x_2, \x_3$ (resp., $\x_6, \x_7$) are connected to the extreme points $\x_1, \x_4$ (resp., $\x_5, \x_8$) of the same subspace.
This is the \emph{collaborative effect}, which increases the connections of extreme points to data points in their respective subspaces.
Consequently, while the solutions for $\x_1, \x_4, \x_5, \x_8$ are not subspace-preserving, the improved connectivity makes spectral clustering more likely to generate correct clustering. 
As what we observed in experiments, although the affinity in Fig.~\ref{fig:two-line-in-R2} (c) has wrong connections, the final clustering error is still zero.

\subsubsection{A Triangle and a Line in $\RR^3$} 
\label{sec:toy-Triangle-Line-in-R3}

\begin{figure}[ht]
\vspace{-2mm}
\centering 
\subfigure[]{\includegraphics[clip=true,trim=5 0 5 0,width=0.25\columnwidth]{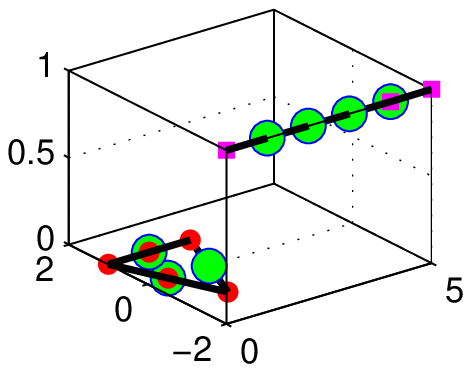}}
\subfigure[]{\includegraphics[clip=true,trim=5 0 5 0,width=0.35\columnwidth]{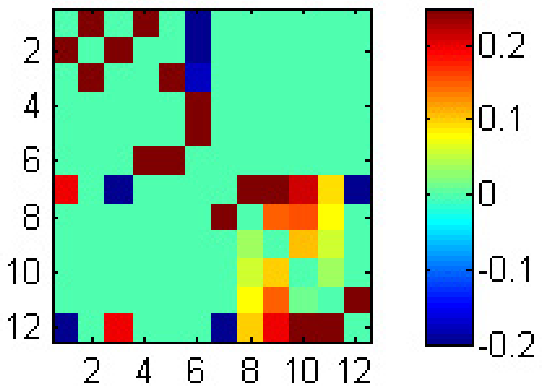}}
\subfigure[]{\includegraphics[clip=true,trim=10 0 5 0,width=0.35\columnwidth]{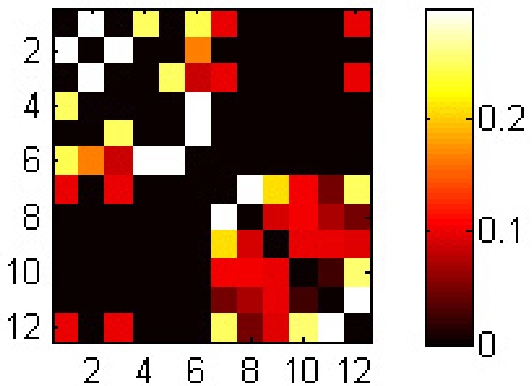}}
\caption{Data points lying in the interior of faces of convex hull $\conv(X^{(1)})$. (a) Data points. (b) Coefficients matrix $C$. (c) Affinity matrix $A$.}
\label{fig:example-faces-vertex-R3}
\vspace{-1mm}
\end{figure}


Consider a two-dimensional affine subspace $\A_1$ which is spanned by six data points (lying on a triangle) and a one-dimensional affine subspace $\A_2$ which is spanned by six data points (lying on a line segment) in $\RR^3$, as illustrated in Fig.~\ref{fig:example-faces-vertex-R3} (a). We arrange the twelve data points into matrices $X^{(1)}$ and $X^{(2)}$, respectively, where
\begin{align}
\!\!\!\! X^{(1)} &= \begin{bmatrix} \x_1~\x_2~\x_3~\x_4~\x_5~\x_6 \end{bmatrix} =\begin{bmatrix}
                  0 ~~1 ~~ 2 ~~ \frac{1}{2} ~~ \frac{3}{2} ~~ 1 \\
                 \ 1 ~~ 1 ~~ 1 ~~ 0          ~~ 0        -\!1                     \\
                 \!0 ~~ 0 ~~ 0 ~~ 0          ~~ 0    ~~ 0 \end{bmatrix}, \nonumber \\
X^{(2)} &= \begin{bmatrix} \x_7~\x_8~\x_9~\x_{10}~\x_{11}~\x_{12} \end{bmatrix} \nonumber \\
        &=\!\begin{bmatrix} ~0 ~~~~ 1 ~~~~ 2 ~~~~ 3 ~~~~ 4 ~~~~ 5 \\
                        -\!2 ~ -\!2 ~ -\!2 ~ -\!2 ~ -\!2 ~ -\!2 \\
                        ~1 ~~~~ 1 ~~~~ 1 ~~~~ 1 ~~~~ 1 ~~~~ 1 \end{bmatrix}.\nonumber
\end{align}

In this example, the two affine subspaces $\A_1$ and $\A_2$ are not affinely independent.
Moreover, for $\A_1$ there is also no relative interior points.
We can still apply 
Corollary~\ref{thm:faces-preserving} to the data points $\x_2, \x_4$, and $\x_5$ (marked in green), which are the relative interior points of the faces of $\Q^{(1)}$, and assert that the corresponding representations are subspace-preserving.
Meanwhile, we can also notice that the subspace-preserving solutions for $\x_2, \x_4$, and $\x_5$ are not subspace-dense as shown in Fig.~\ref{fig:example-faces-vertex-R3} (b).

As for $\A_2$, the data points $\x_8$, $\x_9$, $\x_{10}$, and $\x_{11}$ (marked in green) are the relative interior points of $\Q^{(2)}$. As asserted by Theorem~\ref{thm:interior-subspace-preserving}, 
the optimal solutions for these four data points are all subspace-preserving.

\subsubsection{A Triangle in $\RR^2$} 
\label{sec:toy-Triangle-in-R2}


\begin{figure}[ht]
\vspace{-2mm}
\centering
\subfigure[]{\includegraphics[clip=true,trim=0 0 5 0,width=0.25\columnwidth]{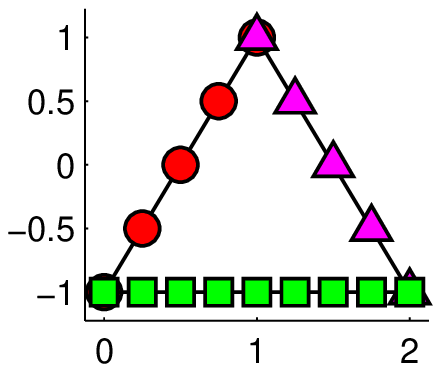}}
\subfigure[]{\includegraphics[clip=true,trim=5 0 5 0,width=0.32\columnwidth]{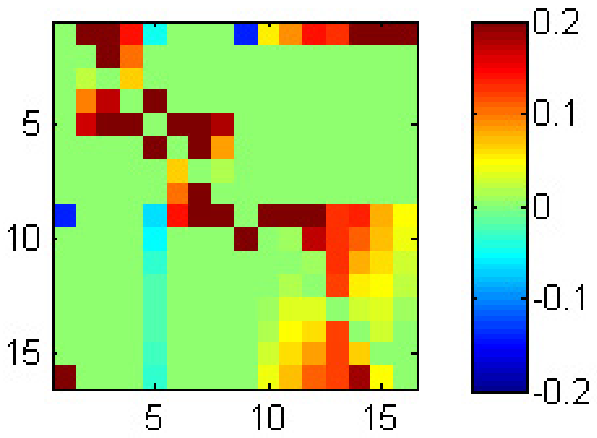}}
\subfigure[]{\includegraphics[clip=true,trim=15 0 5 0,width=0.32\columnwidth]{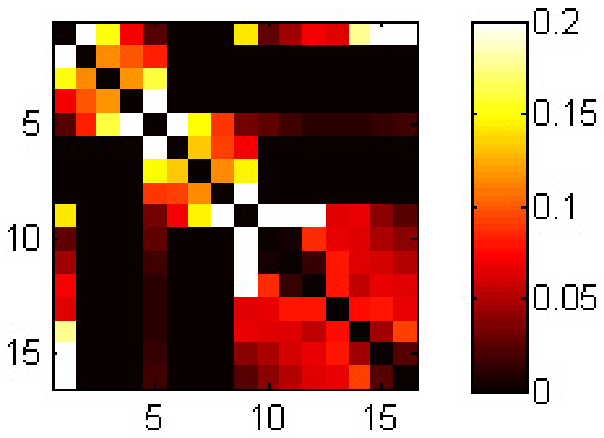}}
\caption{Data point lying in the interior of a face of the convex hull. (a) Data points. (b) Coefficients matrix $C$. (c) Affinity matrix $A$.}
\label{fig:example-faces-vertex-R2}
\vspace{-1mm}
\end{figure}

To show the special structure of the optimal solution of ASSC, we consider a two-dimensional affine subspace $\A$ which is spanned by sixteen data points lying on the three edges of a triangle in $\RR^2$, as illustrated in Fig.~\ref{fig:example-faces-vertex-R2} (a). Denote the data points lying on the three edges as $\F^{(1)}$ (red circle), $\F^{(2)}$ (purple triangle), and $\F^{(3)}$ (green square), respectively, where $\F^{(1)} \cap \F^{(3)} = \x_1 =[0,-1]^\top$, $\F^{(1)} \cap \F^{(2)} = \x_5 =[1,1]^\top$, and $\F^{(2)} \cap \F^{(3)} = \x_9 =[2,-1]^\top$.
%
Except for the three vertex points $\x_1$, $\x_5$, and $\x_9$, all other data points lie in the relative interior of the faces $\F^{(\ell)}$'s. 
As shown in Fig.~\ref{fig:example-faces-vertex-R2} (b) and (c), the relative interior points of each face yield ``face-preserving'' solutions, which are nonnegative and also ``face-dense'', showing clear collaborative effects; whereas the optimal solutions of the extreme points $\x_1$, $\x_5$, and $\x_9$ on vertices are not nonnegative.

\begin{table}[htb]
\vspace{-2mm}
\centering
\caption{Data Set Summary Information ($N$: the number of data points. $D$: ambient dimension. $n$: number of groups.)}
\label{tab:dataset}
\scriptsize
\begin{tabular}{c | c  c  c  c | c c }
\hline
		Data sets &      Iris       &   Wine  &     Wdbc    &  Ionosphere  & UCI Digits  & USPS \\
\hline
		$N$       &      150        &   178   &    569      &   351        &  390        & 2000 \\ 
		$D$       &      4          &   12    &    29       &   32         &  256        & 256  \\ 
		$n$       &      3          &   3     &    2        &   2          &  10         & 10 \\ 
\hline
\end{tabular}
\vspace{-5mm}
\end{table}

\subsection{Experiments on Real Data}

\begin{table*}[htb]
\small
\caption{\small {Clustering Error (\%) on Real Data Sets. $^\ast$: ASSC($\ell_1$) is used for USPS.}} 
\begin{center}
\footnotesize
\begin{tabular}{c | c c c c c c c  c  c}
\hline
                 &\multicolumn{2}{c}{Hopkins 155}  &Iris           &Wdbc       &Wine          &Ionosphere       &UCI Digits    &USPS           \\
                 &$n=2$&$n=3$        &$n=3$          &$n=2$      &$n=3$         &$n=2$          &$n=10$   &$n=10$ \\
\hline
SSC(n)    &5.48 &15.60        &8.00           &28.12      &34.83       &35.33      &31.87         &\!14.71 \\

ASSC(n)   &\textbf{1.95}&\textbf{4.94} &\textbf{6.67} &\textbf{8.26}&\textbf{7.87}&\textbf{29.91}&\textbf{28.67} &\textbf{11.70}$^\ast$          \\
\hline
\end{tabular}
\end{center}
\label{table:UCI-benchmark-ASSC-vs-SSC}
\vspace{-5mm}
\end{table*}

In this subsection we evaluate the performance of ASSC on real world benchmark datasets.
Since real data are often contaminated by noise and corruptions, we relax the constraint in \eqref{eq:affine-ssc} and solve the following optimization problem instead:
%
\begin{align}
\min_{c_j} \| \c_j \|_1 + \frac{\lambda}{2}\| \x_j - X\c_j\|^2_2 ~~ \st ~ \1^\top \c_j =1, ~c_{jj} =0,
\end{align}
in which the parameter $\lambda$ is a trade-off parameter.
We follow \cite{Elhamifar:TPAMI13} and set $\lambda =  \frac{\alpha}{\min_j \max_{i:i \neq j} \{\x_i^\top \x_j\}}$ and tune $\alpha$ for each dataset.
We refer to this method as ASSC(n).


We compare ASSC with a corresponding noisy version of SSC, referred to as SSC(n), which solves the following optimization problem:
\begin{align}
\min_{c_j} \| \c_j \|_1 + \frac{\lambda}{2}\| \x_j - X\c_j\|^2_2 ~~ \st ~ ~c_{jj} =0.
\end{align}
The parameter $\lambda$ in SSC(n) is set in the same way as for ASSC(n).

To evaluate ASSC, we use Hopkins 155 database \cite{Tron:CVPR07}, which is a motion segmentation database containing 155 video sequences of $2$ or $3$ motions in each sequence. Moreover, we also use five datasets from the UCI machine learning repository \cite{Dua:17}, including Iris, Wdbc, Ionosphere, Wine and Digits. In addition, we use a subset of USPS dataset \cite{Hull:PAMI94}, which consists of randomly selected 200 samples per digit. Some statistics on the size, number of clusters and dimension of these datasets are provided in Table \ref{tab:dataset}. For each dataset, we apply SSC and ASSC, and report the averaged clustering error over 10 trials.

Experimental results are reported in Table \ref{table:UCI-benchmark-ASSC-vs-SSC}.
We observe that ASSC 
achieves better performance than SSC.
In particular, on Hopkins 155, Wdbc and Wine datasets, the clustering error of ASSC is almost three times lower 
than that of SSC. These results confirm that ASSC is an effective approach for 
clustering these datasets, and may suggest that a union of affine subspaces is a better modeling of data than a union of linear subspaces.

\section{Conclusion}
\label{sec:conclusion}

We have extended the correctness conditions of SSC which are suitable for linear subspaces to the case of affine subspaces and discussed the limitations.
We have presented the concept of affine independence and established that the optimal solution is guaranteed to be subspace-preserving if the affine subspaces are affinely independent.
%
We have derived a set of geometric conditions, showing that data points not lying on the vertex of the convex hull of each subspace are easier to yield subspace-preserving solutions.
%
Furthermore, we have shown that the data points lying in the relative interior of the convex hull of each subspace could yield subspace-dense solutions, which could thus properly improve the connectivity of the affinity graph.
%
Experiments on synthetic data and real data have validated our theoretical findings and evaluated the performance of ASSC.
%

Finally, we note that the affine subspace clustering problem can also be handled by adding an affine constraint into other subspace clustering algorithms such as \cite{Liu:TPAMI13, Lu:ECCV12}, and the tools that we developed 
for the analysis of ASSC may also be useful in developing theoretical analysis of other affine subspace clustering algorithms such as \cite{Xin:TSP18, Xu:PAA17} as well as algorithms for manifold learning and clustering such as \cite{Li:ACCV09,Elhamifar:NIPS11}.

\appendix

\subsection{Proof of Corollary \ref{thm:length-of-dual-point}} 
\label{sec:appendix-for-section-III}


%
\begin{proof}
a) When the solution is nonnegative, we have that $\| \c_j \|_1 = \1^\top \c_j =1$. From the strong duality, we have that $\tilde \w^\top \tilde \a_j = \| \c_j \|_1 =1$. By the definition of dual point, we have:
$   \| \tilde \v_j \|_2
  = \| \tilde U^{(\ell)} \tilde \w_j \|_2
  = \| \tilde \w_j \|_2
\ge {| \tilde \w^\top \tilde \a_j| \over \| \tilde \a_j\|_2}
= {1 \over \| \tilde \a_j\|_2}
  = {1 \over \|\tilde \x_j \|_2}.$
%
Thus, $\| \tilde \v_j^\ast \|^{-1}_2 \le \|\tilde \x_j \|_2$.

b) When the solution is not nonnegative, we have that $\| \c_j \|_1 > \1^\top \c_j =1$. From the strong duality, we have that $\tilde \w^\top \tilde \a_j = \| \c_j \|_1 >1$. 
Then, we have:
$   \| \tilde \v_j \|_2
  = \| \tilde U^{(\ell)} \tilde \w_j \|_2
  = \| \tilde \w_j \|_2
\ge {| \tilde \w^\top \tilde \a_j| \over \| \tilde \a_j\|_2}
> {1 \over \| \tilde \a_j\|_2}
  = {1 \over \|\tilde \x_j \|_2}.$
%
Thus, $\| \tilde \v_j^\ast \|^{-1}_2 < \|\tilde \x_j \|_2$.

\end{proof}


\subsection{Proof for Lemma \ref{lemma:disjoint}}

We start with two important lemmas. 
\begin{lemma}[\cite{Gallier:GeoMethodsAppCompSciEng01}, Lemma 2.11.3]
	Let $\A$ and $\A'$ be two arbitrary affine subspaces.
	\begin{itemize}
		\item If $\A \cap \A' = \emptyset$, then $\dim(\aff(\A \cup \A')) = \dim(\A) + \dim(\A') + 1 - \dim(\T(\A) \cap \T(\A'))$;
		\item If $\A \cap \A' \ne \emptyset$, then $\dim(\aff(\A \cup \A')) = \dim(\A) + \dim(\A') - \dim(\A \cap \A')$.
	\end{itemize}
	\label{thm:affine_arrangement_dimension}
\end{lemma}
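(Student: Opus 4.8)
The plan is to reduce both formulas to the standard dimension identity for sums of linear subspaces by first obtaining an explicit description of the direction subspace of $\aff(\A \cup \A')$. I would fix base points $\x_0 \in \A$ and $\x_0' \in \A'$ and write $\A = \x_0 + \S$ and $\A' = \x_0' + \S'$, where $\S := \T(\A)$ and $\S' := \T(\A')$ are the direction subspaces.

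The key step is to show that
\[
\aff(\A \cup \A') = \x_0 + \big(\S + \S' + \spann(\x_0' - \x_0)\big).
\]
I would prove this by expanding an arbitrary affine combination $\sum_i \alpha_i (\x_0 + \u_i) + \sum_j \beta_j (\x_0' + \v_j)$ of points from $\A$ and $\A'$, with $\u_i \in \S$, $\v_j \in \S'$, and $\sum_i \alpha_i + \sum_j \beta_j = 1$. Writing $b := \sum_j \beta_j$, this combination equals $\x_0 + b(\x_0' - \x_0) + \w$ for some $\w \in \S + \S'$, which gives one inclusion; the reverse inclusion holds because the right-hand side is an affine subspace containing both $\A$ and $\A'$. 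Consequently $\dim(\aff(\A \cup \A')) = \dim(\S + \S' + \spann(\x_0' - \x_0))$, and the proof splits into two cases according to the contribution of the translation vector $\x_0' - \x_0$.

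For the disjoint case ($\A \cap \A' = \emptyset$), I would argue that $\x_0' - \x_0 \notin \S + \S'$: if $\x_0' - \x_0 = \u + \v$ with $\u \in \S$ and $\v \in \S'$, then $\x_0 + \u = \x_0' - \v$ would be a common point of $\A$ and $\A'$, contradicting disjointness. Hence $\spann(\x_0' - \x_0)$ contributes exactly one extra dimension, so $\dim(\aff(\A \cup \A')) = \dim(\S + \S') + 1$; combining this with the subspace-sum identity $\dim(\S + \S') = \dim(\S) + \dim(\S') - \dim(\S \cap \S')$ yields the first formula. For the intersecting case, I would choose the base points to coincide at a common point $\x_0 = \x_0' \in \A \cap \A'$, so that $\spann(\x_0' - \x_0) = \{\0\}$ and $\dim(\aff(\A \cup \A')) = \dim(\S) + \dim(\S') - \dim(\S \cap \S')$; it then remains to identify $\dim(\S \cap \S')$ with $\dim(\A \cap \A')$, which follows from the observation that $\A \cap \A' = \x_0 + (\S \cap \S')$ whenever $\x_0 \in \A \cap \A'$.

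The main obstacle is the precise bookkeeping of the translation vector $\x_0' - \x_0$: the entire discrepancy between the two formulas comes down to whether this vector lies in $\S + \S'$, and the cleanest way to control it is the base-point description of the affine hull established above, after which both cases reduce to the linear subspace-sum identity. A minor additional point requiring care is that the description of $\aff(\A \cup \A')$, and hence the computed dimension, is independent of the choice of base points $\x_0, \x_0'$, which is immediate once the direction subspace has been identified.
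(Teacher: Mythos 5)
The paper never proves this lemma: it is imported verbatim from Gallier's book (Lemma 2.11.3) and used as a black box in the proofs of Lemma~\ref{lemma:disjoint} and Lemma~\ref{thm:affine_arrangement_dimension_bound}, so there is no internal proof to compare against. Your argument is correct, complete in its case analysis, and is essentially the standard textbook route (the same one the cited source takes): identify the direction subspace of the affine hull as $\S + \S' + \spann(\x_0' - \x_0)$, observe that the entire discrepancy between the two formulas is governed by whether $\x_0' - \x_0 \in \S + \S'$ --- which happens precisely when $\A \cap \A' \ne \emptyset$, by exactly the exchange $\x_0 + \u = \x_0' - \v$ you give --- and then finish with the linear dimension identity $\dim(\S + \S') = \dim(\S) + \dim(\S') - \dim(\S \cap \S')$, together with $\A \cap \A' = \x_0 + (\S \cap \S')$ when a common base point exists. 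One small slip worth fixing in the write-up: the justification you attach to the ``reverse inclusion'' actually re-proves the forward one. Expanding affine combinations shows $\aff(\A \cup \A') \subseteq \x_0 + \bigl(\S + \S' + \spann(\x_0' - \x_0)\bigr)$, and the remark that the right-hand side is an affine subspace containing $\A$ and $\A'$ yields the \emph{same} containment again, via minimality of the affine hull. For the genuine reverse inclusion, note instead that the direction subspace of $\aff(\A \cup \A')$ contains $\S$, $\S'$, and $\x_0' - \x_0$ (all arise as differences of points of the hull), hence contains $\S + \S' + \spann(\x_0' - \x_0)$; since $\x_0$ lies in the hull, the containment follows. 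This is a one-line repair, not a structural gap, and the rest of your proof (including the base-point independence remark) goes through unchanged.
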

%
\begin{lemma}
	Let $\A$ and $\A'$ be two arbitrary nonempty affine subspaces. Then it has
	\begin{equation}
	\dim(\aff(\A \cup \A')) \le \dim(\A) + \dim(\A')  + 1.
	\end{equation}
	In general, let $\{\A_{\ell}\}_{\ell=1}^n$ be a collection of nonempty affine subspaces. It has
	\begin{equation}
    \dim(\aff(\cup_{\ell=1}^{n}\A_{\ell})) \le \sum_{\ell=1}^n\dim(\A_{\ell}) + (n - 1).
	\end{equation}
	\label{thm:affine_arrangement_dimension_bound}
\end{lemma}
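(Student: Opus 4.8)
The plan is to reduce everything to the two-subspace inequality and then bootstrap to the general collection by induction on $n$, invoking Lemma~\ref{thm:affine_arrangement_dimension} as the workhorse. The point is that Lemma~\ref{thm:affine_arrangement_dimension} already gives \emph{exact} dimension formulas in both the disjoint and intersecting cases, so the desired upper bounds follow simply by discarding a nonnegative correction term in each case.

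First I would establish the two-subspace bound $\dim(\aff(\A \cup \A')) \le \dim(\A) + \dim(\A') + 1$ by splitting into the two cases of Lemma~\ref{thm:affine_arrangement_dimension}. If $\A \cap \A' = \emptyset$, the lemma gives $\dim(\aff(\A \cup \A')) = \dim(\A) + \dim(\A') + 1 - \dim(\T(\A) \cap \T(\A'))$, and since $\dim(\T(\A) \cap \T(\A')) \ge 0$ the bound follows immediately. If instead $\A \cap \A' \ne \emptyset$, the lemma gives $\dim(\aff(\A \cup \A')) = \dim(\A) + \dim(\A') - \dim(\A \cap \A')$; because the intersection is a nonempty affine subspace its dimension is at least $0$, so $\dim(\aff(\A \cup \A')) \le \dim(\A) + \dim(\A') \le \dim(\A) + \dim(\A') + 1$. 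Thus the two-subspace inequality holds in both cases, with slack $1$ in the intersecting case.

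Next I would prove the general inequality by induction on $n$. The base case $n = 1$ is trivial since $\dim(\aff(\A_1)) = \dim(\A_1) = \dim(\A_1) + (1-1)$. For the inductive step, the key observation is that the affine hull is idempotent and monotone, so that $\aff(\cup_{\ell=1}^{n}\A_\ell) = \aff\big(\aff(\cup_{\ell=1}^{n-1}\A_\ell) \cup \A_n\big)$. Writing $\mathcal{B} := \aff(\cup_{\ell=1}^{n-1}\A_\ell)$, which is itself a nonempty affine subspace, I apply the two-subspace bound to $\mathcal{B}$ and $\A_n$ to obtain $\dim(\aff(\cup_{\ell=1}^{n}\A_\ell)) \le \dim(\mathcal{B}) + \dim(\A_n) + 1$, and then bound $\dim(\mathcal{B})$ by the induction hypothesis $\dim(\mathcal{B}) \le \sum_{\ell=1}^{n-1}\dim(\A_\ell) + (n-2)$. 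Adding these yields $\sum_{\ell=1}^{n}\dim(\A_\ell) + (n-1)$, closing the induction.

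The main obstacle is not any hard estimate but the bookkeeping identity $\aff(S \cup T) = \aff(\aff(S) \cup T)$ that licenses the inductive collapse; I would justify it from the definition of $\aff(\cdot)$ as the smallest affine subspace containing a set, using that $S \subseteq \aff(S) \subseteq \aff(S \cup T)$ forces $\aff(\aff(S) \cup T) = \aff(S \cup T)$. Everything else is a direct application of Lemma~\ref{thm:affine_arrangement_dimension} together with the elementary fact that a nonempty intersection of affine subspaces has nonnegative dimension.
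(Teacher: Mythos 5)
Your proof is correct. The paper actually states Lemma~\ref{thm:affine_arrangement_dimension_bound} without any proof, so there is no argument of record to compare against; your derivation --- dropping the nonnegative correction terms $\dim(\T(\A)\cap\T(\A'))$ and $\dim(\A\cap\A')$ in the two cases of Lemma~\ref{thm:affine_arrangement_dimension}, then inducting on $n$ via the identity $\aff(S\cup T)=\aff(\aff(S)\cup T)$ --- is the natural one and matches how the paper itself implicitly uses the bound (the chain of inequalities in the proof of Lemma~\ref{lemma:independent} applies the two-subspace bound to affine hulls of sub-collections in exactly this iterated fashion).
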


%
Now we are ready to start the proof of Lemma \ref{lemma:disjoint}.
\begin{proof}
	For the ``only if'' part, note from Lemma~\ref{thm:affine_arrangement_dimension} that since $\A \cap \A' = \emptyset$, it has $\dim(\aff(\A \cup \A')) = \dim(A) + \dim(A') + 1 - \dim(\T(\A) \cap \T(\A'))$. Moreover, since $\T(\A) \cap \T(\A') = \{\0\}$, it has $\dim(\T(\A) \cap \T(\A')) = 0$. Therefore, we get $\dim(\aff(\A \cup \A')) = \dim(A) + \dim(A') + 1$.
	

For the ``if'' part, assume for the purpose of arriving at a contradiction that $\A \cap \A' \ne \emptyset$. From Lemma~\ref{thm:affine_arrangement_dimension} it has $\dim(\aff(\A \cup \A')) = \dim(A) + \dim(A') - \dim(\A \cap \A')$. Note that $\dim(\aff(\A \cup \A')) = \dim(A) + \dim(A') + 1$. Thus, it must have $\dim(\A \cap \A') = -1$, which is a contradiction. Therefore, we have $\A \cap \A' = \emptyset$. Then, by using Lemma~\ref{thm:affine_arrangement_dimension} again, we see that $\dim(\aff(\A \cup \A')) = \dim(A) + \dim(A') + 1 - \dim(\T(\A) \cap \T(\A'))$. Thus, we have $\dim(\T(\A) \cap \T(\A'))=0$, which implies that $\T(\A) \cap \T(\A') = \{\0\}$. By using the definition of affinely disjoint, the proof for ``if'' part is completed.

\end{proof}

\subsection{Proof for Lemma~\ref{lemma:independent}}
\begin{proof}
	Take any two nonempty sets $\II, \II' \subseteq \{1, \cdots, n\}$ such that $\II \cap \II' = \emptyset$. We show that $\aff(\cup_{\kappa \in \II}\A_\kappa)$ and $\aff(\cup_{\kappa' \in \II'}\A_{\kappa'})$ are disjoint by using contrapositive.

Assume that 
$\aff(\cup_{\kappa \in \II}\A_\kappa)$ and $\aff(\cup_{\kappa' \in \II'}\A_{\kappa'})$ are not affinely disjoint, \ie, $\dim(\aff(\cup_{\kappa \in \II}\A_{\kappa}) + \aff(\cup_{\kappa' \in \II'}\A_{\kappa'})) \ne \dim(\aff(\cup_{\kappa \in \II}\A_{\kappa})) + \dim(\aff(\cup_{\kappa' \in \II'}\A_{\kappa'})) + 1$. From Lemma~\ref{thm:affine_arrangement_dimension_bound} we know that $\dim(\aff(\cup_{\kappa \in \II}\A_{\kappa}) + \aff(\cup_{\kappa' \in \II'}\A_{\kappa'})) \le \dim(\aff(\cup_{\kappa \in \II}\A_{\kappa})) + \dim(\aff(\cup_{\kappa' \in \II'}\A_{\kappa'})) + 1$. Therefore, it must have
$\dim(\aff(\cup_{\kappa \in \II}\A_{\kappa}) + \aff(\cup_{\kappa' \in \II'}\A_{\kappa'})) < \dim(\aff(\cup_{\kappa \in \II}\A_{\kappa})) + \dim(\aff(\cup_{\kappa' \in \II'}\A_{\kappa'})) + 1. 
$

Let $\E = \II \setminus (\II' \cup \II'')$. By using Lemma~\ref{thm:affine_arrangement_dimension_bound}, we further have
\begin{equation}
\begin{split}
		&\,\dim(\aff(\cup_{\ell=1}^{n}\A_{\ell})) \\
    \le &\, \dim(\aff(\cup_{\kappa \in \II \cup \II'}\A_\kappa)) + \dim(\aff(\cup_{\kappa \in \E}\A_\kappa)) + 1\\
	  < & \,\dim(\aff(\cup_{\kappa \in \II}\A_\kappa)) + \dim(\aff(\cup_{\kappa \in \II'}\A_\kappa)) + 1 \\
        & \, + \dim(\aff(\cup_{\kappa \in \E}\A_\kappa)) + 1\\
		\le &\, \sum_{\kappa \in \II}\dim(\A_\kappa) + |\II|- 1  +  \sum_{\kappa \in \II'}\dim(\A_\kappa) + |\II'|- 1  \\
		& \, +  \sum_{\kappa \in \E}\dim(\A_\kappa) + |\E|- 1 + 2\\
		=&\,\sum_{\ell=1}^{n}\dim(\A_{\ell}) + n - 1. \nonumber
	\end{split}
	\end{equation}
	From the definition of affine independence, we have that $\{\A_{\ell}\}_{\ell=1}^n$ is not affinely independent. This completes the proof by contrapositive.
\end{proof}

\subsection{Lemma for the Proof of Theorem \ref{thm:interior-subspace-preserving}}

We start by introducing an important lemma which will be used 
our proofs.
\begin{definition}[Strict convex combination]
A strict convex combination of a set of points $\{\y_i\}_{i=1}^N$ is a point of the form $\sum_{i=1}^N c_i \y_i$, where $c_i > 0$ and $\sum_{i=1}^N c_i= 1$.
\end{definition}

\begin{lemma}
	\label{lema:dense-convex-combination}
	Let $\{\y_i\}_{i=1}^N$ be an arbitrary set of data points. Any relative interior point of $\conv \{\y_1, \cdots, \y_N\}$ is a strict convex combination of $\{\y_i\}_{i=1}^N$.
\end{lemma}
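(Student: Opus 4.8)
The plan is to exhibit an explicit strict convex combination by blending an arbitrary relative interior point with the barycenter of the point set, which is itself automatically a strict convex combination. Write $P := \conv\{\y_1, \ldots, \y_N\}$ and let $\bar\y := \frac{1}{N}\sum_{i=1}^N \y_i$ be the barycenter; by construction $\bar\y = \sum_i \frac{1}{N}\y_i$ is a strict convex combination and lies in $\ri(P)$. Fix any $\x \in \ri(P)$; the goal is to write $\x = \sum_i c_i \y_i$ with every $c_i > 0$ and $\sum_i c_i = 1$.

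First I would use the defining property of the relative interior to push slightly past $\x$ along the ray emanating from $\bar\y$. Since $\x, \bar\y \in P \subseteq \aff(P)$, the displacement $\x - \bar\y$ lies in the direction subspace $\T(\aff(P))$. Because $\x \in \ri(P)$, there is a relatively open neighborhood of $\x$ within $\aff(P)$ that is contained in $P$; hence for some sufficiently small $\epsilon > 0$ the perturbed point $\w := \x + \epsilon(\x - \bar\y)$ still lies in $P$. (If $\x = \bar\y$ then $\x$ is already a strict convex combination, so there is nothing to prove; the construction below covers this degenerate case as well.)

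Next I would solve for $\x$ and substitute. Rearranging $\w = (1+\epsilon)\x - \epsilon \bar\y$ yields $\x = \frac{1}{1+\epsilon}\w + \frac{\epsilon}{1+\epsilon}\bar\y$, a convex combination of $\w$ and $\bar\y$ with strictly positive weights. Expanding $\w = \sum_i d_i \y_i$ (with $d_i \ge 0$ and $\sum_i d_i = 1$, since $\w \in P$) and $\bar\y = \sum_i \frac{1}{N}\y_i$, I obtain $\x = \sum_{i=1}^N c_i \y_i$ where $c_i = \frac{d_i}{1+\epsilon} + \frac{\epsilon}{N(1+\epsilon)}$. Each $c_i$ is strictly positive because the second term is positive while the first is nonnegative, and $\sum_i c_i = \frac{1}{1+\epsilon} + \frac{\epsilon}{1+\epsilon} = 1$. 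This is exactly the required strict convex combination.

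The only delicate step is the second one: justifying that the segment can be extended slightly beyond $\x$ while remaining in $P$. This rests on the standard characterization of the relative interior (every point of $\ri(P)$ admits a relatively open neighborhood inside $P$, cf.\ \cite{Rockafellar:70}) together with the observation that $\x - \bar\y$ is a legitimate displacement direction inside $\aff(P)$, so the perturbed point stays in the affine hull and hence in the neighborhood. Everything else is elementary algebra, with the barycenter supplying the strict positivity that the downstream argument in Theorem~\ref{thm:interior-subspace-preserving} relies on.
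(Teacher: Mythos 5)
Your proof is correct, but it organizes the argument differently from the paper's. Both rest on the same prolongation property of the relative interior---that a segment ending at a relative interior point $\x$ can be pushed slightly past $\x$ while remaining in the convex hull---but they deploy it differently. The paper applies the prolongation $N$ times, once for each generator $\y_i$: extending the segment from $\y_i$ through $\x$ yields a representation $\x = b_{i0}\y_i + (1-b_{i0})\x_i$ with $b_{i0}>0$, so each of the $N$ representations guarantees positivity of exactly one coefficient, and averaging them makes all coefficients positive simultaneously. You instead apply the prolongation once, along the direction from the barycenter $\bar{\y}$ through $\x$, obtaining $\x = \frac{1}{1+\epsilon}\w + \frac{\epsilon}{1+\epsilon}\bar{\y}$; the uniform weights $1/N$ of the barycenter then supply strict positivity of every coefficient in one stroke, via the explicit term $\frac{\epsilon}{N(1+\epsilon)}$. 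Your version is arguably cleaner: it needs a single $\epsilon$ and one perturbed point $\w$ rather than $N$ auxiliary points $\x_i$, and it isolates exactly where the strict positivity comes from. Two minor remarks: your assertion that $\bar{\y}\in\ri(P)$ is never used (only $\bar{\y}\in P$ is needed) and is best dropped, since the identification of $\ri(P)$ with the set of strict convex combinations is essentially what the lemma establishes; and the degenerate case $\x=\bar{\y}$ that you set aside is in fact already covered by the general construction with $\w=\x$.
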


\begin{proof}
We provide a constructive proof.

Let $\C = \conv \{\y_1, \cdots, \y_N\}$. Since $\x$ is a relative interior point of $\C$, it is known that
(see, e.g. \cite{Rockafellar:book98}) for each $\y_i$, there exists $\x_i \in \C$, $\x_i \ne \x$ such that $\x$ lies in the line segment between $\x_i$ and $\y_i$. That is, there exists $b_{i0} \in (0, 1]$ such that
\begin{equation}
	\x = b_{i0} \y_i + (1-b_{i0}) \x_i.
	\label{eq:prf-strict-convex-combination-interior}
\end{equation}
Moreover, since $\x_i \in \C$, we can further write $\x_i$ as a convex combination of $\{\y_i\}_{i=1}^N$. That is, there exists $b_{ij} \in [0, 1], \sum_{j=1}^N b_{ij} = 1$ such that $\x_i = \sum_{j=1}^N b_{ij} \y_j$. By substituting this expression of $\x_i$ into \eqref{eq:prf-strict-convex-combination-interior}, we can write $\x$ as
\begin{equation}
	\x = b_{i0} \y_i + (1 - b_{i0}) \sum_{j=1}^N b_{ij} \y_j.
	\label{eq:prf-strict-convex-combination-1}
\end{equation}
By taking the average of each side of \eqref{eq:prf-strict-convex-combination-1} over $i \in \{1, \cdots, N\}$, we get
\begin{equation}
\x = \sum_{i=1}^N\frac{b_{i0}}{N} \y_i + \sum_{i=1}^N \frac{(1 - b_{i0})}{N} \sum_{j=1}^N b_{ij} \y_j.
\label{eq:prf-strict-convex-combination-2}
\end{equation}
One can check that the right hand side of \eqref{eq:prf-strict-convex-combination-2} is a strict convex combination of points $\{\y_i\}_{i=1}^N$. This completes the proof.

\end{proof}

\subsection{Comparison of Corollary~\ref{thm:faces-subspace-preserving} and Theorem~\ref{thm:faces-subspace-preserving-sufficient}}
\label{sec:appendix-comparison-coro-IV2-thm-IV6}

We provide a proof for the argument that the sufficient condition in Theorem~\ref{thm:faces-subspace-preserving-sufficient} implies the sufficient condition in Corollary~\ref{thm:faces-subspace-preserving},
\ie,
if $\A_{\ell} \cap \conv (\X^{(-\ell)}) = \emptyset$ then $\aff(\F_{j}^{(\ell)}) \cap \conv (\X^{(-F_{j}^{(\ell)})}) = \emptyset$.

\begin{proof}
	We prove the contrapositive of the statement. Assume that $\aff(\F_{j}^{(\ell)}) \cap \conv (\X^{(-F_{j}^{(\ell)})}) \ne \emptyset$, and let $\y$ be a point that lies in such intersection.
	From $\y \in \conv (\X^{(-F_{j}^{(\ell)})})$ we have
	\begin{equation}
	\begin{split}
	\y = \sum_{k: \x_k \notin \F_{j}^{(\ell)}, \x_k \in \A_{\ell}} c_k \x_k + \sum_{k:\x_k \notin \A_{\ell}} c_k \x_k,
	\end{split}
	\label{eq:convex-neg-face}
	\end{equation}
where $\sum_{k: \x_k \notin \F_{j}^{(\ell)}, \x_k \in \A_{\ell}} c_k + \sum_{k:\x_k \notin \A_{\ell}} c_k = 1$,
	and $c_k \ge 0$ for all $k \in \{k: \x_k \notin \F_{j}^{(\ell)}, \x_k \in \A_{\ell}\} \cup \{k:\x_k \notin \A_{\ell}\}$.
	In addition, we can see that $\sum_{k:\x_k \notin \A_{\ell}} c_k \ne 0$ from the following argument.
	Suppose for the purpose of arriving at a contradiction that $\sum_{k:\x_k \notin \A_{\ell}} c_k = 0$, \ie, $c_k = 0$ for $k$ where $\x_k\notin \A_{\ell}$. Then we have $\y = \sum_{k: \x_k \notin \F_{j}^{(\ell)}, \x_k \in \A_{\ell}} c_k \x_k$, which implies that $\y \in \Q^{(\ell)}$ (since $\x_k \in \A_{\ell}$) and $\y \notin \F_{j}^{(\ell)}$  (since $\x_k \notin \F_{j}^{(\ell)}$).
This contradicts with the fact that $\y$ also lies in $\aff(\F_{j}^{(\ell)})$.
Therefore, we have shown that $\sum_{k:\x_k \notin \A_{\ell}} c_k \ne 0$.
	
	Now, from the fact that $\y$ also lies in $\aff(\F_{j}^{(\ell)})$ we have
	\begin{equation}
	\y = \sum_{k:\x_k \in \F_{j}^{(\ell)}} c_k' \x_k, \text{~~where~~} \sum_{k:\x_k \in \F_{j}^{(\ell)}} c_k' = 1.
	\label{eq:affine-face}
	\end{equation}
	Combining \eqref{eq:convex-neg-face} with \eqref{eq:affine-face} we have
	\begin{equation}
	\sum_{k: \x_k \notin \F_{j}^{(\ell)}, \x_k \in \A_{\ell}} c_k \x_k + \sum_{k:\x_k \notin \A_{\ell}} c_k \x_k = \sum_{k:\x_k \in \F_{j}^{(\ell)}} c_k' \x_k,
	\end{equation}
	which can be rearranged into
	\begin{equation}
	\sum_{k:\x_k \notin \A_{\ell}} c_k \x_k = \sum_{k:\x_k \in \F_{j}^{(\ell)}} c_k' \x_k - \sum_{k: \x_k \notin \F_{j}^{(\ell)}, \x_k \in \A_{\ell}} c_k \x_k.
	\end{equation}
	We divide both sides of the equation above by $\sum_{k:\x_k \notin \A_{\ell}} c_k$, which gives the following equation:
	\begin{multline}
	\sum_{k:\x_k \notin \A_{\ell}} \frac{c_k}{\sum_{i:\x_i \notin \A_{\ell}} c_i} \x_k = \sum_{k:\x_k \in \F_{j}^{(\ell)}} \frac{c_k'}{\sum_{i:\x_i \notin \A_{\ell}} c_i} \x_k \\- \sum_{k: \x_k \notin \F_{j}^{(\ell)}, \x_k \in \A_{\ell}} \frac{c_k}{\sum_{i:\x_i \notin \A_{\ell}} c_i} \x_k.
	\label{eq:affine-convex-intersection}
	\end{multline}
	One can check that the LHS of \eqref{eq:affine-convex-intersection} is a convex combination of points $\X^{(-\ell)}$.
 Since that $$\frac{\sum_{k:\x_k \in \F_{j}^{(\ell)}} c_k' - \sum_{k: \x_k \notin \F_{j}^{(\ell)}, \x_k \in \A_{\ell}} c_k}{\sum_{i:\x_i \notin \A_{\ell}}c_i}=1,$$ we have that the RHS of \eqref{eq:affine-convex-intersection} is an affine combination of points in $\A_{\ell}$.
	Therefore, we have shown that $\conv(\X^{(-\ell)}) \cap \A_{\ell} \ne \emptyset$. This completes the proof of contrapositive.
\end{proof}

\subsection{Proof for Theorem~\ref{thm:subspace-dense}}

\begin{proof}
When $\x_j \in \ri (\Q^{(\ell)} )$, by Lemma~\ref{lema:dense-convex-combination}, we can construct a strict convex combination of the columns in $X_{-j}^{(\ell)}$, with every coefficient being nonzero. 
\end{proof}

\subsection{Proof for Corollary \ref{corollary:collaborative-effect}}

\begin{proof}
Since $\c_j$ is a subspace-dense solution of $\x_j$ to \eqref{eq:affine-ssc}, we have that: $c_{ij} > 0$ for all $i=1,\cdots,j-1, j+1,\cdots, N_\ell$, where the index $i$ corresponds to other $N_\ell-1$ data points in $\A_\ell$. Then, the affinity $a_{ji} = a_{ij} = |c_{ij}| + |c_{ji}| \ge |c_{ij}| > 0$ for all $i=1,\cdots,j-1, j+1,\cdots, N_\ell$, which means that $\x_j$ connects to all the other $N_\ell-1$ data points in $\A_\ell$. 

\end{proof}

\subsection{Solving ASSC with ADMM}
\label{sec:grouping-effect-in-ADMM}


To accurately characterize the optimal solution in optimization practice, we translate all the data points at first, and then reformulate problem \eqref{eq:affine-ssc} as follows:
\begin{align}
\label{eq:affine-ssc-compact-translated}
\begin{split}
\min_{\c_j} \| \c_j \|_1, ~~ \st  ~~~ Y_{-j}\c_j =\0, ~~~ \1^\top \c_j =1.
\end{split}
\end{align}
where $Y_{-j} =X_{-j} - \x_j \1^\top$, and $X_{-j}$ is data matrix $X$ but excluding $\x_j$. To explicitly observe the role of the affine constraint in optimization, we introduce two Lagrangian multipliers for the two linear constraints, respectively. The Lagrangian dual problem is as follows
\begin{align}
\label{eq:affine-ssc-compact-translated-dual}
\begin{split}
\max_{\w_j,\nu_j} ~ - \nu_j  ~~\st  ~~ \| Y_{-j}^\top \w_j + \nu_j \1 \|_\infty \le 1,
\end{split}
\end{align}
where $\w_j \in \RR^D$ and $\nu_j \in \RR$ are the dual variables. Notice that problem \eqref{eq:affine-ssc} and \eqref{eq:affine-ssc-compact-translated} are equivalent, because the problem is translation-invariant due to the affine constraint.
%
\begin{lemma}[\bf dual certificate for nonnegative solution]
The optimal solution $\c_j^\ast$ of problem \eqref{eq:affine-ssc-compact-translated} is nonnegative if and only if the optimal dual variable $\nu_j^\ast$ is $-1$.
\label{lemma:dual-certificate-for-nonnegative-solution}
\end{lemma}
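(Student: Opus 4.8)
The plan is to invoke strong duality between \eqref{eq:affine-ssc-compact-translated} and \eqref{eq:affine-ssc-compact-translated-dual}, and then read off the nonnegativity of $\c_j^\ast$ from the tightness of the bound $\|\c_j\|_1 \ge \1^\top \c_j = 1$ established in Lemma~\ref{theorem:assc-bound}. First I would note that \eqref{eq:affine-ssc-compact-translated} is a feasible linear program (it is cast as an LP by splitting $\c_j$ into its positive and negative parts) whose objective is bounded below by $1$. Hence strong duality holds and both the primal and dual optima are attained, so that at optimality the two objective values coincide: $\|\c_j^\ast\|_1 = -\nu_j^\ast$. In particular the dual optimal value $-\nu_j^\ast$, and therefore $\nu_j^\ast$ itself, is uniquely determined as $\nu_j^\ast = -\|\c_j^\ast\|_1$, even if the dual is not attained at a unique $\w_j$. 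I also record that every feasible $\c_j$, and in particular $\c_j^\ast$, satisfies the affine constraint $\1^\top \c_j^\ast = 1$.

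With these two facts in hand, both implications become short. For the ``only if'' direction, suppose $\c_j^\ast \ge 0$; then $\|\c_j^\ast\|_1 = \sum_i c_{ij}^\ast = \1^\top \c_j^\ast = 1$, and strong duality gives $\nu_j^\ast = -\|\c_j^\ast\|_1 = -1$. For the ``if'' direction, suppose $\nu_j^\ast = -1$; then strong duality forces $\|\c_j^\ast\|_1 = 1$, while the affine constraint gives $\1^\top \c_j^\ast = 1$ as well, so $\|\c_j^\ast\|_1 = \1^\top \c_j^\ast$, i.e. $\sum_i (|c_{ij}^\ast| - c_{ij}^\ast) = 0$. Since each summand $|c_{ij}^\ast| - c_{ij}^\ast$ is nonnegative and vanishes exactly when $c_{ij}^\ast \ge 0$, every entry of $\c_j^\ast$ must be nonnegative, which is the claim.

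The only delicate point, and the step I would spend the most care on, is the justification of strong duality together with attainment of the primal minimizer, since the whole argument hinges on the \emph{exact} identity $\|\c_j^\ast\|_1 = -\nu_j^\ast$ rather than merely the weak-duality inequality $\|\c_j^\ast\|_1 \ge -\nu_j^\ast$. This is where I would appeal to LP strong duality: feasibility of \eqref{eq:affine-ssc-compact-translated} (guaranteed, e.g., by $\dim(\aff(X_{-j}^{(\ell)})) = d_\ell$) together with the lower bound from Lemma~\ref{theorem:assc-bound} ensures a finite optimal value attained at some $\c_j^\ast$, and the dual is then solvable with matching value, so that $\nu_j^\ast$ is well defined. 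Everything else — the coordinatewise characterization of equality in $\|\c_j\|_1 \ge \1^\top \c_j$ — is elementary.
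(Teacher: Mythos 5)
Your proposal is correct and rests on the same core mechanism as the paper's own proof: LP strong duality applied to \eqref{eq:affine-ssc-compact-translated} and \eqref{eq:affine-ssc-compact-translated-dual} gives the exact identity $-\nu_j^\ast = \|\c_j^\ast\|_1$, and the affine constraint $\1^\top \c_j^\ast = 1$ then links this to nonnegativity. The one substantive difference is that the paper's proof only carries out the ``only if'' direction (nonnegativity implies $\nu_j^\ast = -1$) and leaves the converse unaddressed, whereas you supply it: from $\nu_j^\ast = -1$ you get $\|\c_j^\ast\|_1 = 1 = \1^\top \c_j^\ast$, and the coordinatewise identity $\|\c_j^\ast\|_1 - \1^\top\c_j^\ast = \sum_i \left(|c_{ij}^\ast| - c_{ij}^\ast\right) = 0$ with each summand nonnegative forces every entry to be nonnegative. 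Your attention to attainment and the exactness of the duality identity (as opposed to the weak-duality inequality) is also more careful than the paper's one-line appeal to ``strong duality theory''; in short, your argument is a strictly more complete version of the paper's.
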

\begin{proof}
Note that the optimization problem in \eqref{eq:affine-ssc-compact-translated} is convex. By the strong duality theory, the dual gap between the primal problem \eqref{eq:affine-ssc-compact-translated} and its dual problem \eqref{eq:affine-ssc-compact-translated-dual}
is zero, \ie, $-\nu_j^\ast =\|\c_j^\ast\|_1$. When the optimal solution is nonnegative, we have that $\|\c_j^\ast\|_1 =\1^\top \c_j^\ast=1$ and thus $\nu_j^\ast=-1$.
\end{proof}

Note that problem \eqref{eq:affine-ssc-compact-translated} is convex but not strongly convex, and the $\ell_1$ norm regularizer becomes vacuous when the solution is nonnegative, thus the optimal solution is not unique. Because of the non-uniqueness of the nonnegative optimal solution, Theorem~{IV.6} states only the existence of the subspace-dense solution for data point in the relative interior. 
Without specifying a concrete algorithm, it is impossible to talk about 
the subspace-dense solution of ASSC.

Consider solving problem~\eqref{eq:affine-ssc-compact-translated} with ADMM. When the optimal solution $\c_j^\ast$ is nonnegative, 
we can show that the ``shrinkage'' effect of the shrinkage thresholding operator $\S_{\frac{1}{\mu}}$ is compensated by the term $-\frac{\nu^\ast}{\mu} \1$ where $\mu$ is a penalty parameter and $\nu^\ast =-1$.

Suppose the solution is subspace-preserving. Then, it is equivalent to consider the solution of the following problem
\begin{align}
\label{eq:fictitious-primal2}
\begin{split}
\min_{\c} \| \c \|_1 ~~\st  ~~~ Y_{-j}^{(\ell)} \c  = \0, ~~ \1^\top \c  =1,
\end{split}
\end{align}
where $Y_{-j}^{(\ell)}$ consists of the columns in subspace $\A_\ell$.
%
%
%
Note that problem~\eqref{eq:fictitious-primal2} is equivalent to: 
\begin{align}
\label{eq:fictitious-primal3}
\begin{split}
\min_{\c,\z} \| \z \|_1 ~~\st  ~~~ \z=\c, ~~ Y_{-j}^{(\ell)} \c  = \0, ~~ \1^\top \c  =1.
\end{split}
\end{align}

We solve the equivalent ASSC problem in~\eqref{eq:fictitious-primal3} via ADMM~\cite{Lin:09, Boyd:FTML10} and show explicitly how the affine constraint eliminates the sparsifying operation of the $\ell_1$ norm.

The augmented Lagrangian is as follows:
\begin{align}
\label{eq:fictitious-augmented-Lag}
\begin{split}
&\L(\c,\z,\y,\w,\nu) \\
&=\|\z\|_1 +\langle \y, \z-\c\rangle + \langle \w, Y_{-j}^{(\ell)} \c \rangle + \nu (\1^\top \c -1) \\
&+\frac{\mu}{2}(\|\z -\c\|^2 + \|Y_{-j}^{(\ell)}\c\|^2 + (\1^\top \c -1)^2).
\end{split}
\end{align}
where $\y$, $\w$, and $\nu$ are multipliers (i.e. the dual variables), and $\mu > 0$ is a penalty parameter.

We solve problem~\eqref{eq:fictitious-augmented-Lag} by alternatively updating each of the primal variables $\z$, $\c$, and the dual variables $\y$, $\w$, and $\nu$ by holding the other variables fixed.

\myparagraph{Updating $\z$}
When other variables are fixed, we update $\z$ as follows:
\begin{align}
\label{eq:updating-z}
\begin{split}
\z_{t+1}
        =\S_{\frac{1}{\mu}}(\c_t - \frac{\y_t}{\mu_t}),
\end{split}
\end{align}
where $\S_\tau(\cdot)$ is a shrinkage shresholding operator~\cite{Beck2009} with parameter $\tau$.

\myparagraph{Updating $\c$}
When other variables are fixed, we update $\c$ as follows:
\begin{align}
\label{eq:updating-c}
\begin{split}
\c_{t+1} 
        =(I\! +\! Y_{-j}^{(\ell)\top} Y_{-j}^{(\ell)}+ \!\1 \1^\top)^{-1}(\z_{t} \!+ \! \1 \!+\! \frac{\y_t \!-\!\nu_t \1 \!-\! Y_{-j}^{(\ell)\top} \w_t}{\mu_t})
\end{split}
\end{align}
where $I$ is an identity matrix.

\myparagraph{Updating the dual variables $\y$, $\w$, and $\nu$}
\begin{align}
\label{eq:updating-y-w-v}
\begin{split}
\y_{t+1}=& \y_t + \mu_t(\z_{t+1}-\c_{t+1}) \\
\w_{t+1}=& \w_t + \mu_t(Y_{-j}^{(\ell)}\c_{t+1}) \\
\nu_{t+1}=& \nu_t + \mu_t(\1^\top \c_{t+1} -1)
\end{split}
\end{align}

\myparagraph{Updating the penalty parameter $\mu$}
\begin{align}
\label{eq:updating-mu}
\begin{split}
\mu_{t+1}= \mu_t \rho,
\end{split}
\end{align}
where $\rho >1$.

Notice that problem~\eqref{eq:fictitious-primal3} is convex and feasible. The ADMM algorithm will converge to its optimal solution. Suppose that when converged we have that:
\begin{align}
\label{eq:ADMM-converge-z-c-y-w-v}
\begin{split}
\z_{t+1}&=~\c^\ast, ~~\c_{t+1}=\c^\ast, ~~\1^\top\c^\ast =~1, ~~Y_{-j}^{(\ell)}\c^\ast=\0, \\
\y_{t+1}&=\y_\ast, ~~\w_{t+1}=\w_\ast, ~~\nu_{t+1}=\nu_\ast, ~ \mu_t=\mu_\ast,
\end{split}
\end{align}

By using the facts that $\c_{t+1}=\c^\ast$, $\z_{t+1}= \c^\ast$, and $Y_{-j}^{(\ell)}\c^\ast = \0$, when the algorithm converged, the updating rule~\eqref{eq:updating-c} for $\c_{t+1}$ turns out to be:
\begin{align}
\label{eq:updating-c-when-converged}
\begin{split}
\c^\ast=(I +& Y_{-j}^{(\ell)\top} Y_{-j}^{(\ell)} + \1 \1^\top)^{-1}\\
            &(\c^\ast + \1 + \frac{\y_\ast -\nu_\ast \1 - Y_{-j}^{(\ell)\top} \w_\ast}{\mu_\ast}).
\end{split}
\end{align}
Then we can see that
\begin{align}
\label{eq:updating-y-when-converged}
\begin{split}
\y_\ast =\mu_\ast [(Y_{-j}^{(\ell)\top} Y_{-j}^{(\ell)} + \1 \1^\top)\c^\ast -\1]+\nu_\ast \1+ Y_{-j}^{(\ell)\top} \w_\ast.
\end{split}
\end{align}
Therefore we have that
\begin{align}
\label{eq:updating-z-when-converged-c-y}
\begin{split}
\c^\ast -\frac{1}{\mu_\ast}\y_\ast = \c^\ast - [(Y_{-j}^{(\ell)\top} Y_{-j}^{(\ell)} + \1 \1^\top)\c^\ast -\1] \\
- \frac{\nu_\ast}{\mu_\ast}\1 - \frac{Y_{-j}^{(\ell)\top} \w_\ast}{\mu_\ast}.
\end{split}
\end{align}
When the optimal solution $\c^\ast$ is nonnegative, we must have that $\|\c^\ast\|_1=\1^\top \c^\ast =1$. From the strong duality, we can conclude that $\nu_\ast =-1$. After some calculations, we can observe that the shrinkage thresholding operator could be compensated by the last term $- \frac{\nu_\ast}{\mu_\ast}\1$ in Eq.~\eqref{eq:updating-z-when-converged-c-y}.

By substituting Eq.~\eqref{eq:updating-z-when-converged-c-y} into Eq.~\eqref{eq:updating-z}, we have that
\begin{align}
\label{eq:updating-c-when-converged-without-sparsify}
\begin{split}
\z_{t+1} &=\S_{\frac{1}{\mu_\ast}}(\c^\ast -\frac{1}{\mu_\ast}\y_\ast) \\
         &= \c^\ast - (Y_{-j}^{(\ell)\top} Y_{-j}^{(\ell)} + \1 \1^\top)\c^\ast +\1-\frac{Y_{-j}^{(\ell)\top} \w_\ast}{\mu_\ast}\\
         &= \c^\ast - Y_{-j}^{(\ell)\top} Y_{-j}^{(\ell)} \c^\ast -\frac{Y_{-j}^{(\ell)\top} \w_\ast}{\mu_\ast}.
\end{split}
\end{align}
Surprisingly, the shrinkage thresholding operator disappears.



\section*{Acknowledgment}
C.-G. Li was partially supported by the National Natural Science Foundation of China under Grant No. 61876022, and the Open Project Fund from Key Laboratory of Machine Perception (MOE), Peking University. C. You and R. Vidal were supported by the National Science Foundation under Grant No. 1447822.



%

\bibliographystyle{IEEEbib}
\bibliography{biblio/collaborator/cgli,biblio/temp,biblio/vidal,biblio/vision,biblio/math,biblio/learning,biblio/consensus,biblio/sparse,biblio/geometry,biblio/dti,biblio/recognition,biblio/surgery,biblio/coding,biblio/matrixcompletion,biblio/segmentation}

\begin{thebibliography}{10}

\bibitem{Vidal:SPM11-SC}
R.~Vidal,
\newblock ``Subspace clustering,''
\newblock {\em {IEEE} Signal Processing Magazine}, vol. 28, no. 3, pp. 52--68,
  March 2011.

\bibitem{Costeira:IJCV98}
J.~Costeira and T.~Kanade,
\newblock ``A multibody factorization method for independently moving
  objects,''
\newblock {\em International Journal of Computer Vision}, vol. 29, no. 3, pp.
  159--179, 1998.

\bibitem{Rao:PAMI10}
S.~Rao, R.~Tron, R.~Vidal, and Y.~Ma,
\newblock ``Motion segmentation in the presence of outlying, incomplete, or
  corrupted trajectories,''
\newblock {\em {IEEE} Transactions on Pattern Analysis and Machine
  Intelligence}, vol. 32, no. 10, pp. 1832--1845, 2010.

\bibitem{Bako:Automatica11}
L.~Bako,
\newblock ``Identification of switched linear systems via sparse
  optimization,''
\newblock {\em Automatica}, vol. 47, no. 4, pp. 668--677, 2011.

\bibitem{Li:TSP16}
Chun-Guang Li and Ren{\'{e}} Vidal,
\newblock ``A structured sparse plus structured low-rank framework for subspace
  clustering and completion,''
\newblock {\em {IEEE} Transactions on Signal Processing}, vol. 64, no. 24, pp.
  6557--6570, 2016.

\bibitem{McWilliams:DMKD14}
Brian McWilliams and Giovanni Montana,
\newblock ``Subspace clustering of high dimensional data: a predictive
  approach,''
\newblock {\em Data Mining and Knowledge Discovery}, vol. 28, no. 3, pp.
  736--772, 2014.

\bibitem{Elhamifar:CVPR09}
E.~Elhamifar and R.~Vidal,
\newblock ``Sparse subspace clustering,''
\newblock in {\em {IEEE} Conference on Computer Vision and Pattern
  Recognition}, 2009, pp. 2790--2797.

\bibitem{Elhamifar:TPAMI13}
E.~Elhamifar and R.~Vidal,
\newblock ``Sparse subspace clustering: Algorithm, theory, and applications,''
\newblock {\em {IEEE} Transactions on Pattern Analysis and Machine
  Intelligence}, vol. 35, no. 11, pp. 2765--2781, 2013.

\bibitem{Dyer:JMLR13}
Eva~L. Dyer, Aswin~C. Sankaranarayanan, and Richard~G. Baraniuk,
\newblock ``Greedy feature selection for subspace clustering,''
\newblock {\em Journal of Machine Learning Research}, vol. 14, no. 1, pp.
  2487--2517, 2013.

\bibitem{You:CVPR16-SSCOMP}
C.~You, D.~Robinson, and R.~Vidal,
\newblock ``Scalable sparse subspace clustering by orthogonal matching
  pursuit,''
\newblock in {\em {IEEE} Conference on Computer Vision and Pattern
  Recognition}, 2016, pp. 3918--3927.

\bibitem{Liu:ICML10}
G.~Liu, Z.~Lin, and Y.~Yu,
\newblock ``Robust subspace segmentation by low-rank representation,''
\newblock in {\em International Conference on Machine Learning}, 2010, pp.
  663--670.

\bibitem{Liu:TPAMI13}
G.~Liu, Z.~Lin, S.~Yan, J.~Sun, and Y.~Ma,
\newblock ``Robust recovery of subspace structures by low-rank
  representation,''
\newblock {\em IEEE Transactions on Pattern Analysis and Machine Intelligence},
  vol. 35, no. 1, pp. 171--184, 2013.

\bibitem{Favaro:CVPR11}
P.~Favaro, R.~Vidal, and A.~Ravichandran,
\newblock ``A closed form solution to robust subspace estimation and
  clustering,''
\newblock in {\em IEEE Conference on Computer Vision and Pattern Recognition},
  2011, pp. 1801 --1807.

\bibitem{Vidal:PRL14}
R.~Vidal and P.~Favaro,
\newblock ``Low rank subspace clustering {(LRSC)},''
\newblock {\em Pattern Recognition Letters}, vol. 43, pp. 47--61, 2014.

\bibitem{Lu:ECCV12}
C-Y. Lu, H.~Min, Z-Q. Zhao, L.~Zhu, D-S. Huang, and S.~Yan,
\newblock ``Robust and efficient subspace segmentation via least squares
  regression,''
\newblock in {\em European Conference on Computer Vision}, 2012, pp. 347--360.

\bibitem{Wang:NIPS13-LRR+SSC}
Yu-Xiang Wang, Huan Xu, and Chenlei Leng,
\newblock ``Provable subspace clustering: When {LRR} meets {SSC},''
\newblock in {\em Neural Information Processing Systems}, 2013.

\bibitem{Patel:ICCV13}
V.~M. Patel, H.~V. Nguyen, and R.~Vidal,
\newblock ``Latent space sparse subspace clustering,''
\newblock in {\em {IEEE} International Conference on Computer Vision}, 2013,
  pp. 225--232.

\bibitem{Park:NIPS14}
Dohyung Park, Constantine Caramanis, and Sujay Sanghavi,
\newblock ``Greedy subspace clustering,''
\newblock in {\em Neural Information Processing Systems}, 2014.

\bibitem{Li:CVPR15}
Chun-Guang Li and Ren{\'{e}} Vidal,
\newblock ``Structured sparse subspace clustering: A unified optimization
  framework,''
\newblock in {\em Proceedings of {IEEE} International Conference on Computer
  Vision and Pattern Recognition}, 2015, pp. 277--286.

\bibitem{Guo:IJCAI15}
Xiaojie Guo,
\newblock ``Robust subspace segmentation by simultaneously learning data
  representations and their affinity matrix,''
\newblock in {\em Proceedings of the 24th International Joint Conference on
  Artificial Intelligence}, 2015, pp. 3547--3553.

\bibitem{Wang:AISTAT16}
Yining Wang, Yu-Xiang Wang, and Aarti Singh,
\newblock ``Graph connectivity in noisy sparse subspace clustering,''
\newblock in {\em Proceedings of the 19th International Conference on
  Artificial Intelligence and Statistics}, 2016, pp. 538--546.

\bibitem{You:CVPR16-EnSC}
Chong You, Chun-Guang Li, Daniel Robinson, and Ren{\'{e}} Vidal,
\newblock ``Oracle based active set algorithm for scalable elastic net subspace
  clustering,''
\newblock in {\em Proceedings of {IEEE} International Conference on Computer
  Vision and Pattern Recognition}, 2016, pp. 3928--3937.

\bibitem{Li:TIP17}
Chun-Guang Li, Chong You, and Ren{\'{e}} Vidal,
\newblock ``Structured sparse subspace clustering: A joint affinity learning
  and subspace clustering framework,''
\newblock {\em {IEEE} Transactions on Image Processing}, vol. 26, no. 6, pp.
  2988--3001, 2017.

\bibitem{Xin:TSP18}
Bo~Xin, Yizhou Wang, Wen Gao, and David Wipf,
\newblock ``Building invariances into sparse subspace clustering,''
\newblock {\em IEEE Transactions on Signal Processing}, vol. 66, no. 2, pp.
  449--462, 2018.

\bibitem{Li:ICPR18}
Chun-Guang Li, Junjian Zhang, and Jun Guo,
\newblock ``Constrained sparse subspace clustering with side information,''
\newblock in {\em Proceedings of the 24th International Conference on Pattern
  Recognition (ICPR)}, August 2018, pp. 2093--2099.

\bibitem{Elhamifar:ICASSP10}
E.~Elhamifar and R.~Vidal,
\newblock ``Clustering disjoint subspaces via sparse representation,''
\newblock in {\em {IEEE} International Conference on Acoustics, Speech, and
  Signal Processing}, 2010, pp. 1926--1929.

\bibitem{Soltanolkotabi:AS12}
M.~Soltanolkotabi and E.~J. Cand\`es,
\newblock ``A geometric analysis of subspace clustering with outliers,''
\newblock {\em Annals of Statistics}, vol. 40, no. 4, pp. 2195--2238, 2012.

\bibitem{Wang-Xu:ICML13}
Yu-Xiang Wang and Huan Xu,
\newblock ``Noisy sparse subspace clustering,''
\newblock in {\em International Conference on Machine Learning}, 2013, pp.
  89--97.

\bibitem{Soltanolkotabi:AS14}
M.~Soltanolkotabi, E.~Elhamifar, and E.~J. Cand\`es,
\newblock ``Robust subspace clustering,''
\newblock {\em Annals of Statistics}, vol. 42, no. 2, pp. 669--699, 2014.

\bibitem{Wang:JMLR16}
Yu-Xiang Wang and Huan Xu,
\newblock ``Noisy sparse subspace clustering,''
\newblock {\em Journal of Machine Learning Research}, vol. 17, no. 12, pp.
  1--41, 2016.

\bibitem{You:ICML15}
C.~You and R.~Vidal,
\newblock ``Geometric conditions for subspace-sparse recovery,''
\newblock in {\em International Conference on Machine learning}, 2015, pp.
  1585--1593.

\bibitem{Wang:ICML15}
Yining Wang, Yu{-}Xiang Wang, and Aarti Singh,
\newblock ``A deterministic analysis of noisy sparse subspace clustering for
  dimensionality-reduced data,''
\newblock in {\em International Conference on Machine Learning}, 2015, pp.
  1422--1431.

\bibitem{Tsakiris:ICML18}
M.~C. Tsakiris and R.~Vidal,
\newblock ``Theoretical analysis of sparse subspace clustering with missing
  entries,''
\newblock in {\em International Conference on Machine Learning}, 2018, pp.
  4975--4984.

\bibitem{Vidal:Springer16}
R.~Vidal, Y.~Ma, and S.~Sastry,
\newblock {\em Generalized Principal Component Analysis},
\newblock Springer Verlag, 2016.

\bibitem{vonLuxburg:StatComp2007}
Ulrike von Luxburg,
\newblock ``A tutorial on spectral clustering,''
\newblock {\em Statistics and Computing}, vol. 17, no. 4, pp. 395--416, 2007.

\bibitem{You:CVPR17}
C.~You, D.~Robinson, and R.~Vidal,
\newblock ``Provable self-representation based outlier detection in a union of
  subspaces,''
\newblock in {\em {IEEE} Conference on Computer Vision and Pattern
  Recognition}, 2017, pp. 4323--4332.

\bibitem{Tomasi:IJCV92}
C.~Tomasi and T.~Kanade,
\newblock ``Shape and motion from image streams under orthography,''
\newblock {\em International Journal of Computer Vision}, vol. 9, no. 2, pp.
  137--154, 1992.

\bibitem{Tsakiris:AffinePAMI17}
M.~C. Tsakiris and R.~Vidal,
\newblock ``Algebraic clustering of affine subspaces,''
\newblock {\em {IEEE} Transactions on Pattern Analysis and Machine
  Intelligence}, vol. 40, no. 2, pp. 482--489, 2017.

\bibitem{Vidal:PAMI05}
R.~Vidal, Y.~Ma, and S.~Sastry,
\newblock ``{Generalized Principal Component Analysis (GPCA)},''
\newblock {\em {IEEE} Transactions on Pattern Analysis and Machine
  Intelligence}, vol. 27, no. 12, pp. 1--15, 2005.

\bibitem{Kyrillidis:ICML13}
A.~Kyrillidis, S.~Becker, V.~Cevher, and C.~Koch,
\newblock ``Sparse projections onto the simplex,''
\newblock in {\em International Conference on Machine Learning}, 2013, pp.
  235--243.

\bibitem{Li:16-Simplex}
Ping Li, Syama Rangapuram, and Martin Slawski,
\newblock ``Methods for sparse and low-rank recovery under simplex
  constraints,''
\newblock {\em arXiv:1605.00507}, 2016.

\bibitem{Rockafellar:70}
R.~Rockafellar,
\newblock {\em Convex analysis},
\newblock Princeton, N.J., Princeton University Press, 1997, C1970.

\bibitem{DonohoTanner:AMS09}
David~L. Donoho and Jared Tanner,
\newblock ``Counting faces of randomly projected polytopes when the projection
  radically lowers dimension,''
\newblock {\em J. Amer. Math. Soc.}, vol. 22, no. 1, pp. 1--53, 2009.

\bibitem{Nasihatkon:CVPR11}
B.~Nasihatkon and R.~Hartley,
\newblock ``Graph connectivity in sparse subspace clustering,''
\newblock in {\em {IEEE} Conference on Computer Vision and Pattern
  Recognition}, 2011, pp. 2137--2144.

\bibitem{Boyd:FTML10}
S.~Boyd, N.~Parikh, E.~Chu, B.~Peleato, and J.~Eckstein,
\newblock ``Distributed optimization and statistical learning via the
  alternating direction method of multipliers,''
\newblock {\em Foundations and Trends in Machine Learning}, vol. 3, no. 1, pp.
  1--122, 2010.

\bibitem{Tron:CVPR07}
R.~Tron and R.~Vidal,
\newblock ``A benchmark for the comparison of 3-{D} motion segmentation
  algorithms,''
\newblock in {\em {IEEE} Conference on Computer Vision and Pattern
  Recognition}, 2007, pp. 1--8.

\bibitem{Dua:17}
D.~Dua and E.~Karra Taniskidou,
\newblock ``{UCI} machine learning repository,'' University of California,
  Irvine, School of Information and Computer Sciences, 2017.

\bibitem{Hull:PAMI94}
J.~J. Hull,
\newblock ``A database for handwritten text recognition research,''
\newblock {\em IEEE Transactions on Pattern Analysis and Machine Intelligence},
  vol. 16, no. 5, pp. 550--554, 1994.

\bibitem{Xu:PAA17}
Ya-Jun Xu and Xiao-Jun Wu,
\newblock ``An affine subspace clustering algorithm based on ridge
  regression,''
\newblock {\em Pattern Analysis and Applications}, vol. 20, no. 2, pp.
  557--566, 2017.

\bibitem{Li:ACCV09}
Chun-Guang Li, Jun Guo, and Honggang Zhang,
\newblock ``Learning bundle manifold by double neighborhood graphs,''
\newblock in {\em Asian Conference on Computer Vision, Part III, LNCS}, 2009,
  vol. 5996, pp. 321--330.

\bibitem{Elhamifar:NIPS11}
E.~Elhamifar and R.~Vidal,
\newblock ``Sparse manifold clustering and embedding,''
\newblock in {\em Neural Information Processing and Systems}, 2011.

\bibitem{Gallier:GeoMethodsAppCompSciEng01}
J.~Gallier,
\newblock {\em Geometric Methods and Applications for Computer Science and
  Engineering},
\newblock Springer-Verlag New York, 2001.

\bibitem{Rockafellar:book98}
R~Tyrrell Rockafellar and Roger J-B Wets,
\newblock {\em Variational Analysis}, vol. 317,
\newblock Springer, 1998.

\bibitem{Lin:09}
Z.~Lin, M.~Chen, L.~Wu, and Yi~Ma,
\newblock ``The augmented {Lagrange} multiplier method for exact recovery of
  corrupted low-rank matrices,''
\newblock {\em arXiv:1009.5055v2}, 2011.

\bibitem{Beck2009}
A.~Beck and M.~Teboulle,
\newblock ``A fast iterative shrinkage-thresholding algorithm for linear
  inverse problems,''
\newblock {\em {SIAM} Journal on Imaging Sciences}, vol. 2, no. 1, pp.
  183--202, 2009.

\end{thebibliography}
%






\end{document}